\def\beq{\begin{equation}}
\def\eeq{\end{equation}}
\def\beqa{\begin{eqnarray}}
\def\eeqa{\end{eqnarray}}
\def\beqan{\begin{eqnarray*}}
\def\eeqan{\end{eqnarray*}}
\def\R{{\mathbb{R}}}
\def\argmin{\mathop{\mathrm{arg\,min}}}
\def\argmax{\mathop{\mathrm{arg\,max}}}
\def\essinf{\mathop{\mathrm{ess\,inf}}}
\def\diag{\mathop{\mathrm{diag}}}
\def\x{\times}
\def\GV{Guo and Verd{\'u}}
\def\MID{\, ; \,}
\newcommand{\Frac}[2]{{{#1}/{#2}}}  
\newtheorem{lemma}{Lemma}
\newtheorem{assumption}{Assumption}
\def\xhat{\hat{x}}
\def\SNR{\mbox{\small \sffamily SNR}}
\def\captionSNR{\mbox{\scriptsize \sffamily SNR}}
\def\arr{\rightarrow}
\def\Exp{\mathbf{E}}
\def\ptrue{\ensuremath{p_{0}}}
\def\sigtrue{\ensuremath{\sigma_{0}}}
\def\ppost{\ensuremath{p_{\rm post}}}
\def\sigpost{\ensuremath{\sigma_{\rm post}}}
\def\xbfpmmse{\ensuremath{\hat{\mathbf{x}}^{\rm pmmse}}}
\def\xpmmse{\ensuremath{\hat{x}^{\rm pmmse}}}
\def\xbfmap{\ensuremath{\hat{\mathbf{x}}^{\rm pmap}}}
\def\xmap{\ensuremath{\hat{x}^{\rm pmap}}}
\def\xbflasso{\ensuremath{\hat{\mathbf{x}}^{\rm lasso}}}
\def\xbfzero{\ensuremath{\hat{\mathbf{x}}^{\rm zero}}}
\def\sigEff{\ensuremath{\sigma_{\rm eff}}}
\def\sigEffMap{\ensuremath{\sigma_{\rm eff,map}}}
\def\sigPEff{\ensuremath{\sigma_{\rm p-eff}}}
\def\mse{\mbox{\small \sffamily mse}}
\def\xscaMmse{\ensuremath{\hat{x}^{\rm pmmse}_{\rm scalar}}}
\def\xscaMap{\ensuremath{\hat{x}^{\rm pmap}_{\rm scalar}}}
\def\xscau{\ensuremath{\hat{x}^{u}_{\rm scalar}}}
\def\Tsoft{\ensuremath{T^{\rm soft}}}
\def\Thard{\ensuremath{T^{\rm hard}}}
\def\thetaScau{\ensuremath{\theta^{u}_{\rm scalar}}}
\def\thetau{\ensuremath{\theta^{u}}}
\def\thetaMap{\ensuremath{\theta^{\rm map}}}
\def\thetaScaMap{\ensuremath{\theta^{\rm map}_{\rm scalar}}}
\def\Xset{{\cal X}}
\newcommand{\abf}{\mathbf{a}}
\newcommand{\ubf}{\mathbf{u}}
\newcommand{\ubfhat}{\widehat{\mathbf{u}}}
\newcommand{\wbf}{\mathbf{w}}
\newcommand{\xbf}{\mathbf{x}}
\newcommand{\xbfhat}{\widehat{\mathbf{x}}}
\newcommand{\ybf}{\mathbf{y}}
\newcommand{\thetahat}{\widehat{\theta}}
\newcommand{\Abf}{\mathbf{A}}
\newcommand{\Qbf}{\mathbf{Q}}
\newcommand{\Sbf}{\mathbf{S}}
\newcommand{\Xbf}{\mathbf{X}}
\title{Asymptotic Analysis of MAP Estimation \\ via the Replica Method
       and \\ Applications to Compressed Sensing}
\author{Sundeep Rangan,
        Alyson K. Fletcher,
        and~Vivek~K~Goyal
\thanks{This material is based upon work supported in part by
        a University of California President's Postdoctoral Fellowship and
        the National Science Foundation under CAREER Grant No.\ 0643836\@.}%
\thanks{S. Rangan (email: srangan@poly.edu) is with the Department of
        Electrical and Computer Engineering, Polytechnic Institute of
        New York University.}%
\thanks{A.~K. Fletcher (email: alyson@eecs.berkeley.edu) is with
        the Department of Electrical Engineering and Computer Sciences,
        University of California, Berkeley.}%
\thanks{V.~K. Goyal (email: vgoyal@mit.edu) is with
        the Department of Electrical Engineering and Computer Science and
        the Research Laboratory of Electronics,
        Massachusetts Institute of Technology.}}
\begin{document}

\maketitle

\begin{abstract}
The replica method is a non-rigorous but well-known technique
from statistical physics used in the
asymptotic analysis of large, random, nonlinear problems.
This paper applies the replica method, under the assumption of replica symmetry,
to study estimators that are maximum a posteriori (MAP) under a postulated
prior distribution.
It is shown that with
random linear measurements and Gaussian noise, the replica-symmetric prediction of
the asymptotic behavior of the postulated MAP estimate of an $n$-dimensional
vector ``decouples'' as $n$ scalar postulated MAP estimators.
The result is based on applying a hardening argument to
the replica analysis of postulated posterior mean estimators
of Tanaka and of {\GV}.

The replica-symmetric postulated MAP analysis can be readily applied to many
estimators used in compressed sensing, including
basis pursuit, lasso, linear estimation with thresholding,
and zero norm-regularized estimation.
In the case of lasso estimation the scalar
estimator reduces to a soft-thresholding operator, and for
zero norm-regularized estimation it reduces to a hard-threshold.
Among other benefits, the replica method provides a
computationally-tractable method for precisely predicting various performance
metrics including mean-squared error and sparsity pattern recovery
probability.
\end{abstract}

\begin{IEEEkeywords}
Compressed sensing,
Laplace's method,
large deviations.
least absolute shrinkage and selection operator (lasso),
nonlinear estimation,
non-Gaussian estimation,
random matrices,
sparsity,
spin glasses,
statistical mechanics,
thresholding
\end{IEEEkeywords}

\section{Introduction}
\label{sec:intro}

Estimating a vector $\xbf \in \R^n$ from measurements of the form
\beq \label{eq:yPhix}
    \ybf = \Phi\xbf + \wbf,
\eeq
where $\Phi \in \R^{m \x n}$
represents a known \emph{measurement matrix} and $\wbf \in \R^m$
represents measurement errors or noise,
is a generic problem that arises in a range of circumstances.
When the noise $\wbf$ is i.i.d.\ zero-mean Gaussian with variance $\sigma^2$
and $\xbf$ is i.i.d.\ with components $x_j$ having a probability distribution function $p(x_j)$,
the \emph{maximum a posteriori} (MAP) estimate is given by
\beq \label{eq:xhatMap-intro}
    \xbfmap(\ybf) = \argmin_{\xbf \in \R^n} \left[ \frac{1}{2\sigma^2}\|\ybf-\Phi\xbf\|^2 +
        \sum_{j=1}^n f(x_j) \right],
\eeq
where $f(x_j) = -\log p(x_j)$.
Estimators of the form \eqref{eq:xhatMap-intro} are also used with the regularization function
$f(x_j)$ or noise level parameter $\sigma^2$ not matching the true prior or noise level,
either since those quantities are not known or since the optimization in \eqref{eq:xhatMap-intro}
using the true values is too difficult to compute.
In such cases, the estimator \eqref{eq:xhatMap-intro} can be interpreted as a MAP estimate
for a \emph{postulated} distribution and noise level, and we will thus call
estimators of the form \eqref{eq:xhatMap-intro} \emph{postulated MAP estimators}.

Due to their prevalence, characterizing the behavior
of postulated MAP estimators is of interest in a wide range of applications.
However, for most regularization functions $f(\cdot)$, the postulated MAP estimator
\eqref{eq:xhatMap-intro} is
nonlinear and not easy to analyze.
Even if, for the purpose of analysis, one assumes separable priors on
 $\xbf$ and $\wbf$, the analysis of the postulated MAP estimate
may be difficult since the matrix $\Phi$ couples the $n$ unknown
components of $\xbf$ with the $m$ measurements in the vector $\ybf$.

This paper provides a general analysis of postulated MAP estimators
based on the \emph{replica method}---a non-rigorous but widely-used method from statistical
physics for analyzing large random systems.
It is shown that, under a key assumption of replica symmetry described below,
the replica method predicts that
with certain large random $\Phi$ and Gaussian $\wbf$,
there is an \emph{asymptotic decoupling}
of the vector postulated MAP estimate \eqref{eq:xhatMap-intro}
into $n$ scalar MAP estimators.
Specifically, the replica method predicts that the
joint distribution of each component $x_j$ of $\xbf$ and its
corresponding component $\xhat_j$ in the estimate vector $\xbfmap(\ybf)$
is asymptotically identical to the outputs of a simple system where $\xhat_j$
is a postulated  MAP estimate of the scalar random variable $x_j$ observed in
Gaussian noise.
Using this scalar equivalent model, one can then readily compute
the asymptotic joint distribution of $(x_j,\xhat_j)$ for any component $j$.

The replica method's non-rigorous but simple prescription for
computing the asymptotic joint componentwise distributions
has three key, attractive features:
\begin{itemize}

\item \emph{Sharp predictions:}  Most importantly,
the replica method provides---under the assumption of the
replica hypotheses---not just bounds, but sharp predictions of the
asymptotic behavior of postulated MAP estimators.
From the joint distribution, various further computations
can be made, to provide precise predictions of quantities such as
the mean-squared error (MSE)
and the error probability of any componentwise hypothesis test computed
from a postulated MAP estimate.

\item \emph{Computational tractability:}  Since the scalar equivalent model
involves only a scalar random variable $x_j$, scalar Gaussian noise,
and scalar postulated MAP estimate $\xhat_j$, any quantity derived from the joint
distribution can be computed numerically from one- or two-dimensional
integrals.

\item \emph{Generality:}
The replica analysis can incorporate arbitrary separable
distributions on $\xbf$ and regularization functions $f(\cdot)$.
It thus applies to a large class of estimators and test scenarios.
\end{itemize}

\subsection{Replica Method and Contributions of this Work}
\label{sec:intro-replica}

The replica method was originally developed by
Edwards and Anderson~\cite{EdwardsA:75}
to study the statistical mechanics of spin glasses.
Although not fully rigorous from the perspective of probability
theory, the technique was able to provide explicit
solutions for a range of complex problems where many other methods
had previously failed.
Indeed, the replica method and related ideas from statistical mechanics
have found success in a number of classic
NP-hard problems including
the traveling salesman problem~\cite{MezardP:86},
graph partitioning~\cite{FuA:86}, $k$-SAT~\cite{MonassonZ:97}
and others~\cite{Nishimori:01}.
Statistical physics methods have also been applied to the study
of error correcting codes~\cite{Sourlas:89,Montanari:00}.
There are now several general texts on the replica method
\cite{Dotsenko:95,MezardParVir:01,MezardM:09,Talagrand:03}.

The replica method was first applied to the
study of nonlinear MAP estimation problems
by Tanaka~\cite{Tanaka:02}.  That work applied what is called a replica symmetric analysis to
multiuser detection for large CDMA systems
with random spreading sequences.
M{\"u}ller~\cite{Muller:03} considered a mathematically-similar problem for
MIMO communication systems.
In the context of the estimation problem considered here,
Tanaka's and M{\"u}ller's papers essentially characterized the behavior of
the MAP estimator of a vector $\xbf$ with i.i.d.\ binary components
observed through linear measurements of the form (\ref{eq:yPhix}) with a
large random $\Phi$ and Gaussian $\wbf$.

Tanaka's results were then
generalized in a remarkable paper by {\GV}~\cite{GuoV:05}
to vectors $\xbf$ with arbitrary separable distributions.
Guo and  Verd{\'u}'s result was also able to incorporate
a large class of postulated minimum mean squared error (MMSE) estimators,
where the estimator may assume a prior that is different from the actual prior.
Replica analyses have also been applied to related communication problems
such as lattice precoding for the Gaussian broadcast channel~\cite{ZaiMuMM:10arXiv}.
A brief review of the replica method analysis by Tanaka~\cite{Tanaka:02} and
{\GV}~\cite{GuoV:05} is provided in Appendix~\ref{sec:replica}.

The result in this paper is derived from {\GV}~\cite{GuoV:05}
by a standard hardening argument.
Specifically, the postulated MAP estimator \eqref{eq:xhatMap-intro}
is first expressed as a limit of the postulated MMSE estimators analyzed in~\cite{GuoV:05}.
Then, the behavior of the postulated MAP estimator can
be derived by taking appropriate limits of the results in~\cite{GuoV:05} on
postulated MMSE estimators.
This hardening technique is well-known and is used in
Tanaka's original work~\cite{Tanaka:02}
in the analysis of MAP estimators with binary and Gaussian priors.

Through the limiting analysis via hardening,
the postulated MAP results here follow from the postulated MMSE results in~\cite{GuoV:05}.
Thus, the central contribution of this work is to work out these limits
to provide a set of equations for a general class of postulated MAP estimators.
In particular, while Tanaka has derived the equations for replica predictions of
MAP estimates for binary and Gaussian priors, the results here provide explicit equations
for general priors and regularization functions.

\subsection{Replica Assumptions}

The non-rigorous aspect of the replica method involves
a set of assumptions that include a self-averaging property,
the validity of a ``replica trick,'' and the ability to exchange
certain limits.  Importantly, this work is based on an additional strong assumption of
\emph{replica symmetry}.  As described in Appendix~\ref{sec:replica},
the replica method reduces the calculation
of a certain free energy to an optimization problem over covariance matrices.  The
replica symmetric (RS)
assumption is that the maxima in this optimization satisfy certain symmetry properties.
This RS assumption is not always valid, and indeed Appendix~\ref{sec:replica}
provides several examples from other applications of the replica method
where replica symmetry breaking (RSB) solutions are known to be needed
to provide correct predictions.

For the analysis of postulated MMSE estimators,
\cite{Tanaka:02} and~\cite{GuoV:05} derive analytic conditions for the
validity of the RS assumption only in some limited cases.
Our analysis of postulated MAP estimators
depends on~\cite{GuoV:05}, and, unfortunately, we have not provided a
general analytic test for the validity of the
RS assumption in this work.
Following~\cite{GuoV:05}, our approach instead is to compare, where possible,
the predictions under the RS assumption to numerical simulations of the
postulated MAP estimator.
As we will see in Section~\ref{sec:sim},
the RS predictions appear to be accurate, at least for many common estimators arising in compressed
sensing.
That being said, the RS analysis can also provide predictions for optimal MMSE
and zero norm-regularized estimators that cannot be simulated tractably.
Extra caution must be applied in assuming
the validity of the RS predictions for these estimators.

To emphasize our dependence on these unproven assumptions---notably replica symmetry---we will
refer to the general MMSE analysis in {\GV}'s work~\cite{GuoV:05} as the
\textbf{replica symmetric postulated MMSE decoupling property}.
Our main result
will be called the \textbf{replica symmetric postulated MAP decoupling property}.

\subsection{Connections to Belief Propagation} \label{sec:bpConn}
Although not explored in this work, it is important to point out that
the results of the replica analysis of postulated MMSE and MAP estimation
are similar to those derived for belief propagation (BP) estimation.
Specifically, there is now a large body of work analyzing BP and approximate
BP algorithms for estimation of vectors $\xbf$ observed through linear
measurements of the form \eqref{eq:yPhix} with large random $\Phi$.
For both certain large sparse random matrices~\cite{BoutrosC:02,TanakaO:05,MontanariT:06,GuoW:06,GuoW:07,Rangan:10-CISS,Rangan:10arXiv},
and more recently
for certain large dense random matrices~\cite{DonohoMM:09,BayatiM:11,Rangan:10carXiv,Rangan:11-ISIT},
several results now show that BP estimates exhibit an asymptotic decoupling property
similar to RS predictions for postulated MMSE and MAP estimators.
Graphical model arguments have also been used to establish a decoupling
property under a very general, random sparse observation model~\cite{Montanari:08arXiv}.

The effective noise level in the scalar equivalent model for BP and approximate BP methods
can be predicted by certain state evolution equations
similar to density evolution analysis of BP decoding of
LDPC codes~\cite{TenBrink:01,AshkiminKT:04}.  It turns out that in several cases, the fixed point
equations for state evolution are identical to the equations for the effective noise level predicted by
the RS analysis of postulated MMSE and MAP estimators.
In particular, the equations in~\cite{DonohoMM:09,BayatiM:11} agree exactly
with the RS predictions for LASSO estimation given in this work.

These connections are significant in several regards:
Firstly, the state evolution analysis of BP algorithms can be made fully rigorous
under suitable assumptions and thus
provides an independent, rigorous justification for some of the RS claims.

Secondly, the replica method provides only an \emph{analysis} of estimators,
but no method to actually compute those estimators.
In contrast, the BP and approximate BP algorithms provide a possible
tractable method for achieving the performance predicted by the replica method.

Finally, the BP analysis provides an algorithmic intuition as to why decoupling may occur
(and hence when replica symmetry may be valid):
As described in~\cite{Montanari:11}, BP and approximate BP algorithms can be seen
as iterative procedures where the vector estimation problem is reduced to a sequence of
``decoupled" scalar estimation problems.
This decoupling is based essentially on
the
principle that, in each iteration, when estimating one component $x_j$, the uncertainty in the other
components $\{x_k,\, k \neq j\}$ can be aggregated as Gaussian noise.
Based on the state evolution analysis of BP algorithms, we know that
this Central Limit Theorem-based approximation is asymptotically valid when the components of the mixing matrix $\Phi$
are sufficiently dense and independent.
Thus, the validity of RS is possibly connected to validity of this Gaussian approximation.

\subsection{Applications to Compressed Sensing}
As an application of our main result, we will develop
a few analyses of estimation problems that arise in
compressed sensing~\cite{CandesRT:06-IT,Donoho:06,CandesT:06}.
In \emph{compressed sensing}, one estimates a sparse vector $\xbf$
from random linear measurements.
A vector $\xbf$ is \emph{sparse} when its number of nonzero entries $k$
is smaller than its length $n$.
Generically, optimal estimation of $\xbf$ with a sparse prior is
NP-hard~\cite{Natarajan:95}.
Thus, most attention has focused on
greedy heuristics such as
matching pursuit~\cite{ChenBL:89,MallatZ:93,PatiRK:93,DavisMZ:94}
and convex relaxations such as basis pursuit~\cite{ChenDS:99} or
lasso~\cite{Tibshirani:96}.
While successful in practice,
these algorithms are difficult to analyze precisely.

Compressed sensing of sparse $\xbf$ through (\ref{eq:yPhix})
(using inner products with rows of $\Phi$) is mathematically identical
to \emph{sparse approximation} of $\ybf$ with respect to columns of $\Phi$.
An important set of results for both sparse approximation and
compressed sensing are the deterministic conditions on the
\emph{coherence} of $\Phi$ that are sufficient to guarantee good performance
of the suboptimal methods mentioned above~\cite{Tropp:04,Tropp:06,DonohoET:06}.
These conditions can be satisfied with high probability
for certain large random measurement matrices.
Compressed sensing has provided many sufficient conditions that are
easier to satisfy than the initial coherence-based conditions.
However, despite this progress, the exact performance
of most sparse estimators is still not known precisely,
even in the asymptotic case of large random measurement matrices.
Most results describe the estimation performance via bounds,
and the tightness of these bounds is generally not known.

There are, of course, notable exceptions including~\cite{Wainwright:09-lasso}
and~\cite{DonohoT:09},
which provide matching necessary and sufficient conditions
for recovery of strictly sparse vectors with basis pursuit and lasso.
However, even these results only consider exact recovery
and are limited to measurements that are noise-free or
measurements with a signal-to-noise ratio (SNR) that scales to infinity.

Many common sparse estimators can be seen as MAP estimators
with certain postulated priors.  Most importantly,
lasso and basis pursuit are MAP estimators assuming a Laplacian prior.
Other commonly-used sparse estimation algorithms, including
linear estimation with and without thresholding
and zero norm-regularized estimators, can also be seen as postulated
MAP-based estimators.
For these postulated MAP-based sparse estimation algorithms, the replica method can
provide non-rigorous but sharp, easily-computable predictions for the
asymptotic behavior.  In the context of compressed sensing, this analysis
can predict various performance metrics such as MSE or fraction of
support recovery.
The expressions can apply to arbitrary ratios $k/n$, $n/m$, and $\SNR$.
Due to the generality of the replica analysis,  the methodology can also
incorporate arbitrary
distributions on $\xbf$ including several sparsity models,
such as Laplacian,
generalized Gaussian, and Gaussian mixture priors.
Discrete distributions can also be studied.

It should be pointed out that this work is not the
first to use ideas from statistical physics for the study
of sparse estimation.
Guo, Baron and Shamai~\cite{GuoBS:09-Allerton} have provided a
replica analysis of compressed sensing that characterizes not just the
postulated MAP or postulated MMSE
estimate, but the asymptotic posterior marginal distribution.  That work
also shows an independence property across finite sets of components.
Merhav, Guo and Shamai~\cite{MerhavGS:10}
consider, among other applications, the estimation of a sparse
vector $\xbf$ from measurements of the form $\ybf = \xbf + \wbf$.
In their model, there is no measurement matrix such as $\Phi$
in (\ref{eq:yPhix}), but the components of $\xbf$ are possibly
correlated.  Their work derives explicit expressions
for the MMSE as a function of the probability distribution
on the number of nonzero components.  The analysis does
not rely on replica assumptions and is fully rigorous.
More recently, Kabashima, Wadayama and Tanaka~\cite{KabashimaWT:09arXiv}
have used the replica method
to derive precise
conditions on which sparse signals can be recovered
with $\ell_p$-based relaxations such as lasso.
Their analysis does not consider noise, but can find conditions
on recovery on the entire vector $\xbf$, not just individual
components.
Also, using free probability theory~\cite{Voiculescu:91,Muller:04},
a recent analysis~\cite{CaiShTuVer:11}
extends the replica analysis of compressed sensing to larger classes of
matrices, including matrices $\Phi$ that are possibly not i.i.d.

\subsection{Outline}
The remainder of the paper is organized as follows.
The precise estimation problem is described in Section~\ref{sec:model}.
We review the RS postulated MMSE decoupling property of {\GV}
in Section~\ref{sec:replicaMMSE}.  We then present our
main result, an RS postulated MAP decoupling property,
in Section~\ref{sec:replicaMAP}.
The results are applied to the analysis of compressed sensing algorithms in
Section~\ref{sec:compSensAnal}, which is followed by numerical
simulations in Section~\ref{sec:sim}.
Conclusions are possible avenues for future work are given in Section~\ref{sec:conclusions}.
The proof of the main result is somewhat long and given in a set of
appendices; Appendix~\ref{sec:proof-overview} provides an overview of
the proof and a guide through the appendices with detailed arguments.

\section{Estimation Problem and Assumptions}
\label{sec:model}

Consider the estimation of a random vector $\xbf \in \R^n$
from linear measurements of the form
\beq \label{eq:yax}
    \ybf = \Phi\xbf + \wbf = \Abf\Sbf^{1/2}\xbf + \wbf,
\eeq
where $\ybf \in \R^m$ is a vector of observations;
$\Phi = \Abf\Sbf^{1/2}$,
with $\Abf \in \R^{m\x n}$, is a measurement matrix;
$\Sbf$ is a diagonal matrix of positive scale factors,
\beq \label{eq:Sdef}
    \Sbf = \diag\left(s_1,\ldots,s_n\right), \ \ \ s_j > 0;
\eeq
and $\wbf \in \R^m$ is zero-mean, white Gaussian noise.
We consider a sequence of such problems indexed by $n$,
with $n \rightarrow \infty$.
For each $n$, the problem is to determine an estimate $\xbfhat$ of $\xbf$
from the observations $\ybf$ knowing the measurement matrix $\Abf$
and scale factor matrix $\Sbf$.

The components $x_j$ of $\xbf$ are modeled as zero mean and i.i.d.\
with some prior probability distribution $\ptrue(x_j)$.
The per-component variance of the Gaussian noise is $\Exp|w_j|^2 = \sigtrue^2$.
We use the subscript ``0'' on the prior and noise level
to differentiate these quantities from certain
``postulated'' values to be defined later.
When we develop applications in Section~\ref{sec:compSensAnal},
the prior $\ptrue(x_j)$
will incorporate presumed sparsity of $\xbf$.

In (\ref{eq:yax}), we have factored $\Phi = \Abf\Sbf^{1/2}$
so that even with the i.i.d.\ assumption on $\{x_j\}_{j=1}^n$ above and an
i.i.d.\ assumption on entries of $\Abf$, the model can capture
variations in powers of the components of $\xbf$
that are known \emph{a priori} at the estimator.
Specifically, multiplication by $\Sbf^{1/2}$
scales the variance of the $j$th component of $\xbf$ by
a factor $s_j$.
Variations in the power of $\xbf$ that are not known
to the estimator should be captured in the distribution of $\xbf$.

We summarize the situation and make additional assumptions
to specify the problem precisely as follows:
\begin{itemize}
\item[(a)] The number of measurements
   $m = m(n)$ is a deterministic quantity that varies with $n$ and satisfies
   \[
       \lim_{n \arr \infty} n/ m(n) = \beta
   \]
   for some $\beta \geq 0$.
   (The dependence of $m$ on $n$ is usually omitted for brevity.)
\item[(b)] The components $x_j$ of $\xbf$ are i.i.d.\ with probability
   distribution $\ptrue(x_j)$.  All moments of $x_j$ are finite.
\item[(c)] The noise $\wbf$ is Gaussian with
   $\wbf \sim {\cal N}(0,\sigtrue^2 \mathbf{I}_m)$.
\item[(d)] The components of the matrix $\Abf$ are i.i.d.\ and distributed as
$A_{ij} \sim (1/\sqrt{m})A$ for some random variable $A$ with zero
mean, unit variance and all other moments of $A$ finite.
\item[(e)] The scale factors $s_j$ are i.i.d., satisfy
   $s_j > 0$ almost surely, and all moments of $s_j$ are finite.
\item[(f)] The scale factor matrix $\Sbf$, measurement matrix $\Abf$,
vector $\xbf$, and noise $\wbf$ are all independent.
\end{itemize}

\section{Review of the Replica Symmetric Postulated MMSE Decoupling Property}
\label{sec:replicaMMSE}
We begin by reviewing the
RS postulated MMSE decoupling property of {\GV}~\cite{GuoV:05}.

\subsection{Postulated MMSE Estimators}
\label{sec:pmmse}
To define the concept of a postulated MMSE estimator,
suppose one is given a ``postulated'' prior distribution $\ppost$
and a postulated noise level $\sigpost^2$ that may be different
from the true values $\ptrue$ and $\sigtrue^2$.
We define the \emph{postulated minimum MSE} (PMMSE) estimate of $\xbf$ as
\beqa
    \xbfpmmse(\ybf) &=& \Exp\left( \xbf \mid \ybf \MID \ppost, \sigpost^2 \right) \nonumber \\
    &=& \int \xbf p_{\xbf \mid \ybf}(\xbf \mid \ybf \MID \ppost, \sigpost^2) \, d\xbf,
        \label{eq:xhatMmse}
\eeqa
where $p_{\xbf \mid \ybf}(\xbf \mid \ybf \MID q, \sigma^2)$ is the conditional
distribution of $\xbf$ given $\ybf$
under the $\xbf$ distribution $q$ and noise variance $\sigma^2$
specified as parameters after the semicolon.
We will use this sort of notation throughout the rest of the paper,
including the use of $p$ without a subscript for the p.d.f.\ of the scalar
or vector quantity understood from context.
In this case, due to the Gaussianity of the noise, we have
\beqa
    \lefteqn{p_{\xbf \mid \ybf}(\xbf \mid \ybf \MID q, \sigma^2) } \nonumber \\
    &=& C^{-1} \exp\left(-\frac{1}{2\sigma^2}\|\ybf-\Abf\Sbf^{1/2}\xbf\|^2\right)
    q(\xbf),
    \label{eq:pxy}
\eeqa
where the normalization constant is
\[
    C = \int \exp\left(-\frac{1}{2\sigma^2}
        \|\ybf-\Abf\Sbf^{1/2}\xbf\|^2\right)q(\xbf) \, d\xbf
\]
and
$q(\xbf)$ is the joint p.d.f.\
\[
    q(\xbf) = \prod_{j=1}^n q(x_j).
\]

In the case when $\ppost = \ptrue$
and $\sigpost^2 = \sigtrue^2$, so that the postulated and true
values agree, the PMMSE estimate reduces to the true MMSE estimate.

\subsection{Decoupling under Replica Symmetric Assumption}
\label{sec:pmmse-decoupling}
The essence of the RS PMMSE decoupling property
is that the asymptotic behavior of the PMMSE estimator
is described by an equivalent scalar estimator.
Let $q(x)$ be a probability distribution defined
on some set $\Xset \subseteq \R$.  Given $\mu > 0$,
let $p_{x\mid z}(x \mid z \MID q, \mu)$ be the conditional distribution
\beqa
    \lefteqn{p_{x\mid z}(x \mid z \MID q,\mu)} \nonumber \\
     &=& \left[ \int_{x \in \Xset} \phi(z-x \MID \mu)q(x) \, dx \right]^{-1}
     \phi(z-x \MID \mu)q(x)
    \label{eq:pxz}
\eeqa
where $\phi(\cdot)$ is the Gaussian distribution
\beq \label{eq:Gaussian}
    \phi(v \MID \mu) = \frac{1}{\sqrt{2\pi\mu}}
        e^{-\Frac{|v|^2}{(2\mu)}}.
\eeq
The distribution $p_{x\mid z}(x | z \MID q,\mu)$
is the conditional distribution of
the scalar random variable $x \sim q(x)$
given an observation of the form
\beq \label{eq:zmse}
    z = x + \sqrt{\mu}v,
\eeq
where $v \sim {\cal N}(0,1)$.
Using this distribution, we can
define the scalar conditional MMSE estimate
\beq \label{eq:xmmseScalar}
    \xscaMmse(z \MID q, \mu) = \int_{x \in \Xset} x \,
    p_{x \mid z}(x \mid z \MID \mu) \, dx.
\eeq
Also, given two distributions, $p_0(x)$ and $p_1(x)$,
and two noise levels, $\mu_0 > 0$ and $\mu_1 > 0$, define
\beqa
     \lefteqn{ \mse(p_1, p_0, \mu_1, \mu_0,z) } \nonumber \\
     &=&  \int_{x \in \Xset}
            |x-\xscaMmse(z \MID p_1,\mu_1)|^2 \,
            p_{x | z}(x \mid z \MID p_0,\mu_0) \, dx, \qquad
        \label{eq:mseScalar}
\eeqa
which is the MSE
in estimating the scalar $x$ from the variable $z$ in (\ref{eq:zmse})
when $x$ has a true distribution
$x \sim p_0(x)$ and the noise level is $\mu=\mu_0$,
but the estimator assumes a distribution
$x \sim p_1(x)$ and noise level $\mu=\mu_1$.

\medskip

\emph{Replica Symmetric Postulated MMSE Decoupling Property~\cite{GuoV:05}:}
Consider the estimation
problem in Section~\ref{sec:model}.
Let $\xbfpmmse(\ybf)$ be the PMMSE estimator based on a postulated
prior $\ppost$ and postulated noise level $\sigpost^2$.
For each $n$, let $j = j(n)$ be some deterministic component index with
$j(n) \in \{1,\ldots,n\}$.
Then under replica symmetry,
there exist \emph{effective noise levels}
$\sigEff^2$ and $\sigPEff^2$ such that:
\begin{itemize}
\item[(a)]   As $n \arr \infty$, the random vectors
$(x_j,s_j,\xpmmse_j)$ converge in distribution to the
random vector $(x,s,\xhat)$ consistent with the block diagram in Fig.~\ref{fig:scalarEquiv}.
Here $x$, $s$, and $v$ are independent with
$x \sim p_0(x)$, $s \sim p_S(s)$, $v \sim {\cal N}(0,1)$,
and
\begin{subequations}
\beqa
    \xhat &=& \xscaMmse(z \MID \ppost,\mu_p), \label{eq:xhatSE} \\
    z &=& x + \sqrt{\mu}v \label{eq:zSE},
\eeqa
where $\mu = \sigEff^2/s$ and $\mu_p = \sigPEff^2/s$.
\end{subequations}
\item[(b)] The effective noise levels satisfy the equations
\begin{subequations} \label{eq:repMmseFix}
\beqa
    \sigEff^2 &=& \sigtrue^2
        + \beta \,
        \Exp\left[ s \, \mse(\ppost,\ptrue,\mu_p,\mu,z) \right], \label{eq:sigEffSol} \\
    \sigPEff^2 &=& \sigpost^2 \nonumber \\
    & & + \beta \, \Exp\left[ s \, \mse(\ppost,\ppost,\mu_p,\mu_p,z)\right],
    \quad
            \label{eq:sigPEffSol}
\eeqa
\end{subequations}
where the expectations are taken over $s \sim p_S(s)$
and $z$ generated by (\ref{eq:zSE}).
\end{itemize}

\medskip

\begin{figure}
 \begin{center}
  \epsfig{figure=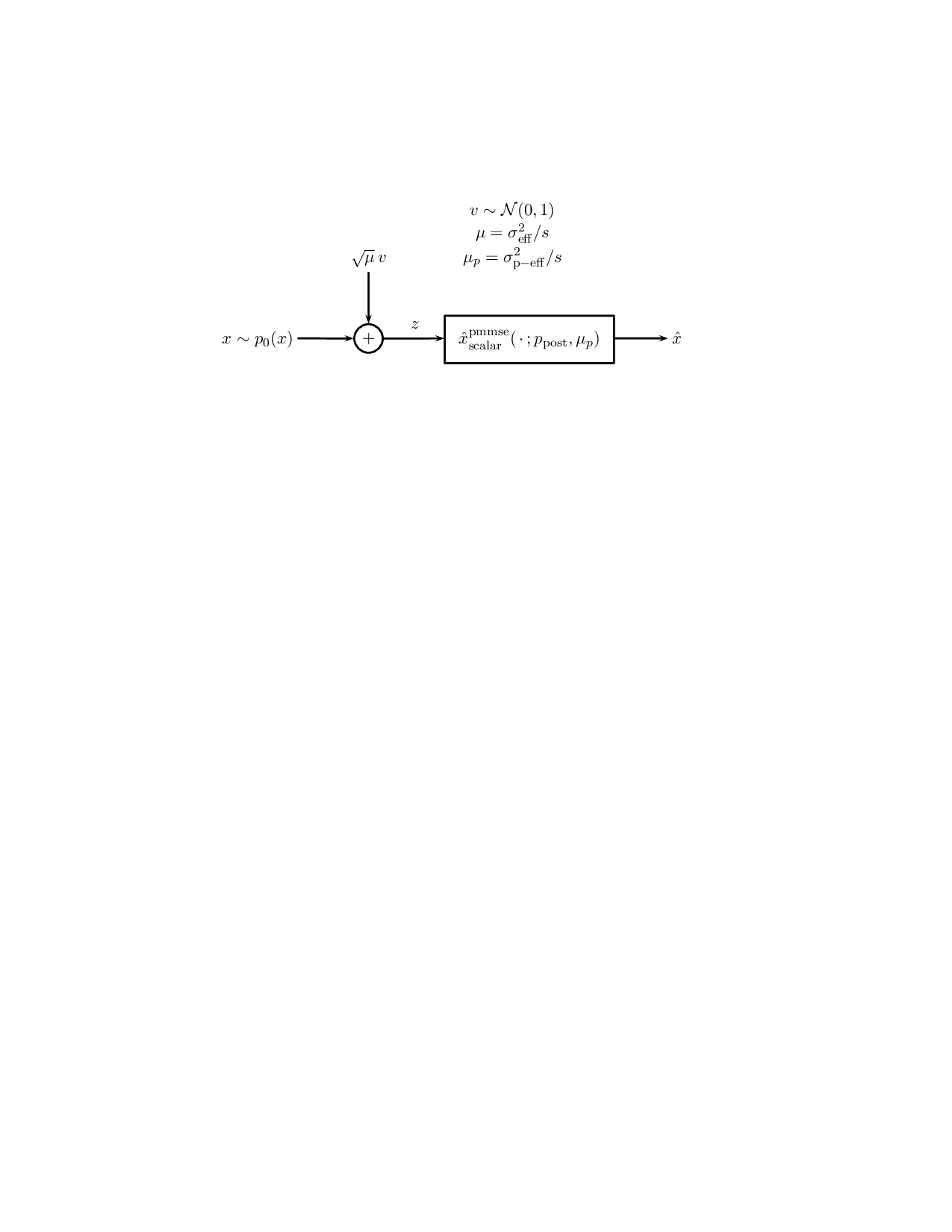,width=3.3in}
 \end{center}
\caption{Equivalent scalar model for the estimator behavior predicted
by the replica symmetric postulated MMSE decoupling property.}
\label{fig:scalarEquiv}
\end{figure}

This result asserts that the asymptotic
behavior of the joint estimation of the $n$-dimensional vector $\xbf$
can be described by $n$ equivalent scalar estimators.
In the scalar estimation problem, a component $x \sim \ptrue(x)$ is
corrupted by additive Gaussian noise yielding a
noisy measurement $z$.
The additive noise variance is $\mu = \sigEff^2/s$,
which is the effective noise divided by the scale factor $s$.
The estimate of that component is then described
by the (generally nonlinear) scalar estimator $\xscaMmse(z \MID \ppost,\mu_p)$.

The effective noise levels $\sigEff^2$ and $\sigPEff^2$
are described by the solutions to fixed-point equations (\ref{eq:repMmseFix}).  Note that $\sigEff^2$ and $\sigPEff^2$
appear implicitly on the left- and right-hand sides of these equations
via the terms $\mu$ and $\mu_p$.
In general, there is no closed form solution to these equations.
However, the expectations can be evaluated via
(one-dimensional)
numerical integration.

It is important to point out that there may, in general,
be multiple solutions to the fixed-point
equations (\ref{eq:repMmseFix}).
In this case, it turns out that the true solution is
the minimizer of a certain Gibbs' function described in~\cite{GuoV:05}.

\subsection{Effective Noise and Multiuser Efficiency}
To understand the significance of the effective noise level $\sigEff^2$,
it is useful to consider the following estimation problem with side information.
Suppose that when estimating the component $x_j$ an estimator is given
as side information the values of all the other components $\{x_\ell,\, \ell \neq j\}$.
Then, this hypothetical
estimator with side information can ``subtract out" the
effect of all the known components and compute
\[
    z_j = \frac{1}{\|\abf_j\|^2\sqrt{s_j}}\abf_j'\left(\ybf -
        \sum_{\ell \neq j} \sqrt{s_\ell} \, \abf_\ell \, x_\ell\right),
\]
where $\abf_\ell$ is the $\ell$th column of the measurement matrix $\Abf$.
It is easily checked that
\beqa
    z_j &=& \frac{1}{\|\abf_j\|^2\sqrt{s_j}}\abf_j'\left(\sqrt{s_j}\,\abf_j\,x_j
        + \wbf\right) \nonumber \\
        &=& x_j + \sqrt{\mu_0} \, v_j,
        \label{eq:zside}
\eeqa
where
\[
    v_j = \frac{1}{\sigtrue\|\abf_j\|^2}\abf_j'\wbf, \ \ \
    \mu_0 = \frac{\sigtrue^2}{s_j}.
\]
Thus, (\ref{eq:zside}) shows that with side information,
estimation of $x_j$ reduces to a scalar estimation problem where $x_j$
is corrupted by additive noise $\sqrt{\mu_0} \, v_j$.
Since $\wbf$ is Gaussian with mean zero and per-component variance $\sigtrue^2$,
$v_j$ is Gaussian with mean zero and variance $1/\|\abf_j\|^2$.
Also, since $\abf_j$ is an $m$-dimensional vector whose components are i.i.d.\
with variance $1/m$, $\|\abf_j\|^2 \arr 1$ as $m \arr \infty$.
Therefore, for large $m$, $v_j$ will approach $v_j \sim {\cal N}(0,1)$.

Comparing (\ref{eq:zside}) with (\ref{eq:zSE}), we see that the
equivalent scalar model predicted by the RS PMMSE decoupling property (\ref{eq:zSE})
is identical to the estimation with perfect side information (\ref{eq:zside}),
except that the noise level is increased by a factor
\beq \label{eq:muEff}
    1/\eta = \mu / \mu_0 = \sigEff^2 / \sigtrue^2.
\eeq
In multiuser detection, the factor $\eta$ is called the \emph{multiuser
efficiency}~\cite{Verdu:86,Verdu:98}.

The multiuser efficiency can be interpreted as degradation
in the effective signal-to-noise ratio (SNR):
With perfect side-information, an estimator
using $z_j$ in (\ref{eq:zside}) can estimate $x_j$ with
an effective SNR of
\beq \label{eq:snrNoMai}
    \SNR_0(s) = \frac{1}{\mu_0}\Exp|x_j|^2
    =  \frac{s}{\sigtrue^2}\Exp|x_j|^2.
\eeq
In CDMA multiuser detection, the factor $\SNR_0(s)$ is called
the post-despreading SNR with no multiple access interference.
The RS PMMSE decoupling property shows that without side information,
the effective SNR is given by
\beq \label{eq:snrMai}
    \SNR(s) = \frac{1}{\mu}\Exp|x_j|^2
    =  \frac{s}{\sigEff^2}\Exp|x_j|^2.
\eeq
Therefore, the multiuser efficiency $\eta$ in (\ref{eq:muEff})
is the ratio of the effective SNR with and without perfect
side information.

\section{Analysis of Postulated MAP Estimators via Hardening}
\label{sec:replicaMAP}
The main result of the paper is developed in this section.

\subsection{Postulated MAP Estimators}

Let $\Xset \subseteq \R$ be some (measurable) set and
consider an estimator of the form
\beq \label{eq:xhatMap}
    \xbfmap(\ybf) = \argmin_{\xbf \in \Xset^n}
        \frac{1}{2\gamma}\|\ybf - \Abf\Sbf^{1/2}\xbf\|^2_2 +
        \sum_{j=1}^n f(x_j),
\eeq
where $\gamma > 0$ is an algorithm parameter and
$f : \Xset \arr \R$ is some scalar-valued, nonnegative cost function.
We will assume that the objective function in (\ref{eq:xhatMap})
has a unique essential minimizer for almost all $\ybf$.

The estimator (\ref{eq:xhatMap}) can be interpreted as a MAP estimator.
To see this, suppose that for $u$ sufficiently large,
\beq \label{eq:puBnd}
    \int_{\xbf \in \Xset^n} e^{-uf(\xbf)} \, d\xbf < \infty,
\eeq
where we have extended the notation $f(\,\cdot\,)$ to vector arguments such that
\beq \label{eq:fsum}
    f(\xbf) = \sum_{j=1}^n f(x_j).
\eeq
When (\ref{eq:puBnd}) is satisfied,
we can define a prior probability distribution depending on $u$:
\beq \label{eq:pu}
    p_u(\xbf) = \left[\int_{\xbf \in \Xset^n}
    \exp(-uf(\xbf)) \, d\xbf\right]^{-1}\exp(-uf(\xbf)).
\eeq
Also, let
\beq \label{eq:sigu}
    \sigma_u^2 = \gamma / u.
\eeq
Substituting (\ref{eq:pu}) and (\ref{eq:sigu})
into (\ref{eq:pxy}), we see that
\beqa
   \lefteqn{p_{\xbf \mid \ybf}(\xbf \mid \ybf \MID p_u, \sigma_u^2)
   }\nonumber \\
     &=& C_u \exp\left[-u\left(\frac{1}{2\gamma}\|\ybf-\Abf\Sbf^{1/2}\xbf\|^2
    + f(\xbf) \right)\right] \label{eq:pxyu}
\eeqa
for some constant $C_u$ that does not depend on $\xbf$.
(The scaling of the noise variance along with $p_u$ enables the factorization
in the exponent of (\ref{eq:pxyu}).)
Comparing to (\ref{eq:xhatMap}), we see that
\[
    \xbfmap(\ybf) = \argmax_{\xbf \in \Xset^n}
    p_{\xbf \mid \ybf}(\xbf \mid \ybf \MID p_u, \sigma_u^2).
\]
Thus for all sufficiently large $u$,
we indeed have a MAP estimate---assuming the prior $p_u$ and
noise level $\sigma_u^2$.

\subsection{Decoupling under Replica Symmetric Assumption}
To analyze the postulated MAP (PMAP) estimator, we consider a sequence of
postulated MMSE estimators indexed by $u$.
For each $u$, let
\beq \label{eq:xhatu}
    \xbfhat^u(\ybf) = \Exp\left( \xbf \mid \ybf \MID p_u, \sigma_u^2 \right),
\eeq
which is the MMSE estimator of $\xbf$
under the postulated prior $p_u$ in
(\ref{eq:pu}) and noise level $\sigma^2_u$  in (\ref{eq:sigu}).
Using a standard large deviations argument, one can show that
under suitable conditions
\[
    \lim_{u \arr \infty} \xbfhat^u(\ybf) = \xbfmap(\ybf)
\]
for all $\ybf$.
A formal proof is given in Appendix~\ref{sec:xhatConv}
(see Lemma~\ref{lem:xhatuLim}).
Under the assumption that the behaviors of the postulated
MMSE estimators are described by the RS PMMSE decoupling property,
we can then extrapolate the behavior of the postulated MAP estimator.
This will yield our main result.

In statistical physics the parameter $u$ has the interpretation of
inverse temperature (see a general discussion in~\cite{Merhav:09arXiv}).
Thus, the limit as $u \arr \infty$ can be interpreted as a cooling or
``hardening" of the system.

In preparation for the main result,
define the scalar MAP estimator
\beq \label{eq:xmapScalar}
    \xscaMap(z \MID \lambda) = \argmin_{x \in \Xset} F(x,z,\lambda)
\eeq
where
\beq \label{eq:Fdef}
    F(x,z,\lambda) = \frac{1}{2\lambda} |z-x|^2 + f(x).
\eeq
The estimator (\ref{eq:xmapScalar}) plays a similar role as
the scalar MMSE estimator (\ref{eq:xmmseScalar}).

The main result
pertains to the estimator (\ref{eq:xhatMap}) applied
to the sequence of estimation problems defined in Section~\ref{sec:model}.
Our assumptions are as follows:
\begin{assumption} \label{as:replicaMMSE}
For all $u > 0$ sufficiently large, assume that the postulated MMSE estimator
\eqref{eq:xhatMmse}
with the postulated
prior $p_u$ in (\ref{eq:pu}) and postulated noise level $\sigma^2_u$ in (\ref{eq:sigu})
satisfy the RS PMMSE decoupling property in Section~\ref{sec:pmmse-decoupling}.
\end{assumption}
\begin{assumption} \label{as:limitExistence}
Let $\sigEff^2(u)$ and $\sigPEff^2(u)$
be the effective noise levels when using the postulated prior $p_u$
and noise level $\sigma^2_u$.  Assume the
following limits exist:
\begin{subequations}
\beqan
    \sigEffMap^2 &=& \lim_{u \arr \infty} \sigEff^2(u), \\
    \gamma_p &=& \lim_{u \arr \infty} u\sigPEff^2(u).
\eeqan
\end{subequations}
\end{assumption}
\begin{assumption} \label{as:limitExchange}
Suppose for each $n$, $\xhat^u_j(n)$ is the
MMSE estimate of the component $x_j$ for some index
$j \in \{1,\ldots,n\}$ based on the postulated prior $p_u$ and
postulated noise level $\sigma^2_u$.
Then, assume that limits can be interchanged to give the following equality:
\[
    \lim_{u \arr \infty} \lim_{n \arr \infty} \xhat^u_j(n)
    =  \lim_{n \arr \infty} \lim_{u \arr \infty} \xhat^u_j(n),
\]
where the limits are in distribution.
\end{assumption}
\begin{assumption} \label{as:uniqueness}
For every $n$, $\Abf$, and $\Sbf$, assume that for
almost all $\ybf$, the minimization in (\ref{eq:xhatMap})
achieves a unique essential minimum.  Here, essential should be understood in the standard measure-theoretic sense in that
the minimum and essential infimum agree.
\end{assumption}
\begin{assumption} \label{as:growth}
Assume that $f(x)$ is nonnegative and satisfies
\[
    \lim_{|x| \arr \infty} \frac{f(x)}{\log|x|} = \infty,
\]
where the limit must hold over all sequences in $\Xset$ with
$|x| \arr \infty$.  If $\Xset$ is compact,
this limit is automatically satisfied (since there are no sequences
in $\Xset$ with $|x|\arr \infty$).
\end{assumption}
\begin{assumption} \label{as:limitingVariance}
For all $\lambda \in \R$ and almost all $z$,
the minimization in (\ref{eq:xmapScalar}) has a unique, essential
minimum.  Moreover,
for all $\lambda$ and almost all $z$,
there exists a $\sigma^2(z,\lambda)$ such that
\beq \label{eq:sigzlam}
    \lim_{x \arr \xhat}
        \frac{|x-\xhat|^2}{2(F(x,z,\lambda) - F(\xhat,z,\lambda))}
        = \sigma^2(z,\lambda),
\eeq
where $\xhat =  \xscaMap(z \MID \lambda)$.
\end{assumption}

Assumption~\ref{as:replicaMMSE} is simply stated to again point out
that we are assuming the validity of replica symmetry for the postulated
MMSE estimates.   We make the additional
Assumptions~\ref{as:limitExistence} and~\ref{as:limitExchange},
which are also difficult to verify but similar in spirit.
Taken together, Assumptions~\ref{as:replicaMMSE}--\ref{as:limitExchange}
reflect the main limitations of the replica symmetric analysis and precisely
state the manner in which the analysis is non-rigorous.

Assumptions~\ref{as:uniqueness}--\ref{as:limitingVariance} are
technical conditions on the existence and uniqueness of the MAP estimate.
Assumption~\ref{as:uniqueness} will be true for any strictly convex regularization
$f(x_j)$, although it is difficult to verify in the non-convex case.
The other two assumptions,
Assumptions~\ref{as:growth} and~\ref{as:limitingVariance}, will be verified
for the problems of interest.
In fact, we will explicitly calculate $\sigma^2(z,\lambda)$.

We can now state our extension of the
RS PMMSE decoupling property.

\medskip

\emph{Replica Symmetric Postulated MAP Decoupling Property:}
Consider the estimation problem in Section~\ref{sec:model}.
Let $\xbfmap(\ybf)$ be the postulated MAP estimator (\ref{eq:xhatMap})
defined for some $f(x)$ and $\gamma > 0$
satisfying Assumptions~\ref{as:replicaMMSE}--\ref{as:limitingVariance}.
For each $n$, let $j = j(n)$ be some deterministic component index with
$j(n) \in \{1,\ldots,n\}$.
Then under replica symmetry (as part of Assumption~\ref{as:replicaMMSE}):
\begin{itemize}
\item[(a)]   As $n \arr \infty$, the random vectors
$(x_j,s_j,\xmap_j)$ converge in distribution to the
random vector $(x,s,\xhat)$
consistent with the block diagram in Fig.~\ref{fig:scalarEquivMap}
for the limiting \emph{effective noise levels} $\sigEff^2$ and $\gamma_p$
in Assumption~\ref{as:limitExistence}.
Here $x$, $s$, and $v$ are independent with
$x \sim p_0(x)$, $s \sim p_S(s)$, $v \sim {\cal N}(0,1)$,
and
\begin{subequations} \label{eq:xzMap}
\beqa
    \xhat &=& \xscaMap(z,\lambda_p), \label{eq:xhatSEMap} \\
    z &=& x + \sqrt{\mu}v, \label{eq:zSEMap}
\eeqa
where $\mu = \sigEffMap^2 / s$ and $\lambda_p = \gamma_p / s$.
\end{subequations}
\item[(b)] The limiting effective noise levels $\sigEffMap^2$
and $\gamma_p$ satisfy the equations
\begin{subequations} \label{eq:repMapFix}
\beqa
    \sigEffMap^2&=& \sigtrue^2
    + \beta\,\Exp\left[ s |x-\xhat|^2\right],
            \label{eq:sigEffSolMap} \\
    \gamma_p &=& \gamma
    + \beta\,\Exp\left[ s \sigma^2(z,\lambda_p)\right],
            \label{eq:sigPEffSolMap}
\eeqa
\end{subequations}
where the expectations are taken over $x \sim p_0(x)$, $s \sim p_S(s)$,
and $v \sim {\cal N}(0,1)$, with $\xhat$
and $z$ defined in (\ref{eq:xzMap}).
\end{itemize}
\begin{IEEEproof}
See Appendices~\ref{sec:proof-overview}--\ref{sec:fixed-point}.
\end{IEEEproof}

\begin{figure}
 \begin{center}
  \epsfig{figure=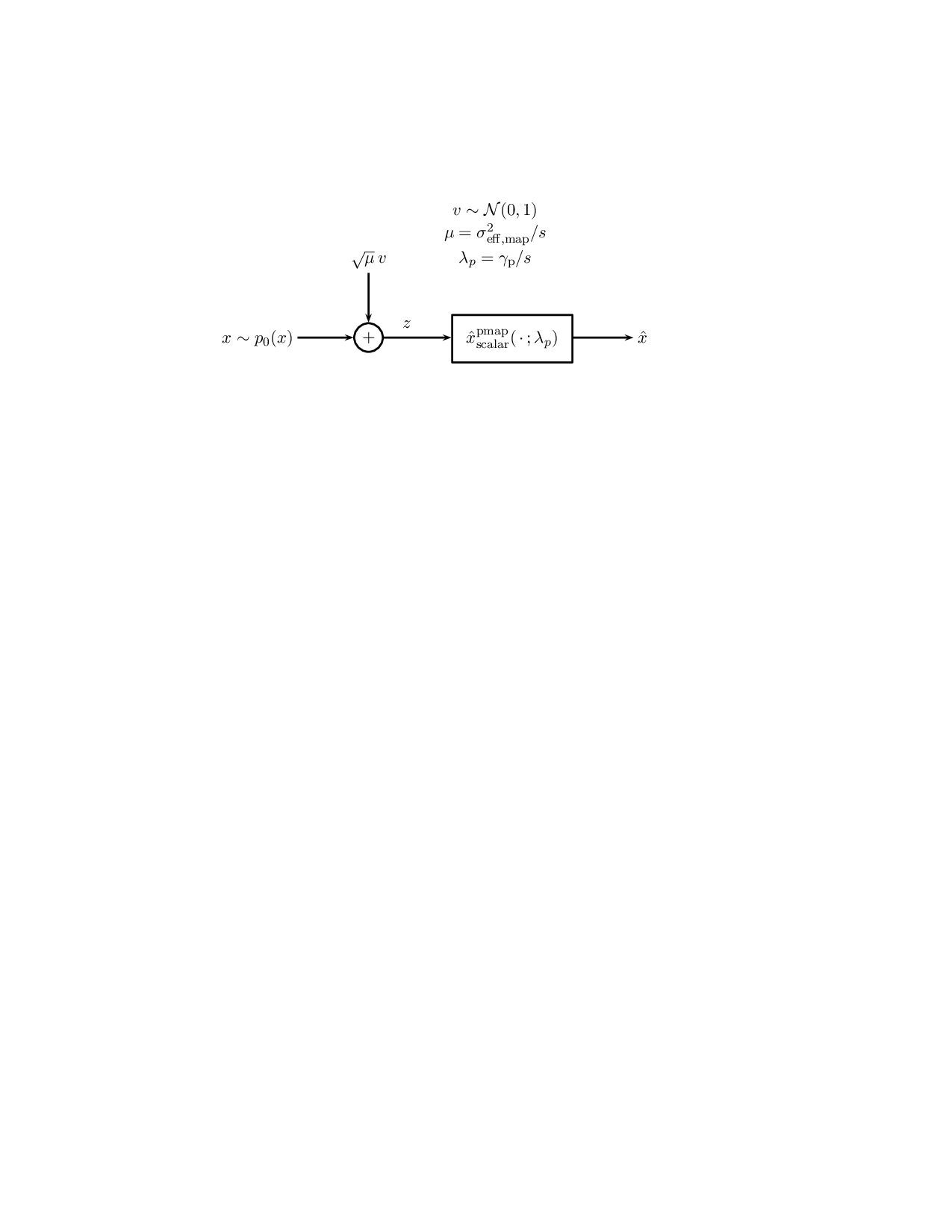,width=3.3in}
 \end{center}
\caption{Equivalent scalar model for the estimator behavior predicted
by the replica symmetric postulated MAP decoupling property.}
\label{fig:scalarEquivMap}
\end{figure}

Analogously to the RS PMMSE decoupling property,
the RS PMAP decoupling property
asserts that asymptotic behavior of the PMAP estimate of any single
component of $\xbf$ is described by a simple equivalent scalar estimator.
In the equivalent scalar model, the component of the true vector $\xbf$
is corrupted by Gaussian noise and the estimate
of that component is given by a scalar PMAP estimate of the
component from the noise-corrupted version.

\section{Analysis of Compressed Sensing}
\label{sec:compSensAnal}

Our results thus far hold for any separable distribution for $\xbf$
(see Section~\ref{sec:model})
and under mild conditions on the cost function $f$
(see especially Assumption~\ref{as:growth}, but other assumptions
also implicitly constrain $f$).
In this section, we provide additional details on replica analysis for
choices of $f$ that yield PMAP estimators relevant to compressed sensing.
Since the role of $f$ is to determine the estimator,
this is not the same as choosing sparse priors for $\xbf$.
Numerical evaluations of asymptotic performance with sparse priors
for $\xbf$ are given in Section~\ref{sec:sim}.

\subsection{Linear Estimation}
\label{sec:linAnal}

We first apply the RS PMAP decoupling property to the simple case of
linear estimation.
Linear estimators only use second-order statistics
and generally do not directly exploit sparsity or
other aspects of the distribution of the unknown vector $\xbf$.
Nonetheless, for sparse estimation problems, linear estimators
can be used as a first step in estimation, followed by thresholding
or other nonlinear operations~\cite{RauhutSV:08,FletcherRG:09-IT}.
It is therefore worthwhile to analyze the behavior of
linear estimators even in the context of sparse priors.

The asymptotic behavior of linear estimators with large random
measurement matrices is well known.
For example, using the Mar\v{c}enko-Pastur theorem~\cite{MarcenkoP:67},
Verd{\'u} and Shamai~\cite{VerduS:99}
characterized the behavior of linear estimators with large
i.i.d.\ matrices $\Abf$ and constant scale factors $\Sbf = I$.
Tse and Hanly~\cite{TseH:99} extended the analysis to general $\Sbf$.
{\GV}~\cite{GuoV:05} showed that both of these results can be recovered
as special cases of the general RS PMMSE decoupling property.
We show here that the RS PMAP decoupling property
can also recover these results.
Although the calculations are very similar to~\cite{GuoV:05},
and indeed we arrive at precisely the same results,
walking through the computations
will illustrate how the RS PMAP decoupling property is used.

To simplify the notation, suppose that the true prior on $\xbf$
is such that each component has zero mean and unit variance.
Choose the cost function
\[
    f(x) = \frac{1}{2}|x|^2,
\]
which corresponds to the negative log of a Gaussian prior also with zero
mean and unit variance.
With this cost function,
the PMAP estimator (\ref{eq:xhatMap}) reduces to the linear estimator
\beq \label{eq:xhatLin}
      \xbfmap(\ybf) = \Sbf^{1/2}\Abf'\left(\Abf\Sbf\Abf' + \gamma I\right)^{-1}
        \ybf.
\eeq
When $\gamma = \sigtrue^2$, the true noise variance,
the estimator (\ref{eq:xhatLin})
is the linear MMSE estimate.

Now, let us compute the effective noise levels from the
RS PMAP decoupling property.
First note that $F(x,z,\lambda)$ in (\ref{eq:Fdef}) is given by
\[
    F(x,z,\lambda) = \frac{1}{2\lambda}|z-x|^2 + \frac{1}{2}|x|^2,
\]
and therefore the scalar MAP estimator
in (\ref{eq:xmapScalar}) is given by
\beq \label{eq:xmapScalarLin}
    \xscaMap(z \MID \lambda) = \frac{1}{1+\lambda}z.
\eeq
A simple calculation also shows that $\sigma^2(z,\lambda)$ in
(\ref{eq:sigzlam}) is given by
\beq \label{eq:sigzlamLin}
    \sigma^2(z,\lambda) = \frac{\lambda}{1+\lambda}.
\eeq

As part (a) of the RS PMAP decoupling property, let
$\mu = \sigEffMap^2/s$ and $\lambda_p = \gamma_p/s$.
Observe that
\beqa
    \lefteqn{ \Exp\left[s\,|x-\xscaMap(z \MID \lambda_p)|^2\right] } \nonumber \\
    &\stackrel{(a)}{=}& \Exp\left[s\,\left|x-\frac{1}{1+\lambda_p}z\right|^2\right]
        \nonumber \\
    &\stackrel{(b)}{=}& \Exp\left[s\,\left|\frac{\lambda_p}{1+\lambda_p}x
        -\frac{\sqrt{\mu}}{1+\lambda_p}v\right|^2\right]
        \nonumber \\
    &\stackrel{(c)}{=}& \frac{s(\lambda_p^2 + \mu)}{(1+\lambda_p)^2},
        \label{eq:msepLin}
\eeqa
where (a) follows from (\ref{eq:xmapScalarLin});
(b) follows from (\ref{eq:zSEMap});
and (c) follows from the fact that $x$ and $v$ are uncorrelated with
zero mean and unit variance.
Substituting (\ref{eq:sigzlamLin}) and (\ref{eq:msepLin})
into the fixed-point equations (\ref{eq:repMapFix}),
we see that the limiting noise levels
$\sigEffMap^2$ and $\gamma_p$ must satisfy
\beqan
    \sigEffMap^2 &=& \sigtrue^2 +
        \beta \, \Exp\left[ \frac{s(\lambda_p^2 + \mu)}{(1+\lambda_p)^2}
            \right], \\
    \gamma_p &=& \gamma +
        \beta \, \Exp\left[ \frac{s\lambda_p}{1+\lambda_p} \right],
\eeqan
where the expectation is over $s \sim p_S(s)$.
In the case when $\gamma = \sigtrue^2$, it can be verified that
a solution to these fixed-point equations
is $\sigEffMap^2 = \gamma_p$, which results in $\mu = \lambda_p$ and
\beqa
    \sigEffMap^2 &=& \sigtrue^2 +
        \beta \, \Exp\left[ \frac{s\lambda_p}{1+\lambda_p} \right]
            \nonumber \\
    &=& \sigtrue^2 +
        \beta \, \Exp\left[ \frac{s\sigEffMap^2}{s+\sigEffMap^2} \right].
        \label{eq:sigEffLin}
\eeqa
The expression (\ref{eq:sigEffLin})
is precisely the Tse-Hanly formula~\cite{TseH:99}
for the effective interference.
Given a distribution on $s$,
this expression can be solved numerically for $\sigEffMap^2$.
In the special case of constant $s$, (\ref{eq:sigEffLin})
reduces to Verd{\'u} and Shamai's result in~\cite{VerduS:97}
and can be solved via a quadratic equation.

The RS PMAP decoupling property now states
that for any component index $j$,
the asymptotic joint distribution of $(x_j,s_j,\xhat_j)$
is described by $x_j$ corrupted by additive Gaussian
noise with variance $\sigEffMap^2/s$ followed by
a scalar linear estimator.

As described in~\cite{GuoV:05},
the above analysis can also be applied to other linear
estimators including the matched filter (where $\gamma \arr \infty$)
or the decorrelating receiver ($\gamma \arr 0$).

\subsection{Lasso Estimation}
\label{sec:lassoAnal}

We next consider lasso estimation, which is widely used
for estimation of sparse vectors.
The lasso estimate~\cite{Tibshirani:96}
(sometimes referred to as
basis pursuit denoising~\cite{ChenDS:99})
is given by
\beq \label{eq:xhatLasso}
    \xbflasso(\ybf) = \argmin_{\xbf \in \R^n}
        \frac{1}{2\gamma}\|\ybf - \Abf\Sbf^{1/2}\xbf\|^2_2 +
        \|\xbf\|_1,
\eeq
where $\gamma > 0$ is an algorithm parameter.
The estimator is essentially a least-squares estimator
with an additional $\|\xbf\|_1$ regularization term to
encourage sparsity in the solution.  The parameter $\gamma$
is selected to trade off the sparsity of the estimate with
the prediction error.
An appealing feature of lasso estimation is that the
minimization in (\ref{eq:xhatLasso}) is convex;
lasso thus enables computationally-tractable algorithms for finding
sparse estimates.

The lasso estimator (\ref{eq:xhatLasso}) is identical to the
PMAP estimator (\ref{eq:xhatMap}) with the cost function
\[
    f(x) = |x|.
\]
With this cost function, $F(x,z,\lambda)$ in (\ref{eq:Fdef}) is given by
\[
    F(x,z,\lambda) = \frac{1}{2\lambda}|z-x|^2 + |x|,
\]
and therefore the scalar MAP estimator
in (\ref{eq:xmapScalar}) is given by
\beq \label{eq:xmapScalarLasso}
    \xscaMap(z \MID \lambda) = \Tsoft_\lambda(z),
\eeq
where $\Tsoft_\lambda(z)$ is the soft thresholding operator
\beq \label{eq:Tsoft}
    \Tsoft_\lambda(z) = \left\{ \begin{array}{rl}
        z-\lambda, & \mbox{if $z > \lambda$}; \\
        0,         & \mbox{if $|z| \leq \lambda$}; \\
        z+\lambda, & \mbox{if $z < -\lambda$}.
        \end{array} \right.
\eeq

The RS PMAP decoupling property now states that there exists
effective noise levels $\sigEffMap^2$ and $\gamma_p$
such that for any component index $j$, the random vector
$(x_j,s_j,\xhat_j)$ converges in distribution to the
vector $(x,s,\xhat)$ where $x \sim \ptrue(x)$,
$s \sim p_S(s)$, and $\xhat$ is given by
\beq \label{eq:xtsoft}
    \xhat = \Tsoft_{\lambda_p}(z),
\qquad
    z = x + \sqrt{\mu}v,
\eeq
where $v \sim {\cal N}(0,1)$, $\lambda_p = \gamma_p/s$,
and $\mu = \sigEffMap^2/s$.
Hence, the asymptotic behavior of lasso has a remarkably
simple description:  the asymptotic distribution of the
lasso estimate $\xhat_j$ of the component $x_j$
is identical to $x_j$ being corrupted by Gaussian noise
and then soft-thresholded to yield the estimate $\xhat_j$.

This soft-threshold description has an appealing interpretation.
Consider the case when the measurement matrix $\Abf = I$.
In this case,
the lasso estimator (\ref{eq:xhatLasso}) reduces to
$n$ scalar estimates,
\beq \label{eq:xtsoftId}
    \xhat_j = \Tsoft_{\lambda}\left(x_j + \sqrt{\mu_0}v_j\right),
\qquad
    j = 1,\,2,\,\ldots,n,
\eeq
where $v_i \sim {\cal N}(0,1)$, $\lambda = \gamma/s$,
and $\mu_0 = \sigtrue^2/s$.
Comparing (\ref{eq:xtsoft}) and (\ref{eq:xtsoftId}),
we see that the asymptotic distribution of
$(x_j,s_j,\xhat_j)$ with large random $\Abf$ is identical
to the distribution in the trivial case where $\Abf = I$,
except that the noise levels
$\gamma$ and $\sigtrue^2$ are replaced by effective noise
levels $\gamma_p$ and $\sigEffMap^2$.

To calculate the effective noise levels,
one can perform a simple calculation to show that
$\sigma^2(z,\lambda)$ in
(\ref{eq:sigzlam}) is given by
\beq \label{eq:sigzLasso}
    \sigma^2(z,\lambda) = \left\{ \begin{array}{rl}
        \lambda, & \mbox{if $|z| > \lambda$}; \\
        0,       & \mbox{if $|z| \leq \lambda$}.
        \end{array} \right.
\eeq
Hence,
\beqa
    \Exp\left[s \sigma^2(z,\lambda_p)\right] &=&
    \Exp\left[s\lambda_p \Pr(|z| > \lambda_p) \right] \nonumber \\
    &=& \gamma_p\Pr(|z| > \gamma_p/s),
    \label{eq:sigSoft}
\eeqa
where we have used the fact that $\lambda_p = \gamma_p/s$.
Substituting (\ref{eq:xmapScalarLasso}) and (\ref{eq:sigSoft})
into (\ref{eq:repMapFix}),
we obtain the fixed-point equations
\begin{subequations} \label{eq:lassoFix}
\beqa
    \sigEffMap^2 &=& \sigtrue^2 +
        \beta \Exp\left[ s|x-\Tsoft_{\lambda_p}(z)|^2\right],
        \label{eq:lassoFixSig} \\
    \gamma_p &=& \gamma +
        \beta \gamma_p\Pr(|z| > \gamma_p/s), \label{eq:lassoFixGam}
\eeqa
\end{subequations}
where the expectations are taken with respect to $x \sim \ptrue(x)$,
$s \sim p_S(s)$, and $z$ in (\ref{eq:xtsoft}).
Again, while these fixed-point equations do not have a
closed-form solution, they can be relatively easily solved numerically
given distributions of $x$ and $s$.

\subsection{Zero Norm-Regularized Estimation}
\label{sec:zero}
Lasso can be regarded as a convex relaxation of zero norm-regularized estimator
\beq \label{eq:xhatZero}
    \xbfzero(\ybf) = \argmin_{\xbf \in \R^n}
        \frac{1}{2\gamma}\|\ybf - \Abf\Sbf^{1/2}\xbf\|^2_2 +
        \|\xbf\|_0,
\eeq
where $\|\xbf\|_0$ is the number of nonzero components
of $\xbf$.  For certain strictly sparse priors,
zero norm-regularized estimation may provide better performance
than lasso.  While \emph{computing} the zero norm-regularized estimate
is generally very difficult, we can use the replica
analysis to provide a simple prediction of its \emph{performance}.
This analysis can provide a bound on the performance achievable
by practical algorithms.

To apply the RS PMAP decoupling property to the zero norm-regularized estimator
(\ref{eq:xhatZero}), we observe that
the zero norm-regularized estimator is identical to the
PMAP estimator (\ref{eq:xhatMap}) with the cost function
\beq \label{eq:fzero}
    f(x) = \left\{ \begin{array}{rl}
        0, & \mbox{if $x = 0$}; \\
        1, & \mbox{if $x \neq 0$}. \end{array} \right.
\eeq
Technically, this cost function does not satisfy the
conditions of the RS PMAP decoupling property.  For one thing, without
bounding the range of $x$, the bound (\ref{eq:puBnd}) is not satisfied.
Also, the minimum of (\ref{eq:xmapScalar})
does not agree with the essential infimum.  To avoid these problems,
we can consider an approximation of (\ref{eq:fzero}),
\[
    f_{\delta,M}(x) =
    \left\{ \begin{array}{rl}
        0, & \mbox{if $|x| < \delta$}; \\
        1, & \mbox{if $|x| \in [\delta,M]$}, \end{array} \right.
\]
which is defined on the set $\Xset = \{x:|x| \leq M\}$.
We can then take the limits $\delta \arr 0$ and $M \arr \infty$.
For space considerations and to simplify the presentation,
we will just apply the decoupling property with $f(x)$
in (\ref{eq:fzero})
and omit the details of taking the appropriate limits.

With $f(x)$ given by (\ref{eq:fzero}),
the scalar MAP estimator in (\ref{eq:xmapScalar}) is given by
\beq \label{eq:xmapScalarZero}
    \xscaMap(z \MID \lambda) = \Thard_{t}(z),
\qquad
    t = \sqrt{2\lambda},
\eeq
where $\Thard_t$ is the hard thresholding operator,
\beq \label{eq:Thard}
    \Thard_t(z) = \left\{ \begin{array}{rl}
        z, & \mbox{if $|z| > t$}; \\
        0, & \mbox{if $|z| \leq t$}.
        \end{array} \right.
\eeq
Now, similar to the case of lasso estimation,
the RS PMAP decoupling property states that there exists
effective noise levels $\sigEffMap^2$ and $\gamma_p$
such that for any component index $j$, the random vector
$(x_j,s_j,\xhat_j)$ converges in distribution to the
vector $(x,s,\xhat)$ where $x \sim \ptrue(x)$,
$s \sim p_S(s)$, and $\xhat$ is given by
\beq \label{eq:xthard}
    \xhat = \Thard_t(z),
\qquad
    z = x + \sqrt{\mu}v,
\eeq
where $v \sim {\cal N}(0,1)$, $\lambda_p = \gamma_p/s$,
$\mu = \sigEffMap^2/s$, and
\beq \label{eq:tzero}
    t = \sqrt{2\lambda_p} = \sqrt{2\gamma_p/s}.
\eeq
Thus, the zero norm-regularized estimation
of a vector $\xbf$ is equivalent to
$n$ scalar components corrupted by some effective noise
level $\sigEffMap^2$ and hard-thresholded based on an effective
noise level $\gamma_p$.

The fixed-point equations for the effective noise levels
$\sigEffMap^2$ and $\gamma_p$
can be computed similarly to the case of lasso.
Specifically, one can verify that (\ref{eq:sigzLasso})
and (\ref{eq:sigSoft}) are both satisfied for
the hard thresholding operator as well.
Substituting (\ref{eq:sigSoft}) and (\ref{eq:xmapScalarZero})
into (\ref{eq:repMapFix}),
we obtain the fixed-point equations
\begin{subequations} \label{eq:zeroFix}
\beqa
    \sigEffMap^2 &=& \sigtrue^2 +
        \beta \Exp\left[ s|x-\Thard_{t}(z)|^2\right],
        \label{eq:zeroFixSig} \\
    \gamma_p &=& \gamma +
        \beta \gamma_p\Pr(|z| > t), \label{eq:zeroFixGam}
\eeqa
\end{subequations}
where the expectations are taken with respect to $x \sim \ptrue(x)$,
$s \sim p_S(s)$, $z$ in (\ref{eq:xthard}), and $t$
given by (\ref{eq:tzero}).
These fixed-point equations can be solved numerically.

\subsection{Optimal Regularization}
\label{sec:gamSel}

The lasso estimator (\ref{eq:xhatLasso}) and
zero norm-regularized estimator (\ref{eq:xhatZero})
require the setting of a regularization parameter $\gamma$.
Qualitatively, the parameter provides a mechanism to
trade off the sparsity level of the estimate
with the fitting error.
One of the benefits of the replica analysis is that it
provides a simple mechanism for optimizing the parameter
level given the problem statistics.

Consider first the lasso estimator (\ref{eq:xhatLasso})
with some $\beta > 0$ and distributions
$x \sim p_0(x)$ and $s \sim p_S(s)$.
Observe that there exists a solution to (\ref{eq:lassoFixGam})
with $\gamma > 0$ if and only if
\beq \label{eq:probzLim}
    \Pr\left(|z| > \gamma_p/s\right) < 1/\beta.
\eeq
This leads to a natural optimization:
we consider an optimization over two variables $\sigEffMap^2$ and
$\gamma_p$, where we minimize $\sigEffMap^2$
subject to (\ref{eq:lassoFixSig}) and (\ref{eq:probzLim}).

One simple procedure for performing this minimization is
as follows:  Start with
$t=0$ and some initial value of $\sigEffMap^2(0)$.  For any
iteration $t\geq 0$, we update $\sigEffMap^2(t)$ with the minimization
\beq \label{eq:sigEffMin}
    \sigEffMap^2(t+1) = \sigtrue^2 +
    \beta \min_{\gamma_p}
    \Exp\left[ s|x-\Tsoft_{\lambda_p}(z)|^2\right],
\eeq
where, on the right-hand side,
the expectation is taken over $x \sim \ptrue(x)$,
$s \sim p_S(s)$, $z$ in (\ref{eq:xtsoft}),
$\mu = \sigEffMap^2(t)/s$, and $\lambda_p = \gamma_p/s$.
The minimization in (\ref{eq:sigEffMin}) is over
$\gamma_p > 0$ subject to (\ref{eq:probzLim}).
One can show that with a sufficiently high initial condition,
the sequence $\sigEffMap^2(t)$ monotonically decreases to
a local minimum of the objective function.
Given the final value for $\gamma_p$, one can then
recover $\gamma$ from (\ref{eq:lassoFixGam}).
A similar procedure can be used for the zero norm-regularized
estimator.

\section{Numerical Simulations}
\label{sec:sim}

\subsection{Bernoulli--Gaussian Mixture Distribution}
As discussed above, the replica method is based on certain
unproven assumptions and even then the decoupling results
under replica symmetry are only asymptotic for the large dimension limit.
To validate the predictive power of
the RS PMAP decoupling property for finite dimensions, we first performed
numerical simulations where the components of $\xbf$
are a zero-mean Bernoulli--Gaussian process, or equivalently a
two-component, zero-mean Gaussian mixture where one component has zero variance.
Specifically,
\[
    x_j \sim \left\{ \begin{array}{rl}
        {\cal N}(0,1), & \mbox{with prob.\ $\rho$}; \\
        0,             & \mbox{with prob.\ $1-\rho$}, \\
    \end{array} \right.
\]
where $\rho$ represents a sparsity ratio.
In the experiments, $\rho = 0.1$.
This is one of many possible sparse priors.

We took the vector $\xbf$ to have $n=100$ i.i.d.\
components with this prior, and we varied $m$ for 10 different
values of $\beta = n/m$ from 0.5 to 3\@.
For the measurements (\ref{eq:yax}),
we took a measurement matrix $\Abf$ with i.i.d.\ Gaussian
components and a constant scale factor matrix $\Sbf = I$.
The noise level $\sigtrue^2$ was set so that $\SNR_0$ = 10~dB,
where $\SNR_0$ is the signal-to-noise ratio with perfect side
information defined in (\ref{eq:snrNoMai}).

We simulated various estimators
and compared their performances against the
asymptotic values predicted by the replica analysis.
For each value of $\beta$,
we performed 1000 Monte Carlo trials of each estimator.
For each trial, we measured the normalized squared error (SE) in dB
\[
    10 \log_{10}\left( \frac{
    \|\xbfhat - \xbf\|^2}{\|\xbf\|^2}
    \right),
\]
where $\xbfhat$ is the estimate of $\xbf$.
The results are shown in Fig.~\ref{fig:replicaSim},
with each set of 1000 trials represented by the median normalized SE in dB\@.

\begin{figure}
\begin{center}
  \epsfig{figure=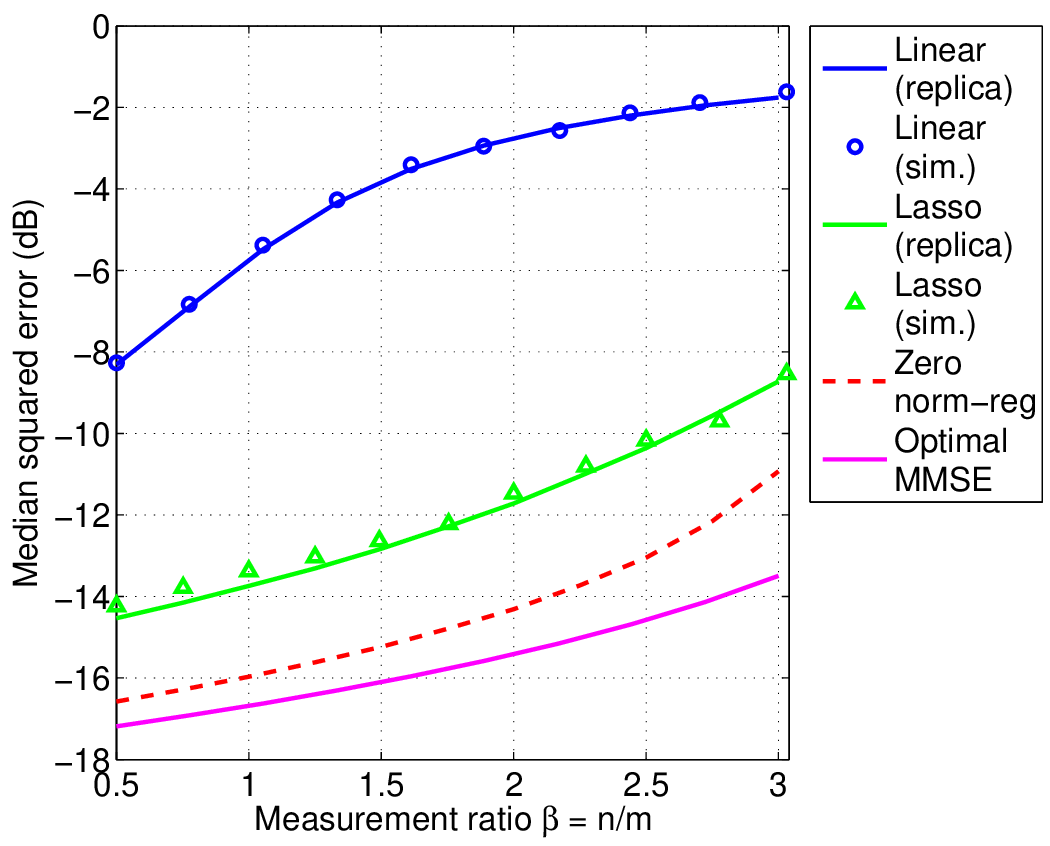,width=3.5in}
\end{center}
\caption{MSE performance prediction with the RS PMAP decoupling property.
Plotted is the median normalized SE for
various sparse recovery algorithms:
linear MMSE estimation, lasso, zero norm-regularized estimation, and
optimal MMSE estimation.
Solid lines show the asymptotic predicted MSE from the
replica method.  For the linear and lasso estimators,
the circles and triangles show the actual median SE
over 1000 Monte Carlo simulations.
The unknown vector has i.i.d.\ Bernoulli--Gaussian components
with a 90\% probability of being zero.
The noise level is set so that $\captionSNR_0$ = 10~dB\@.
See text for details.}
\label{fig:replicaSim}
\end{figure}

The top curve shows the performance of the linear MMSE estimator
(\ref{eq:xhatLin}).  As discussed in Section~\ref{sec:linAnal},
the RS PMAP decoupling property applied to the case of a constant
scale matrix $\Sbf = I$
reduces to Verd{\'u} and Shamai's result in~\cite{VerduS:97}.
As can be seen in Fig.~\ref{fig:replicaSim}, the
result predicts the simulated performance of the linear
estimator extremely well.

The next curve shows the lasso estimator (\ref{eq:xhatLasso})
with the factor $\gamma$ selected to minimize the MSE
as described in Section~\ref{sec:gamSel}.
To compute the predicted value of the MSE from the RS PMAP decoupling property,
we numerically solve the fixed-point equations (\ref{eq:lassoFix})
to obtain the effective noise levels $\sigEffMap^2$
and $\gamma_p$.
We then use the scalar MAP model with the estimator
(\ref{eq:xmapScalarLasso}) to predict the MSE\@.
We see from Fig.~\ref{fig:replicaSim} that the predicted
MSE matches the median SE within 0.3~dB over a
range of $\beta$ values.
At the time of initial dissemination of this work~\cite{RanganFG:09arXiv},
precise prediction of lasso's performance given a specific
noise variance and prior was not achievable with any other method.
Now, as discussed in Section \ref{sec:bpConn},
such asymptotic performance predictions can also be proven rigorously through
connections with approximate belief propagation.

Fig.~\ref{fig:replicaSim} also shows the theoretical
minimum MSE (as computed with the RS PMMSE decoupling property)
and the theoretical MSE
from the zero norm-regularized estimator as computed in Section
\ref{sec:zero}.
For these two cases, the estimators cannot be simulated since
they involve NP-hard computations.
But we have depicted the curves to show that the replica method
can be used to calculate the gap between practical and impractical
algorithms.
Interestingly, we see that there is about a 2.0 to 2.5~dB gap between
lasso and zero norm-regularized estimation, and another
1 to 2~dB gap between zero norm-regularized estimation and optimal MMSE\@.

It is, of course, not surprising that zero norm-regularized estimation performs
better than lasso for the strictly sparse prior considered
in this simulation, and that optimal MMSE performs better yet.
However, what is valuable is that replica analysis
can quantify the precise performance differences.

In Fig.~\ref{fig:replicaSim}, we plotted the median SE
since there is actually considerable variation in the SE
over the random realizations of the problem parameters.
To illustrate the degree of variability, Fig.~\ref{fig:gmixCdf}
shows the CDF of the SE values over the 1000 Monte Carlo trials.
Each trial has different noise and measurement matrix realizations,
and both contribute to SE variations.
We see that the variation of the SE is especially large
at the smaller dimension $n=100$.
While the median value agrees well with the theoretical replica limit,
any particular instance of the problem can vary
considerably from that limit.
This is
a significant drawback of the replica method:
at lower dimensions,
the replica method may provide accurate predictions of the median
behavior, but it does not bound the variations from the median.

\begin{figure}
\begin{center}
  \epsfig{figure=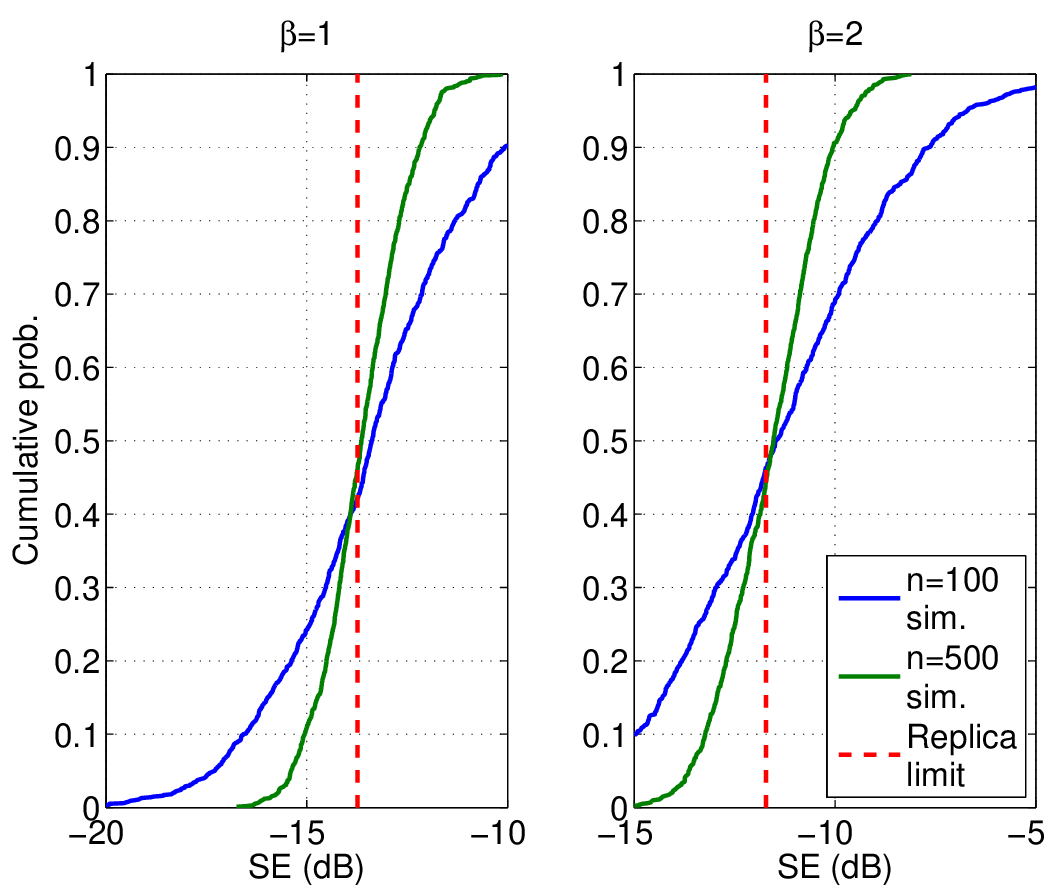,width=3.5in}
\end{center}
\caption{Convergence to the asymptotic limit from the RS PMAP decoupling
property.
Plotted are the CDFs of the SE over 1000 Monte Carlo
trials of the lasso method for the Gaussian mixture
distribution. Details are in the text.  The CDF is shown
for dimensions $n=100$ and $n=500$ and
$\beta=1$ and $2$.  As vector dimension increases, the
performance begins to concentrate around the limit predicted
by the RS PMAP decoupling property.}
\label{fig:gmixCdf}
\end{figure}

As one might expect, at the higher dimension of $n=500$,
the level of variability is reduced and the observed SE
begins to concentrate around
the replica limit.  In his original paper~\cite{Tanaka:02},
Tanaka assumes that concentration of the SE will occur;
he calls this the \emph{self-averaging} assumption.
Fig.~\ref{fig:gmixCdf} provides some empirical evidence that
self-averaging does indeed occur.
However, even at $n=500$, the variation is not insignificant.
As a result, caution should be exercised in using the
replica predictions on particular low-dimensional instances.

\subsection{Discrete Distribution with Dynamic Range}

The RS PMAP decoupling property can also be used to study
the effects of dynamic range in power levels.
To validate the replica analysis with power variations,
we ran the following experiment:
the vector $\xbf$ was generated with i.i.d.\ components
\beq \label{eq:xpow}
    x_j = \sqrt{s_j} \, u_j,
\eeq
where $s_j$ is a random power level and $u_j$
is a discrete three-valued random variable with probability mass function
\beq \label{eq:bpskDist}
     u_j \sim  \left\{ \begin{array}{rl}
        1/\sqrt{\rho}, & \mbox{with prob } = \rho/2; \\
        -1/\sqrt{\rho}, & \mbox{with prob } = \rho/2; \\
        0, & \mbox{with prob } = 1-\rho.
        \end{array} \right.
\eeq
As before, the parameter $\rho$ represents the
sparsity ratio and we chose a value of $\rho =0.1$.
The measurements were generated by
\[
    \ybf = \Abf\xbf + \wbf = \Abf\Sbf^{1/2}\ubf + \wbf,
\]
where $\Abf$ is an i.i.d.\ Gaussian measurement matrix
and $\wbf$ is Gaussian noise.  As in the previous
section, the post-despreading SNR with side-information
was normalized to 10~dB\@.

The factor $s_j$ in (\ref{eq:xpow})
accounts for power variations in $x_j$.
We considered two random distributions for $s_j$:
(a)  $s_j = 1$, so that the power level is constant; and
(b) $s_j$ is uniform (in dB scale) over a 10~dB range with
unit average power.

In case (b), when there is variation in the power levels,
we can analyze two different scenarios for the lasso estimator:
\begin{itemize}
\item \emph{Power variations unknown:}
If the power level $s_j$ in (\ref{eq:xpow}) is unknown
to the estimator, then we can apply the standard
lasso estimator:
\beq
    \xbfhat(\ybf) = \argmin_{\xbf \in \R^n}
        \frac{1}{2\gamma}\|\ybf - \Abf\xbf\|^2_2 + \|\xbf\|_1,
\eeq
which does not need knowledge of the power levels $s_j$.
To analyze the behavior of this estimator
with the replica method, we
simply incorporate variations of both $u_j$ and $s_j$
into the prior of $x_j$
and assume a constant scale factor $s$ in the replica
equations.
\item \emph{Power variations known:}  If the power levels $s_j$
are known, the estimator can compute
\beq
    \ubfhat(\ybf) = \argmin_{\ubf \in \R^n}
        \frac{1}{2\gamma}\|\ybf - \Abf\Sbf^{1/2}\ubf\|^2_2
        + \|\ubf\|_1
\eeq
and then take $\xbfhat = \Sbf^{1/2}\ubfhat$.
This can be analyzed with the replica method by
incorporating the distribution of $s_j$ into the scale factors.
\end{itemize}

Fig.~\ref{fig:bpskPowSim} shows the performance of the lasso
estimator for the different power range scenarios.
As before, for each $\beta$, the figure plots the
median SE over 1000 Monte Carlo simulation trials.
Fig.~\ref{fig:bpskPowSim} also shows the theoretical asymptotic
performance as predicted with the RS PMAP decoupling property.
Simulated values are based on a vector dimension of $n=100$ and
optimal selection of $\gamma$ as described in Section~\ref{sec:gamSel}.

\begin{figure}
\begin{center}
  \epsfig{figure=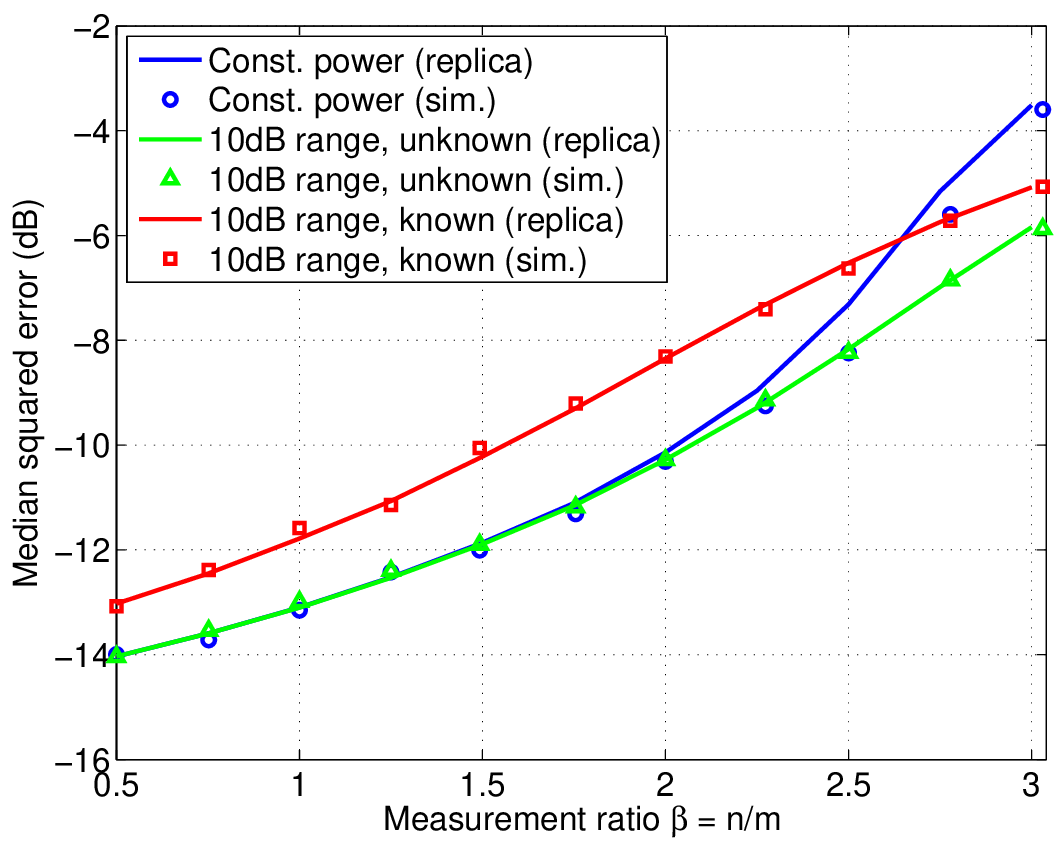,width=3.5in}
\end{center}
\caption{MSE performance prediction by the replica method
of the lasso estimator
with power variations in the components.
Plotted is the median SE of the lasso method in estimating
a discrete-valued distribution.
Three scenarios are considered:  (a) all components have the same power;
(b) the components have a 10~dB range in power that is unknown
to the estimator; and (c) the power range is known to the estimator
and incorporated into the measurement matrix.  Solid lines
represent the asymptotic prediction from the
RS PMAP decoupling property,
and the circles, triangles, and squares show the
median SE over 1000 Monte Carlo simulation.
See text for details.}
\label{fig:bpskPowSim}
\end{figure}

We see that in all three cases (constant power and power variations
unknown and known to the estimator), the replica prediction is
in excellent agreement with the simulated performance.
With one exception, the replica method matches the simulated performance
within 0.2~dB\@.   The one exception is for $\beta = 2.5$ with
constant power, where the replica method underpredicts the median
SE by about 1~dB\@.  A simulation at a higher dimension of $n=500$
(not shown here) reduced this discrepancy to 0.2~dB, suggesting that
the replica method is still asymptotically correct.

We can also observe two interesting
phenomena in Fig.~\ref{fig:bpskPowSim}.
First, the lasso method's performance
with constant power is almost identical to the performance
with unknown power variations for values of $\beta < 2$.
However, at higher values of $\beta$, the power variations
actually \emph{improve} the performance of the lasso method,
even though the average power is the same in both cases.
Wainwright's analysis~\cite{Wainwright:09-lasso}
demonstrated the significance of the minimum component
power in dictating lasso's performance.
The above simulation and the corresponding replica predictions
suggest that dynamic range may also play a role in the
performance of lasso.  That increased dynamic range can
improve the performance of sparse estimation has been
observed for other estimators~\cite{WipfR:06,FletcherRG:09arXiv}.

A second phenomena we see in Fig.~\ref{fig:bpskPowSim}
is that knowing the power variations
and incorporating them into the measurement matrix
can actually degrade the performance of lasso.  Indeed,
knowing the power variations appears to result in a 1 to 2~dB loss
in MSE performance.

Of course, one cannot conclude from this one simulation that
these effects of dynamic range hold more generally.
The study of the effect of dynamic range is interesting and
beyond the scope of this work.
The point is that the replica method provides a simple
analytic method for quantifying the effect of dynamic range
that appears to match actual performance well.

\subsection{Support Recovery with Thresholding}

In estimating vectors with strictly sparse priors,
one important problem is to detect the \emph{locations} of the
nonzero components in the vector $\xbf$.
This problem, sometimes called \emph{support recovery},
arises for example in subset selection in linear regression~\cite{Miller:02},
where finding the support of the vector $\xbf$
corresponds to determining a subset of features with
strong linear influence on some observed data $\ybf$.
Several works have attempted to find conditions under which
the support of a sparse vector $\xbf$ can be fully
detected~\cite{Wainwright:09-lasso,Wainwright:09-ml,FletcherRG:09-IT}
or partially detected~\cite{AkcakayaT:10,Reeves:08,AeronSZ:10}.
Unfortunately, with the exception of~\cite{Wainwright:09-lasso},
the only available results are bounds that are not tight.

One of the uses of RS PMAP decoupling property is
to \emph{exactly} predict the
fraction of support that can be detected correctly.
To see how to predict the support recovery performance,
observe that the decoupling property provides the
asymptotic joint distribution
for the vector $(x_j,s_j,\xhat_j)$, where $x_j$ is the
component of the unknown vector, $s_j$ is the corresponding
scale factor and $\xhat_j$ is the component estimate.
Now, in support recovery, we want to estimate $\theta_j$,
the indicator function that $x_j$ is nonzero
\[
    \theta_j = \left\{ \begin{array}{ll}
        1, & \mbox{if } x_j \neq 0;\\
        0, & \mbox{if } x_j \neq 0.
        \end{array} \right.
\]        
One natural estimate for $\theta_j$ is to compare the magnitude
of the component estimate $\xhat_j$ to some 
scale-dependent threshold $t(s_j)$,
\[
    \thetahat_j = \left\{ \begin{array}{ll}
        1, & \mbox{if } |\xhat_j| > t(s_j);\\
        0, & \mbox{if } |\xhat_j| \leq t(s_j).
        \end{array} \right.
\]
This idea of using thresholding for sparsity detection has been
proposed in~\cite{RauhutSV:08} and~\cite{SaligramaZ:11}.
Using the joint distribution $(x_j,s_j,\xhat_j)$, one can then
compute the probability of sparsity misdetection
\[
    p_{\rm err} = \Pr( \thetahat_j \neq \theta_j).
\]
The probability of error can be minimized over the threshold levels
$t(s)$.

To verify this calculation, we generated random vectors $\xbf$
with $n=100$ i.i.d.\ components
given by (\ref{eq:xpow}) and (\ref{eq:bpskDist}).
We used a constant power ($s_j=1$) and a sparsity fraction of
$\rho = 0.2$.  As before, the observations $\ybf$ were
generated with an i.i.d.\ Gaussian matrix with $\SNR_0$ = 10~dB\@.

\begin{figure}
\begin{center}
  \epsfig{figure=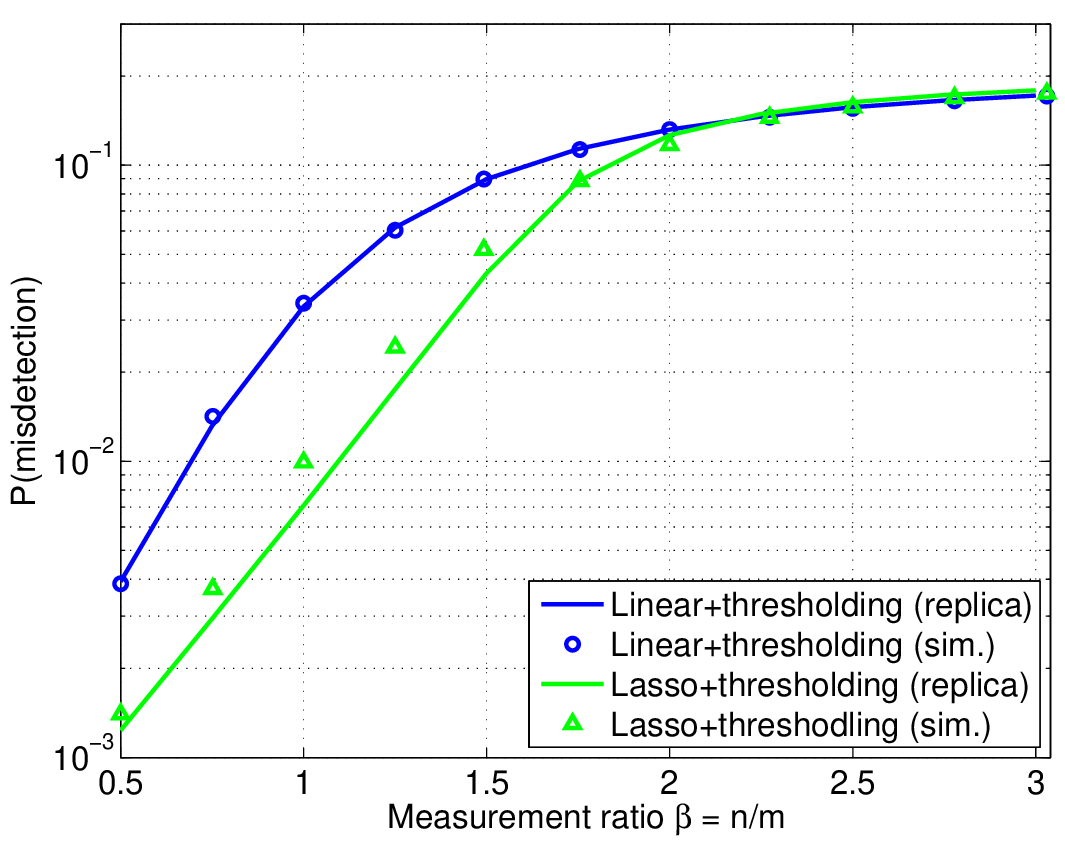,width=3.5in}
\end{center}
\caption{Support recovery performance prediction with
the replica method.  The solid lines show the theoretical probability
of error in sparsity misdetection using linear and lasso estimation
followed by optimal thresholding.
The circles and triangles are the corresponding
mean probabilities of misdetection
over 1000 Monte Carlo trials.}
\label{fig:bpskSparseDetect}
\end{figure}

Fig.~\ref{fig:bpskSparseDetect} compares the theoretical
probability of sparsity misdetection predicted by the replica method
against the actual probability of misdetection based on the average
of 1000 Monte Carlo trials.
We tested two algorithms:  linear MMSE estimation and lasso estimation.
For lasso, the regularization parameter was selected for minimum MMSE
as described in Section~\ref{sec:gamSel}.
The results show a good match.

\section{Conclusions and Future Work}
\label{sec:conclusions}

We have applied the replica method from statistical physics
for computing the asymptotic
performance of postulated MAP estimators of non-Gaussian vectors with
large random linear measurements, under a replica symmetric assumption.
The method can be readily applied to problems in compressed
sensing.  While the method is not theoretically rigorous,
simulations show an excellent ability to predict the performance
for a range of algorithms, performance metrics, and input
distributions.  Indeed, we believe that the replica method
provides the only method to date
for asymptotically-exact prediction of
performance of compressed sensing algorithms that can apply in
a large range of circumstances.

Moreover, we believe that
the availability of a simple scalar model that
exactly characterizes certain sparse estimators
opens up numerous avenues for analysis.
For one thing, it would be useful to see if
the replica analysis of lasso can be used to recover
the scaling laws of Wainwright~\cite{Wainwright:09-lasso}
and Donoho and Tanner~\cite{DonohoT:09}
for support recovery and to extend the latter to the noisy setting.
Also, the best known bounds for MSE performance in sparse estimation
are given by Haupt and Nowak~\cite{HauptN:06} and Cand{\`e}s and
Tao~\cite{CandesT:07}.  Since the replica analysis is asymptotically exact
(subject to various assumptions),
we may be able to obtain much tighter analytic expressions.
In a similar vein, several researchers have attempted
to find information-theoretic lower bounds with optimal
estimation~\cite{SarvothamBB:06-Allerton,FletcherRG:09-IT,Wainwright:09-ml}.
Using the replica analysis of optimal estimators, one may be able
to improve these scaling laws as well.

Finally, there is a well-understood connection between
statistical mechanics and belief propagation-based decoding
of error correcting codes~\cite{Sourlas:89,Montanari:00}.
These connections may suggest improved iterative algorithms
for sparse estimation as well.

\appendices

\section{Review of the Replica Method} \label{sec:replica}
We provide a brief summary of the replica method,
with a focus on some of the details of the
replica symmetric analysis of postulated MMSE estimation in~\cite{Tanaka:02,GuoV:05}.
This review will elucidate some of the key assumptions, notably the assumption of
replica symmetry.  General descriptions of the replica method can be found
in texts such as~\cite{Dotsenko:95,MezardParVir:01,MezardM:09,Talagrand:03}.

The replica method is based on evaluating variants of the
so-called \emph{asymptotic free energy}
\beq \label{eq:freeE}
    {\cal F} = -\lim_{n \arr \infty} \frac{1}{n} \Exp\left[ \log Z(\ybf,\Phi) \right],
\eeq
where  $Z(\ybf,\Phi)$ is the postulated partition function
\[
    Z(\ybf,\Phi) = \Exp\left[ \log p_{\ybf}(\ybf \mid \Phi \MID \ppost, \sigpost^2) \right]
\]
and the expectation in \eqref{eq:freeE} is with respect to the true distribution on $\ybf$.
For the replica PMMSE and PMAP analyses
in~\cite{Tanaka:02,GuoV:05}, various joint moments of the
variables $x_j$ and $\xhat_j$ are computed from certain variants of the free energy, and
the convergence of the joint distribution of $(x_j,\xhat_j)$ is then
analyzed based on these moments.

To evaluate the asymptotic free energy, the replica method uses the
identity that, for any random variable $Z$,
\[
    \Exp [\log Z] = \lim_{\nu \arr 0} \frac{\partial}{\partial \nu} \log \Exp\left[ Z^\nu \right].
\]
Therefore, the asymptotic free energy \eqref{eq:freeE} can be rewritten as
\beq \label{eq:freeEnu}
    {\cal F} = -\lim_{n \arr \infty} \frac{1}{n} \lim_{\nu \arr 0}
        \frac{\partial}{\partial \nu} \log \Exp\left[ Z^\nu(\ybf,\Phi) \right].
\eeq
The ``replica trick" involves evaluating the expectation $\Exp[Z^\nu(\ybf,\Phi)]$
for positive integer values of $\nu$ and then assuming an analytic continuation so that the
resulting expression is valid for real $\nu$ in the vicinity of zero.
For positive integer values of $\nu$, the quantity $Z^\nu(\ybf,\Phi)$ can be written as
\beq \label{eq:Znu}
    Z^\nu(\ybf,\Phi) = \Exp\left[ \prod_{a=1}^\nu p_{\ybf|\xbf}(\ybf \mid \xbf_a, \Phi \MID \ppost, \sigpost^2)
        \right],
\eeq
where the expectation is over independent copies of the vectors $\xbf_a$, $a=1,\ldots,\nu$,
with i.i.d.\ components $x_{aj} \sim \ppost(x_{aj})$.  The motivation for the replica trick is that
the quantity $Z^\nu(\ybf,\Phi)$ in \eqref{eq:Znu}
can be thought of as a partition function of a new system with $\nu$ ``replicated" copies
of the variables $\xbf_a$, $a=1,\ldots,\nu$.
The parameter $\nu$ is called the replica number.

The replicated system is relatively easy to analyze.
Specifically, to evaluate $\Exp[Z^\nu(\ybf,\Phi)]$, the replica analysis in~\cite{Tanaka:02,GuoV:05}
first assumes a \emph{self-averaging} property that essentially assumes that the variations
in $Z^\nu(\ybf,\Phi)$ due to randomness of the measurement matrix $\Phi$ vanish
in the limit as $n \arr \infty$.  Although a large number of statistical physics
quantities exhibit such self-averaging, the self-averaging of the relevant
quantities for the general PMMSE and PMAP analyses
has not been rigorously established.  Following~\cite{Tanaka:02,GuoV:05},
self-averaging in this work is thus simply assumed.

Under the self-averaging assumption, the expectation in \eqref{eq:Znu}
is evaluated in~\cite{GuoV:05}
by first conditioning on the $(\nu+1)$-by-$(\nu+1)$ correlation matrix $\Qbf = (1/n)\Xbf^T\Xbf$, where $\Xbf$ is the $n$-by-$(\nu+1)$ matrix
\[
    \Xbf = [\xbf \;\; \xbf_1 \;\; \ldots \;\; \xbf_\nu],
\]
with $\xbf$ having i.i.d.\ components according to the true distribution $x_j \sim p_0(x_j)$
and the vectors $\xbf_a$ being independent with i.i.d.\ components
following the postulated distribution $x_{aj} \sim \ppost(x_{aj})$.
The conditioning on $\Qbf$ reduces the expectation in \eqref{eq:Znu} to an integral of the form
\beqa
   \lefteqn{ \frac{1}{n}\Exp[Z^\nu(\ybf,\Phi)]} \nonumber \\
    &=& \frac{1}{n}\log\int \exp\left( \frac{n}{\beta} G^{(\nu)}(\Qbf) \right) \mu^{(\nu)}_n(d\Qbf) + O\left(\frac{1}{n}\right), \label{eq:EZnuInt}
\eeqa
where $G^{(\nu)}(\Qbf)$ is some function of the correlation matrix $\Qbf$ and
$\mu^{(\nu)}_n(\Qbf)$ is a probability measure on $\Qbf$.  It is then argued that
the measures $\mu^{\nu}_n(\Qbf)$ satisfy a large deviations property with some rate function
$I^{\nu}(\Qbf)$.
Then, using standard large deviations arguments as in~\cite{DemboZ:98}, the asymptotic
value of the expectation in \eqref{eq:EZnuInt} reduces to a maximization of the form
\beq
   \lim_{n \arr \infty} \Exp[Z^\nu(\ybf,\Phi)]
    = \sup_{\Qbf} \left[ \frac{1}{\beta}G^{(\nu)}(\Qbf) - I^{\nu}(\Qbf)\right],
    \label{eq:EZnuMax}
\eeq
where the supremum is over the set of covariance matrices $\Qbf$.
The correlation matrix $\Qbf$ plays a similar role as the so-called overlap
matrix in replica analyses of systems with discrete energy states~\cite{MezardM:09}.

The maximization in \eqref{eq:EZnuMax} over all covariance matrices is, in general,
difficult to perform.  The key replica symmetry (RS) assumption used in~\cite{Tanaka:02}
and~\cite{GuoV:05}, and hence implicitly used in this paper,
is that the maxima are achieved with matrices $\Qbf$ that are symmetric with respect
to permutations of the $\nu$ replica indices.  Under this symmetry
assumption, the space of covariance matrices is greatly reduced and the maxima \eqref{eq:EZnuMax}
can be explicitly evaluated.

The RS assumption is not always valid, even though the system itself is symmetric
across the replica indices.
For example, it is well-known
that even in the simple random energy model, the corresponding maximization
may not satisfy the RS assumption, particularly at low temperatures~\cite{MezardM:09};
see, also~\cite{NishimoriSher:01}.  More recently, it has been shown that replica symmetry
may also be broken when analyzing lattice precoding for the Gaussian broadcast
channel~\cite{ZaiMuMM:10arXiv}.

In absence of replica symmetry, one must search through a larger class of overlap or covariance matrices $\Qbf$.
One such hierarchy of classes of matrices that is often used is described by the
so-called $k$-step
replica symmetry breaking (RSB) matrices, a description of which can
be found in various texts~\cite{Dotsenko:95,MezardParVir:01,MezardM:09,Talagrand:03}.
In this regard, the analysis in this paper, which assumes replica symmetry, is thus only a 0-step RSB analysis or 0th-level prediction.

It is difficult to derive general tests for whether the RS assumption is
rigorously valid.
Tanaka's original work~\cite{Tanaka:02} derived an explicit condition for the validity
of the RS assumption based on the Almeida--Thouless (AT) 
test~\cite{AlmeidaThou:78} that
considers asymmetric perturbations around the RS saddle points of the
maximization \eqref{eq:EZnuMax}.
For the case of binary signals, the condition has a simple formula with the SNR
and measurement ratio $\beta$.
In~\cite{KabashimaWT:09arXiv}, an AT condition was also derived for RS analysis of 
$\ell_p$ reconstruction with Bernoulli--Gaussian priors.
Unfortunately, no equivalent condition has been derived for the general scenario considered in \GV's extension in~\cite{GuoV:05}.

In this work, we simply assume replica symmetry for the all values
of the scale factor $u > 0$.
Since $u$ is analogous to inverse temperature~\cite{Merhav:09arXiv}
and validity of the RS assumption is more problematic at low temperatures,
one must be cautious in interpreting our results.
As stated in Section~\ref{sec:intro}, where possible we have confirmed
the replica predictions by comparison to numerical experiments.  However, such experiments
are limited to computable estimators such as LASSO and linear estimators.  For other
estimators, such as the true MMSE or zero norm-regularized estimator, the RS assumption may very well
not hold.

\section{Proof Overview}
\label{sec:proof-overview}
Fix a deterministic sequence of indices $j = j(n)$
with $j(n) \in \{1,\ldots,n\}$.
For each $n$, define the random vector triples
\begin{subequations}
\beqa
    \thetau(n) &=& (x_j(n),s_j(n),\xhat^u_j(n)), \label{eq:thetau} \\
    \thetaMap(n) &=& (x_j(n),s_j(n),\xmap_j(n)),  \label{eq:thetaMap}
\eeqa
\end{subequations}
where $x_j(n)$, $\xhat^u_j(n)$, and $\xmap_j(n)$ are
the $j$th components of the random vectors $\xbf$,
$\xbfhat^u(\ybf)$, and $\xbfmap(\ybf)$,
and $s_j(n)$ is the $j$th diagonal entry of the matrix $\Sbf$.

For each $u$, we will use the notation
\beq \label{eq:xscau}
     \xscau(z \MID \lambda) = \xscaMmse(z \MID p_u,\lambda/u),
\eeq
where $p_u$ is defined in (\ref{eq:pu}) and
$\xscaMmse(z \MID \cdot,\cdot)$ is defined in (\ref{eq:xmmseScalar}).
Also, for every $\sigma$ and $\gamma > 0$
define the random vectors
\begin{subequations}
\beqa
    \thetaScau(\sigma^2,\gamma) &=& (x,s,\xscau(z \MID \gamma/s)),
        \label{eq:thetaScau}  \\
    \thetaScaMap(\sigma^2,\gamma) &=& (x,s,\xscaMap(z \MID \gamma/s)),
        \label{eq:thetaScaMap}
\eeqa
\end{subequations}
where $x$ and $s$ are independent with $x \sim p_0(x)$,
$s \sim p_S(s)$, and
\beq \label{eq:zxvsca}
    z = x + \frac{\sigma}{\sqrt{s}} v
\eeq
with $v \sim {\cal N}(0,1)$.

Now, to prove the RS PMAP decoupling property,
we need to show that (under the stated assumptions)
\beq \label{eq:thetaMapLim}
    \lim_{n \arr \infty} \thetaMap(n) = \thetaScaMap(\sigEffMap^2,\gamma_p),
\eeq
where the limit is in distribution and
the noise levels $\sigEffMap^2$ and
$\gamma_p$ satisfy part (b) of the claim.
This desired equivalence is depicted in the right column of
Fig.~\ref{fig:cs_commut}.

\begin{figure}
 \centering
  \setlength{\unitlength}{0.22in}
  \begin{picture}(6.0,5.0)(0.0,0.0)
   \put(5.2,1.0){\line(0,1){2.5}}
   \thicklines
   \put(-2.5,0.0){\small $\xscau(z \MID \gamma/s)$}
   \put(5.0,0.0){\small $\xscaMap(z \MID \gamma/s)$}
   \put(-0.5,4.0){\small $\xhat^u_j(n)$}
   \put(5.0,4.0){\small $\xmap_j(n)$}
   \put(1.2,0.1){\line(1,0){3.5}}
   \put(1.2,4.1){\line(1,0){3.5}}
   \put(0.4,1.0){\line(0,1){2.5}}
   \put(1.7,0.4){\footnotesize Appendix~\ref{sec:xhatScalarConv}}
   \put(1.7,4.4){\footnotesize Appendix~\ref{sec:xhatConv}}
   \put(-2.4,2.7){\footnotesize RS PMMSE}
   \put(-2.4,2.1){\footnotesize decoupling}
   \put(-2.4,1.5){\footnotesize property}
   \put(5.6,2.7){\footnotesize RS PMAP}
   \put(5.6,2.1){\footnotesize decoupling}
   \put(5.6,1.5){\footnotesize property}
  \end{picture}
  \caption{The RS PMAP decoupling property
    of this paper relates $\xmap_j(n)$ to
    $\xscaMap(z \MID \gamma/s)$ through an $n \rightarrow \infty$ limit.
    We establish the equivalence of its validity to the validity of the
    RS PMMSE decoupling property~\cite{GuoV:05} through two
    $u \rightarrow \infty$ limits:
    Appendix~\ref{sec:xhatConv} relates $\xhat^u_j(n)$ and $\xmap_j(n)$;
    Appendix~\ref{sec:xhatScalarConv} relates
    $\xscau(z \MID \gamma/s)$ and $\xscaMap(z \MID \gamma/s)$.}
 \label{fig:cs_commut}
\end{figure}

To show this limit we first observe that
under Assumption~\ref{as:replicaMMSE}, for $u$ sufficiently large,
the postulated prior distribution $p_u(x)$ in (\ref{eq:pu})
and noise level $\sigma^2_u$ in (\ref{eq:sigu}) are assumed to satisfy
the RS PMMSE decoupling property.
This
implies that
\beqa
    \lefteqn{ \lim_{n \arr \infty}
        (x_j(n),s_j(n),\xhat^u_j(n))} \nonumber \\
        &=&
        (x,s,\xscaMmse(z \MID p_u, \sigPEff^2(u)/s)), \label{eq:replicaLim}
\eeqa
where the limit is in distribution, $x \sim p_0(x)$, $s \sim p_S(s)$, and
\[
    z = x + \frac{\sigEff(u)}{\sqrt{s}}v,
\qquad
    v \sim {\cal N}(0,1).
\]
Using the notation above,
we can rewrite this limit as
\beqa
    \lim_{n \arr \infty} \thetau(n)
     &\stackrel{(a)}{=}& \lim_{n \arr \infty}
        (x_j(n),s_j(n),\xhat^u_j(n)) \nonumber \\
     &\stackrel{(b)}{=}&
        (x,s,\xscaMmse(z \MID p_u, \sigPEff^2(u)/s)) \nonumber \\
     &\stackrel{(c)}{=}&
        (x,s,\xscau(z \MID u\sigPEff^2(u)/s)) \nonumber \\
     &\stackrel{(d)}{=}& \thetaScau(\sigEff^2(u),u\sigPEff^2(u)),
     \label{eq:thetauLim}
\eeqa
where all the limits are in distribution and
(a) follows from the definition of $\thetau(n)$ in \eqref{eq:thetau};
(b) follows from \eqref{eq:replicaLim};
(c) follows from \eqref{eq:xscau};
and (d) follows from \eqref{eq:thetaScau}.
This equivalence is depicted in the left column of
Fig.~\ref{fig:cs_commut}.

The key part of the proof is to use
a large deviations argument to show that for almost all $\ybf$,
\[
    \lim_{u \arr \infty} \xbfhat^u(\ybf) = \xbfmap(\ybf).
\]
This limit in turn shows (see Lemma~\ref{lem:thetaLimn}
of Appendix~\ref{sec:xhatConv})
that for every $n$,
\beq \label{eq:thetaLimn}
    \lim_{u \arr \infty} \thetau(n) = \thetaMap(n)
\eeq
almost surely and in distribution.
A large deviation argument is also used to show that for
every $\lambda$ and almost all $z$,
\[
    \lim_{u \arr \infty} \xscau(z \MID \lambda) = \xscaMap(z \MID \lambda).
\]
Combining this with the limits in Assumption~\ref{as:limitExistence},
we will see (see Lemma~\ref{lem:thetaScaLim}
of Appendix~\ref{sec:xhatScalarConv}) that
\beqa
    \lefteqn{\lim_{u \arr \infty} \thetaScau(\sigEff^2(u),u\sigPEff^2(u)) }
        \nonumber\\
    &=& \thetaScaMap(\sigEffMap^2,\gamma_p)
     \label{eq:thetaLimSca}
\eeqa
almost surely and in distribution.

The equivalences (\ref{eq:thetaLimn}) and (\ref{eq:thetaLimSca}) are shown
as rows in Fig.~\ref{fig:cs_commut}.
As shown, they combine with the RS PMMSE decoupling property to prove
the RS PMAP decoupling property.
In equations instead of diagrammatic form,
the combination of limits is
\beqan
    \lim_{n \arr \infty} \thetaMap(n)
    &\stackrel{(a)}{=}& \lim_{n \arr \infty} \lim_{u \arr \infty}\thetau(n) \\
    &\stackrel{(b)}{=}& \lim_{u \arr \infty} \lim_{n \arr \infty}\thetau(n) \\
    &\stackrel{(c)}{=}& \lim_{u \arr \infty}
        \thetaScau(\sigEff^2(u),u\sigPEff^2(u)) \\
    &\stackrel{(d)}{=}& \thetaScaMap(\sigEffMap^2,\gamma_p)
\eeqan
where all the limits are in distribution and
(a) follows from (\ref{eq:thetaLimn});
(b) follows from Assumption~\ref{as:limitExchange};
(c) follows from (\ref{eq:thetauLim}); and
(d) follows from (\ref{eq:thetaLimSca}).
This proves (\ref{eq:thetaMapLim}) and part (a) of the claim.

Therefore,
to prove the claim we prove
the limit (\ref{eq:thetaLimn}) in Appendix~\ref{sec:xhatConv}
and
the limit (\ref{eq:thetaLimSca}) in Appendix~\ref{sec:xhatScalarConv}
and show that the limiting noise levels $\sigEffMap^2$ and $\gamma_p$
satisfy the fixed-point equations in part (b) of the claim
in Appendix~\ref{sec:fixed-point}.
Before these results are given, we review in Appendix~\ref{sec:deviations}
some requisite results from large deviations theory.

\section{Large Deviations Results}
\label{sec:deviations}

The above proof overview shows that the RS predictions for the postulated MAP estimate are
calculated by taking the limit as $u \arr \infty$ of the RS predictions of the postulated
MMSE estimates.
These limits are evaluated with large deviations theory and we begin,
in this appendix, by deriving some simple modifications of standard large deviations
results.  The main result we need is Laplace's principle as described in~\cite{DemboZ:98}:

\begin{lemma}[Laplace's Principle]  \label{lem:laplace}
Let $\varphi(\xbf)$ be any measurable function defined on some
measurable subset ${\cal D} \subseteq \R^n$ such that
\beq \label{eq:varphiBnd}
    \int_{\xbf \in {\cal D}} \exp(-\varphi(\xbf)) \, d\xbf < \infty.
\eeq
Then
\[
    \lim_{u \arr \infty} \frac{1}{u} \log \int_{\xbf \in {\cal D}}
         \exp(-u\varphi(\xbf)) \, d\xbf
        = -\essinf_{\xbf \in {\cal D}} \varphi(\xbf).
\]
\end{lemma}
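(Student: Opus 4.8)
\emph{Proof plan.} Write $m = \essinf_{\xbf \in {\cal D}} \varphi(\xbf)$ and $I(u) = \int_{\xbf \in {\cal D}} \exp(-u\varphi(\xbf))\,d\xbf$. The plan is to sandwich $\frac1u\log I(u)$ between quantities that both tend to $-m$, by proving $\limsup_{u\arr\infty}\frac1u\log I(u) \le -m$ and $\liminf_{u\arr\infty}\frac1u\log I(u)\ge -m$ separately. I would carry out the argument assuming $m$ is finite; the cases $m = \pm\infty$ reduce to trivial variants noted at the end.

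For the upper bound I would fix $u \ge 1$, factor $\exp(-u\varphi(\xbf)) = \exp(-(u-1)\varphi(\xbf))\exp(-\varphi(\xbf))$, and use $\varphi(\xbf) \ge m$ almost everywhere on ${\cal D}$ to bound the first factor by $\exp(-(u-1)m)$. Pulling this constant out leaves $\int_{\cal D}\exp(-\varphi(\xbf))\,d\xbf$, which is finite (and positive) by~(\ref{eq:varphiBnd}); hence $I(u) \le C\exp(-(u-1)m)$ with $C = \int_{\cal D}\exp(-\varphi(\xbf))\,d\xbf \in (0,\infty)$. Taking $\frac1u\log$ and sending $u\arr\infty$ kills the term $\frac1u\log C$ and sends $\frac{u-1}{u}m$ to $m$, giving the claimed $\limsup$ bound.

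For the lower bound, for each $\varepsilon > 0$ I would look at the strict sublevel set $A_\varepsilon = \{\xbf\in{\cal D} : \varphi(\xbf) < m+\varepsilon\}$, which has strictly positive Lebesgue measure by the very definition of the essential infimum. Using $\sigma$-finiteness of Lebesgue measure I would pass to a measurable subset $B\subseteq A_\varepsilon$ with $0 < |B| < \infty$, on which $\exp(-u\varphi(\xbf)) \ge \exp(-u(m+\varepsilon))$, so that $I(u) \ge |B|\exp(-u(m+\varepsilon))$. Then $\frac1u\log I(u) \ge \frac1u\log|B| - (m+\varepsilon) \arr -(m+\varepsilon)$, and letting $\varepsilon\downarrow 0$ finishes this half. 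Combining the two halves yields $\lim_{u\arr\infty}\frac1u\log I(u) = -m$.

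I do not expect a genuine obstacle: this is the standard two-sided estimate underlying Laplace's principle. The one point demanding care is that it is the \emph{essential} infimum that appears, so the lower bound cannot be built from a single near-minimizing point but must rest on a set of positive measure, and that set must be shrunk to finite measure before the volume factor $|B|$ can be read off. For completeness I would dispatch the degenerate cases: if $m=-\infty$, repeating the lower-bound argument with $m+\varepsilon$ replaced by an arbitrary constant shows $\frac1u\log I(u)\arr +\infty = -m$; if $m=+\infty$, then $\varphi=+\infty$ almost everywhere, so $I(u)=0$ for all $u>0$ and $\frac1u\log I(u) = -\infty = -m$ under the usual conventions.
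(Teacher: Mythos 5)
Your proof is correct. Note that the paper does not actually prove Lemma~\ref{lem:laplace}; it simply quotes it as a standard result from \cite{DemboZ:98}, so there is no in-paper argument to compare against. Your two-sided sandwich --- the upper bound via $\exp(-u\varphi)\le e^{-(u-1)m}\exp(-\varphi)$ combined with the integrability hypothesis (\ref{eq:varphiBnd}), and the lower bound via a finite-positive-measure subset of the sublevel set $\{\varphi<m+\varepsilon\}$ --- is exactly the standard proof of this fact, and you correctly handle the one delicate point, namely that the essential infimum forces the lower bound to rest on a set of positive measure rather than a near-minimizing point, as well as the degenerate cases $m=\pm\infty$.
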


Given $\varphi(\xbf)$ as in Lemma~\ref{lem:laplace},
define the probability distribution
\beq \label{eq:qu}
    q_u(\xbf) = \left[ \int_{\xbf\in {\cal D}}
        \exp(-u\varphi(\xbf)) \, d\xbf\right]^{-1} \exp(-u\varphi(\xbf)).
\eeq
We want to evaluate expectations of the form
\[
    \lim_{u \arr \infty} \int_{\xbf\in {\cal D}} g(u,\xbf)q_u(\xbf) \, d\xbf
\]
for some real-valued measurable function $g(u,\xbf)$.
The following lemma
shows that this integral is described by the behavior of $g(u,\xbf)$
in a neighborhood of the minimizer of $\varphi(\xbf)$.

\begin{lemma} \label{lem:largeDev}
Suppose that $\varphi(\xbf)$ and $g(u,\xbf)$ are real-valued
measurable functions such that:
\begin{itemize}
\item[(a)] The function $\varphi(\xbf)$ satisfies (\ref{eq:varphiBnd})
and has a unique essential minimizer $\xbfhat \in \R^n$
such that for every open neighborhood $U$ of $\xbfhat$,
\[
    \inf_{\xbf \not \in U} \varphi(\xbf) > \varphi(\xbfhat).
\]
\item[(b)] The function $g(u,\xbf)$ is positive and satisfies
\[
    \limsup_{u \arr \infty} \sup_{\xbf \not \in U}
        \frac{\log g(u,\xbf) }{u(\varphi(\xbf) - \varphi(\xbfhat))} \leq 0
\]
for every open neighborhood $U$ of $\xbfhat$.
\item[(c)] There exists a constant $g_\infty$ such that
for every $\epsilon > 0$, there exists a neighborhood
$U$ of $\xhat$ such that
\[
    \limsup_{u \arr \infty}
    \left| \int_U g(u,\xbf)q_u(\xbf) \, dx -g_\infty\right| \leq \epsilon.
\]
\end{itemize}
Then,
\[
    \lim_{u \arr \infty} \int g(u,\xbf)q_u(\xbf) \, d\xbf = g_\infty.
\]
\end{lemma}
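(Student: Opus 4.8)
The plan is to split the integral $\int g(u,\xbf)q_u(\xbf)\,d\xbf$ into a contribution from a small open neighborhood $U$ of $\xbfhat$, which is handled directly by hypothesis (c), and a contribution from ${\cal D}\setminus U$, which I will show vanishes as $u\arr\infty$. Write $Z_u=\int_{\xbf\in{\cal D}}\exp(-u\varphi(\xbf))\,d\xbf$ for the normalizing constant in (\ref{eq:qu}), so that $q_u(\xbf)=Z_u^{-1}\exp(-u\varphi(\xbf))$, and fix an arbitrary open neighborhood $U$ of $\xbfhat$. Granting the tail estimate
\[
    \lim_{u\arr\infty}\int_{\xbf\in{\cal D}\setminus U}g(u,\xbf)q_u(\xbf)\,d\xbf=0,
\]
the lemma follows: given $\epsilon>0$, use (c) to pick $U$ with $\limsup_{u\arr\infty}\bigl|\int_{U}g(u,\xbf)q_u(\xbf)\,dx-g_\infty\bigr|\leq\epsilon$; since $g>0$, the triangle inequality then gives $\limsup_{u\arr\infty}\bigl|\int g(u,\xbf)q_u(\xbf)\,d\xbf-g_\infty\bigr|\leq\epsilon$, and $\epsilon$ was arbitrary.

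To prove the tail estimate, set $\Delta_U=\inf_{\xbf\notin U}\varphi(\xbf)-\varphi(\xbfhat)$, which is strictly positive by hypothesis (a), so that $\varphi(\xbf)-\varphi(\xbfhat)\geq\Delta_U$ throughout ${\cal D}\setminus U$. Fix $\delta\in(0,1)$. By hypothesis (b) there is $u_0$ with $g(u,\xbf)\leq\exp\!\bigl(\delta u(\varphi(\xbf)-\varphi(\xbfhat))\bigr)$ for all $u\geq u_0$ and all $\xbf\notin U$ --- the strict positivity of $\varphi(\xbf)-\varphi(\xbfhat)$ there is exactly what lets me clear the denominator in (b) without reversing the inequality. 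Taking $\varphi(\xbfhat)=\essinf_{\xbf\in{\cal D}}\varphi(\xbf)$ (the intended meaning of ``essential minimizer''), for $u$ large enough that $(1-\delta)u\geq1$ I can peel off one factor $e^{-\varphi}$ and use $\varphi\geq\varphi(\xbfhat)+\Delta_U$ on ${\cal D}\setminus U$ to get
\[
    \int_{{\cal D}\setminus U}\! g(u,\xbf)e^{-u\varphi(\xbf)}\,d\xbf
    \;\leq\; c\,e^{-u\varphi(\xbfhat)}\,e^{-(1-\delta)u\Delta_U},
\]
where $c=e^{\varphi(\xbfhat)+\Delta_U}\int_{\xbf\in{\cal D}}e^{-\varphi(\xbf)}\,d\xbf<\infty$ by (\ref{eq:varphiBnd}). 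On the other hand, Laplace's principle (Lemma~\ref{lem:laplace}) gives $\tfrac1u\log Z_u\arr-\varphi(\xbfhat)$, so for any $\epsilon'>0$ and all large $u$ we have $Z_u\geq e^{-u(\varphi(\xbfhat)+\epsilon')}$. Dividing,
\[
    \int_{{\cal D}\setminus U}g(u,\xbf)q_u(\xbf)\,d\xbf\;\leq\;c\,e^{-u\bigl((1-\delta)\Delta_U-\epsilon'\bigr)},
\]
and choosing, say, $\delta=\half$ and $\epsilon'=\Delta_U/4$ makes the exponent equal $-u\Delta_U/4\arr-\infty$, which establishes the tail estimate.

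I expect the only real obstacle to be bookkeeping: arranging the exponential orders so that the gain $e^{-(1-\delta)u\Delta_U}$ from being bounded away from the minimizer dominates both the loss from the crude bound on $g$ in (b) (the $\delta$ term) and the loss in the Laplace lower bound for $Z_u$ (the $\epsilon'$ term). There is ample slack, since $\delta$ and $\epsilon'$ may be taken as small as desired relative to the fixed quantity $\Delta_U$; the one thing to watch is the order of quantifiers --- $U$ is fixed first (via (c)), then $\delta$ and $\epsilon'$, and only then is $u$ sent to infinity --- which is consistent throughout. The single modeling point worth stating explicitly is the convention $\varphi(\xbfhat)=\essinf_{\xbf\in{\cal D}}\varphi(\xbf)$, which is precisely what is needed both to invoke Lemma~\ref{lem:laplace} with limit $-\varphi(\xbfhat)$ and to peel the extra $e^{-\varphi}$ off the tail integral.
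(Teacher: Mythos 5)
Your proof is correct and takes essentially the same route as the paper's: reduce via (c) and positivity of $g$ to showing the tail integral outside a neighborhood of $\xbfhat$ vanishes, use (b) to absorb $g$ at exponential cost $\delta$, exploit the strict gap from (a), and lower-bound the normalizing constant through Laplace's principle. The only minor divergence is that the paper bounds the tail numerator with a second application of Laplace's principle, whereas you bound it nonasymptotically by peeling off one factor of $e^{-\varphi(\xbf)}$ and invoking the integrability condition (\ref{eq:varphiBnd}); both arguments are valid and yield the same exponential decay.
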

\begin{IEEEproof} Due to item (c), we simply have to show
that for any open neighborhood $U$ of $\xbfhat$,
\[
    \limsup_{u \arr \infty} \int_{\xbf \in U^c}
     g(u,\xbf)q_u(\xbf) \, d\xbf = 0.
\]
To this end, let
\[
    Z(u) = \log \int_{\xbf \in U^c} g(u,\xbf)q_u(\xbf) \, d\xbf.
\]
It suffices to show that $Z(u) \rightarrow -\infty$ as $u \rightarrow \infty$.
Using the definition of $q_u(\xbf)$ in (\ref{eq:qu}),
it is easy to check that
\beq \label{eq:zdiff}
    Z(u) = Z_1(u) - Z_2(u),
\eeq
where
\beqan
    Z_1(u) &=& \log \int_{\xbf \in U^c}
        g(u,\xbf)\exp\left(-u(\varphi(\xbf)-\varphi(\xbfhat))\right) \, d\xbf, \\
    Z_2(u) &=& \log \int_{\xbf \in {\cal D} }
        \exp\left(-u(\varphi(\xbf)-\varphi(\xbfhat))\right) \, d\xbf.
\eeqan
Now, let
\[
    M = \essinf_{\xbf \in U^c} \varphi(\xbf) - \varphi(\xbfhat).
\]
By item (a), $M > 0$.
Therefore, we can find a $\delta > 0$ such that
\beq \label{eq:delmdef}
    -M(1-\delta) + 3\delta < 0.
\eeq
Now, from item (b), there exists a $u_0$ such that for all $u > u_0$,
\beqan
    Z_1(u) &\leq& \log \int_{\xbf \in U^c}
        \exp(-u(1-\delta)(\varphi(\xbf)-\varphi(\xbfhat))) \, d\xbf.
\eeqan
By Laplace's principle, we can find a $u_1$ such that
for all $u > u_1$,
\beqa
    Z_1(u) &\leq& u\left[\delta -
    \inf_{\xbf \in U^c}(1-\delta)
    (\varphi(\xbf) - \varphi(\xbfhat))\right]  \nonumber \\
     &=& u(-M(1-\delta) + \delta). \label{eq:zbnda}
\eeqa
Also, since $\xbfhat$ is an essential minimizer of $\varphi(\xbf)$,
\[
    \essinf_{\xbf \in {\cal D}} \varphi(\xbf) = \varphi(\xbfhat).
\]
Therefore, by Laplace's principle, there exists a $u_2$ such that
for $u > u_2$,
\beq \label{eq:zbndb}
    Z_2(u) \geq u\left[-\delta -
    \essinf_{\xbf \in {\cal D}}
    (\varphi(\xbf) - \varphi(\xbfhat))\right]  \\
     = -u\delta.
\eeq
Substituting (\ref{eq:zbnda}) and (\ref{eq:zbndb}) into
(\ref{eq:zdiff}) we see that for $u$ sufficiently large,
\[
    Z(u) \leq
        u(-M(1-\delta) + \delta) + u\delta < -u\delta,
\]
where the last inequality follows from (\ref{eq:delmdef}).
This shows $Z(u) \arr -\infty$ as $u \arr \infty$
and the proof is complete.
\end{IEEEproof}

One simple application of this lemma is as follows:

\begin{lemma} \label{lem:largeDevPt}
Let $\varphi(\xbf)$ and $h(\xbf)$ be real-valued measurable functions
satisfying the following:
\begin{itemize}
\item[(a)] The function $\varphi(\xbf)$ has a unique essential
minimizer $\xbfhat$
such that for every open neighborhood $U$ of $\xbfhat$,
\[
    \inf_{\xbf \not \in U} \varphi(\xbf) > \varphi(\xbfhat).
\]
\item[(b)] The function $h(\xbf)$ is continuous at $\xbfhat$.
\item[(c)] There exists a $c > 0$ and compact set $K$
such that for all $\xbf \not \in K$,
\beq \label{eq:varphilogh}
    \varphi(\xbf) \geq c\log|h(\xbf)|.
\eeq
\end{itemize}
Then,
\[
    \lim_{u \arr \infty} \int h(\xbf)q_u(\xbf) \, d\xbf = h(\xbfhat).
\]
\end{lemma}
\begin{IEEEproof}
We will apply Lemma~\ref{lem:largeDev} with
$g(u,\xbf) = |h(\xbf) - h(\xbfhat)|$ and
$g_\infty = 0$.
Item (a) of this lemma shows that $\varphi(\xbf)$
satisfies item (a) in Lemma~\ref{lem:largeDev}.

To verify that item (b) of Lemma~\ref{lem:largeDev} holds,
we first claim there exists a constant $M > 0$ such that for all
$\xbf$,
\beq \label{eq:varphihm}
    \varphi(\xbf)-\varphi(\xbfhat) \geq M\log|h(\xbf)-h(\xbfhat)|.
\eeq
We find a valid constant $M$ for three regions.
First, let $U$ be the set of $\xbf$ such that $|h(\xbf)-h(\xbfhat)| < 1$.
Since $h(\xbf)$ is continuous in $\xbf$,
$U$ is an open neighborhood of $\xbfhat$.
Also, for $\xbf \in U$, the right hand side of \eqref{eq:varphihm} is negative.
Since the left-hand side of \eqref{eq:varphihm} is positive,
the inequality will be satisfied in $U$ for any $M > 0$.

Next, consider the set $K_1 = K \backslash U$ where $K$ is the compact set
in item (c) of this lemma.  Since $K$ is compact and $h(\xbf)$ is continuous,
there exists a $c_1 > 0$ such that $\log|h(\xbf)-h(\xbfhat)| < c_1$ for all $\xbf \in K$.
Also, since $U$ is an open neighborhood of $\xbfhat$,
by item (a), there exists a $c_2 > 0$ such that $\varphi(\xbf)-\varphi(\xbfhat) \geq c_2$
for all $\xbf \not \in U$.  Hence, the inequality \eqref{eq:varphihm}
is satisfied with $M = c_2/c_1$ in the set $K_1$.

Finally, consider the set $K^c$. In this set,  \eqref{eq:varphilogh} is satisfied for some $c>0$.
Combining this inequality with the fact that $\varphi(\xbf)-\varphi(\xbfhat)\geq c_2$ for some $c_2>0$,
one can show that \eqref{eq:varphihm} also holds for some $M > 0$.  Hence,
for each of the regions $U$, $K \backslash U$ and $K^c$,
\eqref{eq:varphihm} is satisfied for some $M > 0$.  Taking the maximum of the three values of $M$,
one can assume \eqref{eq:varphihm} for all $\xbf$.

Applying \eqref{eq:varphihm}, we obtain
\[
    \frac{\log g(u,\xbf)}{\varphi(\xbf) - \varphi(\xbfhat)}
    = \frac{\log|h(\xbf) - h(\xbfhat)|}
    {\varphi(\xbf) - \varphi(\xbfhat)} \leq \frac{1}{M}.
\]
Hence, for any open neighborhood $U$ of $\xbfhat$,
\[
    \limsup_{u \arr \infty} \sup_{\xbf \not \in U}
        \frac{\log g(u,\xbf) }{u(\varphi(\xbf) - \varphi(\xbfhat))}
        \leq \lim_{u \arr \infty} \frac{1}{uM} = 0.
\]

Now let us verify that item (c) of Lemma~\ref{lem:largeDev} holds.
Let $\epsilon > 0$.
Since $h(\xbf)$ is continuous at $\xbf$, there exists an open neighborhood
$U$ of $\xbf$ such that $g(u,\xbf) < \epsilon$ for all $\xbf \in U$ and $u$.
This implies that for all $u$,
\[
    \int_U g(u,\xbf)q_u(\xbf) \, d\xbf
      < \epsilon \int_U q_u(\xbf) \, d\xbf
      \leq \epsilon,
\]
which shows that $g(u,\xbf)$ satisfies item (c) of Lemma~\ref{lem:largeDev}.
Thus
\beqan
    \lefteqn { \left| \int h(\xbf)q_u(\xbf) \, d\xbf - h(\xbfhat)\right|  } \\
    &=&   \left| \int (h(\xbf)-h(\xbfhat))q_u(\xbf) \, d\xbf \right|  \\
    &\leq&   \int |h(\xbf)-h(\xbfhat)|q_u(\xbf) \, d\xbf   \\
    &\leq&   \int g(u,\xbf)q_u(\xbf) \, d\xbf  \arr 0,
\eeqan
where the last limit is as $u \arr \infty$ and follows from
Lemma~\ref{lem:largeDev}.
\end{IEEEproof}

\section{Evaluation of  $\lim_{u \arr \infty} \xbfhat^u(\ybf)$}
\label{sec:xhatConv}

We can now apply Laplace's principle in the previous section
to prove (\ref{eq:thetaLimn}).  We begin by examining the pointwise
convergence of the PMMSE estimator $\xbfhat^u(\ybf)$.

\begin{lemma} \label{lem:xhatuLim}
For every $n$, $\Abf$, and $\Sbf$
 and almost all $\ybf$,
\[
    \lim_{u \arr \infty} \xbfhat^u(\ybf) = \xbfmap(\ybf),
\]
where $\xbfhat^u(\ybf)$ is the PMMSE estimator in (\ref{eq:xhatu}) and
$\xbfmap(\ybf)$ is the PMAP estimator in (\ref{eq:xhatMap}).
\end{lemma}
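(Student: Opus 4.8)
The plan is to obtain this as a coordinatewise application of Lemma~\ref{lem:largeDevPt}. Fix $\ybf$ outside the null set excluded by Assumption~\ref{as:uniqueness} and set
\[
    \varphi(\xbf) = \frac{1}{2\gamma}\|\ybf - \Abf\Sbf^{1/2}\xbf\|^2 + f(\xbf),
    \qquad \xbf \in {\cal D} := \Xset^n .
\]
Since the quadratic term is non-negative, $\varphi(\xbf) \ge f(\xbf)$, so (\ref{eq:puBnd}) gives $\int_{\cal D} e^{-u\varphi(\xbf)}\,d\xbf < \infty$ for all sufficiently large $u$; the distribution $q_u$ of (\ref{eq:qu}) is then well defined, and comparing (\ref{eq:qu}) with (\ref{eq:pxyu}) shows that $q_u$ is exactly the postulated posterior $p_{\xbf\mid\ybf}(\cdot\mid\ybf\MID p_u,\sigma_u^2)$. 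Hence $[\xbfhat^u(\ybf)]_i = \int_{\cal D} x_i\, q_u(\xbf)\,d\xbf$ for every coordinate $i$, and I would invoke Lemma~\ref{lem:largeDevPt} with this $\varphi$ and $h(\xbf)=x_i$: its conclusion is $[\xbfhat^u(\ybf)]_i \to \hat x_i$, the $i$th coordinate of the essential minimizer $\xbfhat$ of $\varphi$, which by Assumption~\ref{as:uniqueness} is $[\xbfmap(\ybf)]_i$. Convergence in $\R^n$ being coordinatewise, this is the lemma.

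It remains to check the three hypotheses of Lemma~\ref{lem:largeDevPt}. Hypothesis (b) is immediate, $h(\xbf)=x_i$ being continuous. For hypothesis (a), I would first observe that $\varphi$ is coercive: Assumption~\ref{as:growth} forces $f(x)\to\infty$ as $|x|\to\infty$ in $\Xset$, so $f(\xbf)\ge\max_j f(x_j)\to\infty$ as $\|\xbf\|\to\infty$ and hence $\varphi(\xbf)\to\infty$; consequently every sublevel set of $\varphi$ is bounded. Combined with the uniqueness and the ``minimum equals essential infimum'' clause of Assumption~\ref{as:uniqueness}, together with lower semicontinuity of $\varphi$ (which holds for the cost functions relevant here), a compactness argument on the bounded set $\{\varphi\le\varphi(\xbfhat)+1\}\setminus U$ yields the required strict separation $\inf_{\xbf\notin U}\varphi(\xbf) > \varphi(\xbfhat)$ for every open neighborhood $U$ of $\xbfhat$. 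For hypothesis (c), I would split on the magnitude of $|x_i-\hat x_i|$: when $|x_i-\hat x_i|<1$ the inequality holds because its right-hand side is $\le 0$ and its left-hand side is $\ge 0$; when $|x_i|$ is large, $\varphi(\xbf)-\varphi(\xbfhat)\ge f(x_i)-\varphi(\xbfhat)$ dominates $\log|x_i-\hat x_i|$ thanks to the superlogarithmic growth of $f$ in Assumption~\ref{as:growth}; and on the remaining bounded region where $|x_i-\hat x_i|\ge 1$, the separation just established bounds $\varphi(\xbf)-\varphi(\xbfhat)$ below by a positive constant while $\log|x_i-\hat x_i|$ stays bounded above, so a single $c>0$ works throughout.

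The main obstacle is the strict separation in hypothesis (a). Because $f$ is permitted to be discontinuous — as it is for zero-norm regularization — one cannot simply assert that $\varphi$ attains its infimum over the compact set $\{\varphi\le\varphi(\xbfhat)+1\}\setminus U$ and is therefore bounded strictly away from $\varphi(\xbfhat)$ there; without regularity, $\varphi$ could approach $\essinf\varphi$ along a negligible set away from $\xbfhat$. This is exactly where the clause of Assumption~\ref{as:uniqueness} stipulating that the pointwise minimum coincides with the essential infimum is needed: I would run the compactness argument at the level of the essential infimum (which is all the proof of Lemma~\ref{lem:largeDev} actually uses) and then identify that value with $\varphi(\xbfhat)$ via Assumption~\ref{as:uniqueness}. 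Everything else — the identification of $q_u$ with the postulated posterior, the coercivity estimate, and the case analysis in (c) — is routine.
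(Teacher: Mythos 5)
Your proposal follows essentially the same route as the paper: identify the postulated posterior with $q_u$ in (\ref{eq:qu}) for $\varphi(\xbf)=\frac{1}{2\gamma}\|\ybf-\Abf\Sbf^{1/2}\xbf\|^2+f(\xbf)$, and apply Lemma~\ref{lem:largeDevPt} coordinatewise with $h(\xbf)=x_i$, using Assumption~\ref{as:uniqueness} for the minimizer and Assumption~\ref{as:growth} for item (c). Your verification of hypotheses (a) and (c) is in fact more detailed than the paper's (which dismisses them as straightforward), and your observation that the separation argument should be run at the level of the essential infimum is a sound reading of what Lemma~\ref{lem:largeDev} actually requires.
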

\begin{IEEEproof}
The lemma is a direct application of Lemma~\ref{lem:largeDevPt}.
Fix $n$, $\ybf$, $\Abf$, and $\Sbf$ and let
\beq \label{eq:varphiMap}
    \varphi(\xbf) = \frac{1}{2\lambda}\|\ybf-\Abf\Sbf^{1/2}\xbf\|^2 +
        f(\xbf).
\eeq
The definition of $\xbfmap(\ybf)$ in (\ref{eq:xhatMap}) shows that
\[
    \xbfmap(\ybf) = \argmin_{\xbf \in \Xset^n} \varphi(\xbf).
\]
Assumption~\ref{as:uniqueness} shows that this minimizer is unique
for almost all $\ybf$.
Also (\ref{eq:pxyu}) shows that
\beqan
    \lefteqn{
    p_{\xbf \mid \ybf}(\xbf \mid \ybf \MID p_u, \sigma_u^2) } \\
    &=& \left[ \int_{\xbf\in\Xset^n}
    \exp\left(-u\varphi(\xbf)\right) \, d\xbf\right]^{-1}
        \exp(-u\varphi(\xbf)) \\
    &=& q_u(\xbf),
\eeqan
where $q_u(\xbf)$ is given in (\ref{eq:qu}) with ${\cal D} = \Xset^n$.
Therefore, using (\ref{eq:xhatu}),
\beq \label{eq:xhatuInt}
    \xbfhat^u(\ybf) = \Exp\left( \xbf \mid \ybf \MID p_u, \sigma_u^2 \right)
    = \int_{\xbf \in \Xset^n} \xbf \, q_u(\xbf) \, d\xbf.
\eeq

Now, to prove the lemma, we need to show that
\[
    \lim_{n \arr \infty} \xhat^u_j(\ybf) = \xmap_j(\ybf)
\]
for every component $j = 1,\ldots,n$.
To this end, fix a component index $j$.
Using (\ref{eq:xhatuInt}), we can write
the $j$th component of $\xbfhat^u(\ybf)$ as
\[
    \xhat_j^u(\ybf) = \int_{\xbf \in \Xset^n} h(\xbf) q_u(\xbf) \, d\xbf,
\]
where $h(\xbf) = x_j$.  The function $h(\xbf)$ is continuous.
To verify item (c) of Lemma~\ref{lem:largeDevPt}, using Assumption~\ref{as:growth},
we first find a compact set $K$ such that $|\xbf| \not \in K$ implies that
\beq \label{eq:flogxBnd}
    f(x_j)  > c\log|x_j|.
\eeq
Then, for $\xbf \not \in K$,
\[
    \varphi(\xbf) \stackrel{(a)}{\geq} f(\xbf) \stackrel{(b)}{\geq} f(x_j)
        \stackrel{(c)}{\geq} c\log|x_j|,
\]
where (a) follows from the definition of $\varphi(\xbf)$ in \eqref{eq:varphiMap};
(b) follows from \eqref{eq:fsum} and the assumption that the cost functions $f(x_i)$
are non-negative; and (c) follows from \eqref{eq:flogxBnd}.
Therefore, item (c) of Lemma~\ref{lem:largeDevPt} follows since
$h(x_j) = x_j$. Thus, all the hypotheses of Lemma~\ref{lem:largeDevPt}
are satisfied and we have the limit
\[
    \lim_{u \arr \infty} \xhat_j^u(\ybf) = h(\xbfmap(\ybf)) = \xmap_j(\ybf).
\]
This proves the lemma.
\end{IEEEproof}

\begin{lemma} \label{lem:thetaLimn}
Consider the random vectors
$\thetau(n)$ and $\thetaMap(n)$ defined in (\ref{eq:thetau})
and (\ref{eq:thetaMap}), respectively.  Then, for all $n$,
\beq \label{eq:thetaLimnLem}
    \lim_{u \arr \infty} \thetau(n) = \thetaMap(n)
\eeq
almost surely and in distribution.
\end{lemma}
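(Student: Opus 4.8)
The plan is to reduce the statement to the convergence of a single coordinate. Note that in the triple $\thetau(n) = (x_j(n), s_j(n), \xhat^u_j(n))$, the first two entries $x_j(n)$ and $s_j(n)$ are fixed random variables that do not involve $u$; only $\xhat^u_j(n)$ depends on $u$. Hence it suffices to prove that $\xhat^u_j(n) \arr \xmap_j(n)$ almost surely as $u \arr \infty$. Once that is in hand, the convergence of the full triple is immediate: on the same probability-one event the first two coordinates are constant in $u$ and the third converges, so $\thetau(n) \arr \thetaMap(n)$ almost surely, and almost sure convergence implies convergence in distribution, giving (\ref{eq:thetaLimnLem}).

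The substance of the proof is promoting the ``for almost all $\ybf$'' conclusion of Lemma~\ref{lem:xhatuLim} to an ``almost surely'' conclusion over the joint randomness in $\xbf$, $\wbf$, $\Abf$, and $\Sbf$. First I would condition on $\Abf$ and $\Sbf$: Lemma~\ref{lem:xhatuLim} furnishes a Lebesgue-null set $N = N(\Abf,\Sbf) \subseteq \R^m$ such that $\xbfhat^u(\ybf) \arr \xbfmap(\ybf)$, and in particular $\xhat^u_j(\ybf) \arr \xmap_j(\ybf)$, for every $\ybf \notin N$. Next, conditioned on $(\xbf,\Abf,\Sbf)$ the observation $\ybf = \Abf\Sbf^{1/2}\xbf + \wbf$ is a nondegenerate Gaussian vector on $\R^m$ (covariance $\sigtrue^2 I_m$), hence absolutely continuous with respect to Lebesgue measure, so $\Pr(\ybf \in N(\Abf,\Sbf) \mid \xbf,\Abf,\Sbf) = 0$. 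Integrating out $\xbf$ and then $(\Abf,\Sbf)$ yields $\Pr(\ybf \in N(\Abf,\Sbf)) = 0$; on the complementary probability-one event we have $\xhat^u_j(\ybf) \arr \xmap_j(\ybf)$, which by definition of the component estimates is exactly $\xhat^u_j(n) \arr \xmap_j(n)$.

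Assembling: with probability one $\xhat^u_j(n) \arr \xmap_j(n)$ while $x_j(n)$ and $s_j(n)$ are unchanged by $u$, so $\thetau(n) \arr \thetaMap(n)$ almost surely, and a fortiori in distribution. I expect the only real obstacle to be the measure-theoretic bookkeeping in the middle paragraph: the exceptional null set $N$ supplied by Lemma~\ref{lem:xhatuLim} depends on the realization of $\Abf$ and $\Sbf$, so one cannot simply quote that lemma but must use the conditional Gaussian density of $\ybf$ given $(\xbf,\Abf,\Sbf)$ to see that this (random) null set is hit with probability zero. The remaining steps are routine.
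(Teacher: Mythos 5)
Your proposal is correct and follows essentially the same route as the paper's proof: invoke Lemma~\ref{lem:xhatuLim} to get convergence for every $\Abf$, $\Sbf$ (and $\xbf$) outside a Lebesgue-null set of $\ybf$, then use the absolute continuity of $\ybf$ (from the additive Gaussian noise $\wbf$) to conclude the exceptional set has probability zero, giving almost sure and hence distributional convergence. Your treatment is merely more explicit about the conditioning and the fact that only the third coordinate depends on $u$, details the paper leaves implicit.
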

\begin{IEEEproof}  The vectors $\thetau(n)$
and $\thetaMap(n)$  are deterministic functions
of $\xbf(n)$, $\Abf(n)$, $\Sbf(n)$, and $\ybf$.
Lemma~\ref{lem:xhatuLim} shows that the limit
(\ref{eq:thetaLimnLem}) holds for any values
of $\xbf(n)$, $\Abf(n)$, and $\Sbf(n)$, and almost all $\ybf$.
Since $\ybf$ has a continuous probability distribution
(due to the additive noise $\wbf$ in (\ref{eq:yax})),
the set of values where this limit does not hold must
have probability zero. Thus, the limit (\ref{eq:thetaLimnLem})
holds almost surely, and therefore, also in distribution.
\end{IEEEproof}

\section{Evaluation of  $\lim_{u \arr \infty} \xscau(z \MID \lambda)$}
\label{sec:xhatScalarConv}
We first show the pointwise convergence of
the scalar MMSE estimator $\xscau(z \MID \lambda)$.

\begin{lemma} \label{lem:xhatScauLim}
Consider the scalar estimators
$\xscau(z \MID \lambda)$ defined in (\ref{eq:xscau})
and $\xscaMap(z \MID \lambda)$ defined in (\ref{eq:xmapScalar}).
For all $\lambda > 0$ and almost all $z$,
we have the deterministic limit
\[
    \lim_{u \arr \infty} \xscau(z \MID \lambda) = \xscaMap(z \MID \lambda).
\]
\end{lemma}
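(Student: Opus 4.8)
The plan is to reduce the statement to the pointwise Laplace lemma (Lemma~\ref{lem:largeDevPt}), exactly as was done for the vector MMSE estimator in Lemma~\ref{lem:xhatuLim}. First I would unwind the definitions: by (\ref{eq:xscau}) and (\ref{eq:xmmseScalar}), $\xscau(z \MID \lambda) = \int_{x\in\Xset} x\, p_{x\mid z}(x\mid z \MID p_u,\lambda/u)\,dx$, and by (\ref{eq:pxz}) with the Gaussian (\ref{eq:Gaussian}) at noise level $\mu=\lambda/u$ together with the prior $p_u(x)\propto e^{-uf(x)}$ from (\ref{eq:pu}), the product $\phi(z-x \MID \lambda/u)\,p_u(x)$ is proportional to $\exp\bigl(-u(\frac{1}{2\lambda}|z-x|^2+f(x))\bigr) = \exp(-uF(x,z,\lambda))$ with $F$ as in (\ref{eq:Fdef}). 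Hence the scalar posterior is exactly the tilted density $q_u(x)$ of (\ref{eq:qu}) with $\varphi(x) = F(x,z,\lambda)$ on ${\cal D}=\Xset$ (the normalizing integral is finite for every $u>0$ since $f\geq 0$ forces $F(x,z,\lambda)\geq\frac{1}{2\lambda}|z-x|^2$), so $\xscau(z \MID \lambda) = \int_\Xset x\, q_u(x)\,dx$. Since $\xscaMap(z \MID \lambda) = \argmin_{x\in\Xset}F(x,z,\lambda)$ by (\ref{eq:xmapScalar}), the claim becomes $\lim_{u\arr\infty}\int_\Xset x\,q_u(x)\,dx = \argmin_{x}F(x,z,\lambda)$, which is precisely the conclusion of Lemma~\ref{lem:largeDevPt} applied with $h(x)=x$.

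So the remaining work is to verify, for a fixed $\lambda>0$ and almost every $z$, the three hypotheses of Lemma~\ref{lem:largeDevPt}. Hypothesis (b), continuity of $h(x)=x$ at $\xhat$, is immediate. For hypothesis (a), Assumption~\ref{as:limitingVariance} supplies a unique essential minimizer $\xhat=\xscaMap(z \MID \lambda)$ of $F(\cdot,z,\lambda)$ for almost every $z$; the separation requirement $\inf_{x\notin U}F(x,z,\lambda) > F(\xhat,z,\lambda)$ for every open neighborhood $U$ of $\xhat$ I would obtain by combining this uniqueness with coercivity of $F$ --- Assumption~\ref{as:growth} forces $f(x)/\log|x|\arr\infty$, hence $F(x,z,\lambda)\arr\infty$ as $|x|\arr\infty$, confining all near-minimizers to a bounded set. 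For hypothesis (c) I need a constant $c>0$ with $F(x,z,\lambda) - F(\xhat,z,\lambda)\geq c\log|x-\xhat|$ for all $x\neq\xhat$: near $\xhat$ the right-hand side tends to $-\infty$ while by (\ref{eq:sigzlam}) the left-hand side behaves like $|x-\xhat|^2/(2\sigma^2(z,\lambda))>0$, so the bound holds on a punctured neighborhood; for large $|x|$ the quadratic term $\frac{1}{2\lambda}|z-x|^2$ and the super-logarithmic growth of $f$ (Assumption~\ref{as:growth}) dominate $c\log|x-\xhat|$; and on the remaining compact shell the strict separation from (a) gives a positive lower bound against the bounded quantity $c\log|x-\xhat|$. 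Invoking Lemma~\ref{lem:largeDevPt} then yields $\lim_{u\arr\infty}\xscau(z \MID \lambda)=\xhat=\xscaMap(z \MID \lambda)$.

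I expect the verification of hypothesis (c) to be the main obstacle, since it is the only step where $F(\cdot,z,\lambda)-F(\xhat,z,\lambda)$ must be controlled \emph{quantitatively} over the whole of $\Xset$ rather than merely known to be positive: one has to splice together the local quadratic lower bound from (\ref{eq:sigzlam}), the coercivity and strict-separation estimates coming from Assumptions~\ref{as:growth} and~\ref{as:limitingVariance}, and the elementary fact that $\log|x-\xhat|$ grows more slowly than each of them. The ``almost every $z$'' qualifier in the statement is inherited directly from Assumption~\ref{as:limitingVariance}, which is exactly the set of $z$ on which $\xhat$ is a well-defined unique essential minimizer. Beyond this, the argument is a faithful scalar analogue of the proof of Lemma~\ref{lem:xhatuLim}, with $F(x,z,\lambda)$ playing the role that $\varphi(\xbf)$ in (\ref{eq:varphiMap}) played there.
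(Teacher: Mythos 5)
Your proposal is correct and follows essentially the same route as the paper: identify the scalar posterior $p_{x\mid z}(x\mid z \MID p_u,\lambda/u)$ with the tilted density $q_u(x)$ for $\varphi(x)=F(x,z,\lambda)$ on ${\cal D}=\Xset$, then apply Lemma~\ref{lem:largeDevPt} with $h(x)=x$, using Assumption~\ref{as:limitingVariance} for uniqueness of the minimizer and Assumption~\ref{as:growth} for coercivity and item (c). In fact your splicing argument for item (c) is more detailed than the paper's, which simply asserts that step is straightforward from Assumption~\ref{as:growth}.
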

\begin{IEEEproof}
The proof is similar to that of Lemma~\ref{lem:xhatuLim}.
Fix $z$ and $\lambda$ and consider the conditional distribution
$p_{x|z}(x \mid z \MID p_u,\lambda/u)$.
Using (\ref{eq:pxz}) along with
the definition of $p_u(x)$ in (\ref{eq:pu}) and an argument
similar to the proof of Lemma~\ref{lem:xhatuLim},
it is easily checked that
\beq \label{eq:pxzu}
    p_{x|z}(x \mid z \MID p_u,\lambda/u) = q_u(x),
\eeq
where $q_u(x)$ is given by (\ref{eq:qu}) with ${\cal D} = \Xset$
and
\beq \label{eq:varphiSca}
    \varphi(x) = F(x,z,\lambda),
\eeq
where $F(x,z,\lambda)$ is defined in (\ref{eq:Fdef}).
Using (\ref{eq:xscau}) and (\ref{eq:xmmseScalar}),
\beqan
    \lefteqn{ \xscau(z \MID \lambda) = \xscaMmse(z \MID p_u,\lambda/u) }\\
    &=& \int_{x \in \Xset} x \, p_{x\mid z}(x \mid z \MID p_u,\lambda/u) \, dx \\
    &=& \int_{x \in \Xset} h(x) q_u(x) \, dx,
\eeqan
with $h(x) = x$.

We can now apply Lemma~\ref{lem:largeDevPt}.
The definition of $\xscaMap(z \MID \lambda)$ in
(\ref{eq:xmapScalar}) shows that
\beq \label{eq:xscaMapComp}
    \xscaMap(z \MID \lambda) = \argmin_{x \in \Xset} \varphi(x).
\eeq
Assumption~\ref{as:limitingVariance} shows that
for all $\lambda > 0$ and almost all $z$, this minimization is unique so
\[
    \varphi(x) > \varphi(\xscaMap(z \MID \lambda))
\]
for all $x \neq \xscaMap(z \MID \lambda)$.
Also, using (\ref{eq:Fdef}),
\beqa
    \lim_{|x|\arr\infty} \varphi(x) &\stackrel{(a)}{=}&
    \lim_{|x|\arr\infty} F(x,z,\lambda)  \nonumber \\
    &\stackrel{(b)}{\geq}&  \lim_{|x|\arr\infty}  f(x)
    \stackrel{(c)}{=} \infty  \label{eq:varphiInfty}
\eeqa
where (a) follows from (\ref{eq:varphiSca});
(b) follows from (\ref{eq:Fdef});
and (c) follows from Assumption~\ref{as:growth}.
Equations (\ref{eq:xscaMapComp}) and (\ref{eq:varphiInfty})
show that item (a) of Lemma~\ref{lem:largeDevPt} is satisfied.
Item (b) of Lemma~\ref{lem:largeDevPt} is also clearly satisfied
since $h(x) = x$ is continuous.

Also, similar to the proof of Lemma \ref{lem:xhatuLim}, one can show
using Assumption~\ref{as:growth} that
item (c) of Lemma~\ref{lem:largeDevPt} is satisfied
for some $c > 0$.  Thus, all the hypotheses of Lemma~\ref{lem:largeDevPt}
are satisfied and we have the limit
\[
    \lim_{u \arr \infty} \xscau(z \MID \lambda) = h(\xscaMap(z \MID \lambda))
    =\xscaMap(z \MID \lambda).
\]
This proves the lemma.
\end{IEEEproof}

We now turn to convergence of the random variable
$\thetaScau(\sigEff^2(u), u\sigPEff^2(u))$.

\begin{lemma} \label{lem:thetaScaLim}
Consider the random vectors
$\thetaScau(\sigma^2,\gamma)$ defined in (\ref{eq:thetaScau})
and $\thetaScaMap(\sigma^2,\gamma)$ in (\ref{eq:thetaScaMap}).
Let $\sigEff^2(u)$, $\sigPEff^2(u)$, $\sigEffMap^2$ and $\gamma_p$
be as defined in Assumption~\ref{as:limitExistence}.
Then the following limit holds:
\beq \label{eq:thetaScaLimLem}
    \lim_{u \arr \infty} \thetaScau(\sigEff^2(u),u\sigPEff^2(u))
    = \thetaScaMap(\sigEffMap^2,\gamma_p)
\eeq
almost surely and in distribution.
\end{lemma}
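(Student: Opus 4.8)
The plan is to reduce the claim to a purely deterministic statement and then to establish that statement by a Laplace-type concentration argument with a slowly varying energy function. Since the first two components of both $\thetaScau(\sigEff^2(u),u\sigPEff^2(u))$ and $\thetaScaMap(\sigEffMap^2,\gamma_p)$ are the same pair $(x,s)$, which does not depend on $u$, and since almost sure convergence implies convergence in distribution, it suffices to fix a realization of $(x,s,v)$ outside a null set and show that the third components converge. Writing $z_u = x + (\sigEff(u)/\sqrt{s})\,v$, $\lambda_u = u\sigPEff^2(u)/s$, $z_\infty = x + (\sigEffMap/\sqrt{s})\,v$, and $\lambda_\infty = \gamma_p/s$, Assumption~\ref{as:limitExistence} gives $z_u \arr z_\infty$ and $\lambda_u \arr \lambda_\infty$, and $\lambda_\infty > 0$ because (\ref{eq:sigPEffSol}) forces $u\sigPEff^2(u) \geq u\sigma_u^2 = \gamma$. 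Conditioned on $s$, the variable $z_\infty$ is obtained by adding a nondegenerate Gaussian, so it a.s.\ avoids the $\lambda_\infty$-dependent null set of Assumption~\ref{as:limitingVariance}. The reduced goal is the deterministic limit: if $z_u \arr z_\infty$ and $\lambda_u \arr \lambda_\infty > 0$, then $\xscau(z_u \MID \lambda_u) \arr \xscaMap(z_\infty \MID \lambda_\infty)$.

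For the deterministic step, I would use the representation from the proof of Lemma~\ref{lem:xhatScauLim}: $\xscau(z \MID \lambda) = \int_{\Xset} x\,q_u(x)\,dx$ where $q_u(x) \propto \exp(-uF(x,z,\lambda))$. With the moving arguments this is a Gibbs average against the energy $\varphi_u(x) := F(x,z_u,\lambda_u)$, which converges pointwise to $\varphi_\infty(x) := F(x,z_\infty,\lambda_\infty)$. The key point is that the convergence can be made quantitative in a \emph{multiplicative} sense: for every $\epsilon > 0$ there is a $u_0$ such that
\[
    (1-\epsilon)\varphi_\infty(x) - \epsilon \;\leq\; \varphi_u(x) \;\leq\; (1+\epsilon)\varphi_\infty(x) + \epsilon
    \qquad \text{for all } x \in \Xset,\ u \geq u_0,
\]
which follows from $|z_u-x|^2 \leq (1+\delta)|z_\infty-x|^2 + (1+\delta^{-1})|z_u-z_\infty|^2$ and $\lambda_u \arr \lambda_\infty$ for the quadratic terms, and from $f \geq 0$ for the $f(x)$ terms. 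Let $\xhat := \xscaMap(z_\infty \MID \lambda_\infty) = \argmin_{x \in \Xset}\varphi_\infty(x)$, which is the unique essential minimizer and satisfies $\inf_{x \notin U}\varphi_\infty(x) > \varphi_\infty(\xhat)$ for every neighborhood $U$ of $\xhat$, exactly as established inside the proof of Lemma~\ref{lem:xhatScauLim} from Assumption~\ref{as:limitingVariance} and coercivity.

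Given this sandwich, I would re-run the argument behind Lemma~\ref{lem:largeDev}. Fix a small neighborhood $U$ of $\xhat$ and split $\int_{\Xset}|x-\xhat|\,q_u(x)\,dx$ into the part over $U$, which is at most $\mathrm{diam}(U)$, and the part over $U^c$. Bounding the $U^c$-numerator from above using the right inequality of the sandwich and the normalizing integral from below (restricting it to $U$) using the left inequality, the spurious factors $e^{\pm u\epsilon}$ combine into $e^{2u\epsilon}$, while Laplace's principle (Lemma~\ref{lem:laplace}) makes the ratio of the two resulting $\varphi_\infty$-integrals decay like $\exp(-u[(1-\epsilon)m_U - 2\epsilon\,\varphi_\infty(\xhat)])$ with $m_U := \inf_{x\notin U}\varphi_\infty(x) - \varphi_\infty(\xhat) > 0$. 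For $\epsilon$ small relative to $m_U$ and $\varphi_\infty(\xhat)$ the exponent is strictly negative, so the $U^c$-part tends to $0$; the quadratic growth of $\varphi_\infty$ (and Assumption~\ref{as:growth}) keeps all tail integrals finite and renders the subexponential corrections irrelevant. Letting $\mathrm{diam}(U) \arr 0$ then gives $\xscau(z_u \MID \lambda_u) \arr \xhat$, which is (\ref{eq:thetaScaLimLem}) almost surely, hence also in distribution.

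The step I expect to be the main obstacle is this second one: Lemma~\ref{lem:xhatScauLim} cannot simply be quoted, because both the observation $z_u$ and the postulated noise level $\lambda_u$ sitting inside the estimator vary simultaneously with the large-deviations parameter $u$. The idea that resolves it is to control the perturbation of the energy \emph{multiplicatively} rather than additively---an additive $O(1)$ perturbation would leave a non-vanishing factor $e^{O(u)}$ in the ratio---so that it is absorbed by the strictly positive exponential gap $m_U$ produced by the unique minimizer of the limiting energy.
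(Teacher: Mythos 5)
Your proposal is correct, and it handles the crux of the lemma by a different (and more self-contained) route than the paper. The paper's proof has the same outer skeleton—treat the triples as deterministic functions of $(x,s,v)$, prove an almost-sure pointwise limit, and upgrade to convergence in distribution using the fact that $z$ has a density—but at the key step it simply quotes Lemma~\ref{lem:xhatScauLim} (convergence of $\xscau(z \MID \lambda)$ to $\xscaMap(z \MID \lambda)$ for \emph{fixed} parameters and almost all $z$), asserts that $\xscaMap(x+\frac{\sigma}{\sqrt{s}}v \MID \gamma/s)$ is continuous in $(\sigma^2,\gamma)$ by Assumption~\ref{as:limitingVariance}, and then combines these with the parameter limits of Assumption~\ref{as:limitExistence}. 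You instead observe—correctly—that Lemma~\ref{lem:xhatScauLim} cannot be invoked verbatim because the observation $z_u$ and the postulated level $\lambda_u$ move with the Laplace parameter $u$, and you prove the diagonal limit directly: the multiplicative sandwich $(1-\epsilon)\varphi_\infty-\epsilon \le \varphi_u \le (1+\epsilon)\varphi_\infty+\epsilon$ on the perturbed energy lets the strictly positive gap $m_U$ of the limiting energy absorb the resulting $e^{O(\epsilon u)}$ factors, so the Gibbs average still concentrates at $\xscaMap(z_\infty \MID \lambda_\infty)$; Assumption~\ref{as:growth} controls the tails exactly as in Lemma~\ref{lem:largeDev}. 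What each approach buys: the paper's argument is shorter and reuses its scalar lemma, but gluing a pointwise-in-parameter limit to continuity of the limit map is not valid in general without some local uniformity or an argmin-stability argument, which the paper leaves implicit; your version makes that uniformity explicit at the cost of re-running the Laplace machinery with $u$-dependent data, and is the more defensible rendering of the same concentration idea. Your auxiliary observations are also right: $\gamma_p \ge \gamma > 0$ follows from (\ref{eq:sigPEffSol}) with $\sigpost^2=\gamma/u$, and $\sigEffMap^2 \ge \sigtrue^2 > 0$ from (\ref{eq:sigEffSol}) ensures $z_\infty$ has a density so the exceptional sets of Assumption~\ref{as:limitingVariance} are avoided almost surely. (Minor bookkeeping: the decay exponent should carry an extra $-2\epsilon$ term from the additive perturbations, i.e.\ $(1-\epsilon)m_U - 2\epsilon\varphi_\infty(\xhat) - 2\epsilon$, which is still negative for small $\epsilon$, so nothing breaks.)
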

\begin{IEEEproof}
The proof is similar to that of Lemma~\ref{lem:thetaLimn}.
For any $\sigma^2$ and $\gamma > 0$,
the vectors $\thetaScau(\sigma^2,\gamma)$ and
$\thetaScaMap(\sigma^2,\gamma)$ are deterministic
functions of the random variables $x \sim p_0(x)$,
$s \sim p_S(s)$, and $z$ given (\ref{eq:zxvsca})
with $v \sim {\cal N}(0,1)$.
Lemma~\ref{lem:xhatScauLim} shows that the limit
\beq \label{eq:thetaScaLima}
    \lim_{u \arr \infty} \thetaScau(\sigma^2,\gamma)
    = \thetaScaMap(\sigma^2,\gamma)
\eeq
holds for any values of $\sigma^2$, $\gamma$,
$x$, and $s$ and almost all $z$.
Also, if we fix $x$, $s$, and $v$,
by Assumption~\ref{as:limitingVariance}, the function
\[
    \xscaMap(z \MID \gamma/s) = \xscaMap(x+\frac{\sigma}{\sqrt{s}}v \MID \gamma/s)
\]
is continuous in $\gamma$ and $\sigma^2$ for almost
all values of $v$.
Therefore, we can combine (\ref{eq:thetaScaLima}) with the
limits in Assumption~\ref{as:limitExistence} to show that
\[
    \lim_{u \arr \infty} \thetaScau(\sigEff^2(u),u\sigPEff^2(u))
    = \thetaScaMap(\sigEffMap^2,\gamma_p)
\]
for almost all $x$ and $s$ and almost all $z$.
Since $z$ has a continuous probability distribution
(due to the additive noise $v$ in (\ref{eq:zxvsca})),
the set of values where this limit does not hold must
have probability zero. Thus, the limit (\ref{eq:thetaScaLimLem})
holds almost surely, and therefore, also in distribution.
\end{IEEEproof}

\section{Proof of the Fixed-Point Equations}
\label{sec:fixed-point}
For the final part of the proof, we need to show that
the limits $\sigEffMap^2$ and $\gamma_p$ in Assumption~\ref{as:limitExistence}
satisfy the fixed-point equations (\ref{eq:repMapFix}).
The proof is straightforward,
but we just need to keep track of the notation properly.
We begin with the following limit.

\begin{lemma} \label{lem:mseLim}
The following limit holds:
\beqan
    \lefteqn{ \lim_{u \arr \infty}
        \Exp\left[s \, \mse(p_u,\ptrue,\mu_p^u,\mu^u,z^u)\right] }\\
    &=& \Exp \left[s|x-\xscaMap(z \MID \lambda)|^2\right],
\eeqan
where the expectations are taken over $x \sim \ptrue(x)$ and
$s \sim p_S(s)$,
and $z$ and $z^u$ are the random variables
\begin{subequations}
\beqa
    z^u &=& x + \sqrt{\mu^u}v, \label{eq:zudefLem}\\
    z &=& x + \sqrt{\mu}v, \label{eq:zdefLem}
\eeqa
\end{subequations}
with $v \sim {\cal N}(0,1)$ and
$\mu^u = \sigEff^2(u)/s$, $\mu_p^u = \sigPEff^2(u)/s$,
$\mu = \sigEffMap^2/s$, and $\lambda = \gamma_p/s$.
\end{lemma}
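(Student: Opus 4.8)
The plan is to reduce the claim to the pointwise scalar limit of Lemma~\ref{lem:xhatScauLim}, first collapsing the nested integrations hidden in $\mse$ by the tower property and then transferring the limit through the expectation by dominated convergence.

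\emph{Reduction.} By (\ref{eq:xscau}) we have $\xscaMmse(z\MID p_u,\mu_p^u)=\xscau(z\MID u\mu_p^u)$ with $\mu_p^u=\sigPEff^2(u)/s$, so (\ref{eq:mseScalar}) reads
\[
  \mse(p_u,\ptrue,\mu_p^u,\mu^u,z)=\int_{x'\in\Xset}\bigl|x'-\xscau(z\MID u\mu_p^u)\bigr|^2\,p_{x'\mid z}(x'\mid z\MID\ptrue,\mu^u)\,dx'.
\]
Now $p_{x'\mid z}(\cdot\mid\zeta\MID\ptrue,\mu^u)$ is, by construction, the conditional law of $x'$ given $z'=\zeta$ when $x'\sim\ptrue$ and $z'=x'+\sqrt{\mu^u}\,v'$; and in the outer expectation of the lemma the pair $(x,z^u)$, with $z^u=x+\sqrt{\mu^u}\,v$ and $\mu^u=\sigEff^2(u)/s$, has exactly this joint law. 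Hence the tower property collapses the nested integrations to
\[
  \Exp\!\left[s\,\mse(p_u,\ptrue,\mu_p^u,\mu^u,z^u)\right]=\Exp\!\left[s\,\bigl|x-\xscau(z^u\MID u\mu_p^u)\bigr|^2\right],
\]
a single expectation over $x\sim\ptrue$, $s\sim p_S$, $v\sim{\cal N}(0,1)$ with $z^u=x+\sqrt{\mu^u}\,v$.

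\emph{Convergence of the integrand.} By Assumption~\ref{as:limitExistence}, for each fixed $s>0$, $\mu^u=\sigEff^2(u)/s\arr\sigEffMap^2/s=\mu$ and $u\mu_p^u=u\sigPEff^2(u)/s\arr\gamma_p/s=\lambda$, hence $z^u\arr z=x+\sqrt{\mu}\,v$. Splitting $\xscau(z^u\MID u\mu_p^u)-\xscaMap(z\MID\lambda)$ into $\bigl[\xscau(z^u\MID u\mu_p^u)-\xscaMap(z^u\MID u\mu_p^u)\bigr]+\bigl[\xscaMap(z^u\MID u\mu_p^u)-\xscaMap(z\MID\lambda)\bigr]$, the first bracket vanishes by Lemma~\ref{lem:xhatScauLim} and the second by the joint continuity of $\xscaMap(\cdot\MID\cdot)$ supplied by Assumption~\ref{as:limitingVariance} --- precisely the combination already used in the proof of Lemma~\ref{lem:thetaScaLim}. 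Thus $s\,|x-\xscau(z^u\MID u\mu_p^u)|^2\arr s\,|x-\xscaMap(z\MID\lambda)|^2$ almost surely.

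\emph{Interchanging limit and expectation --- the main obstacle.} What remains is to produce a $u$-uniform integrable envelope for $s\,|x-\xscau(z^u\MID u\mu_p^u)|^2$. Since by (\ref{eq:pxzu}) the estimate $\xscau(z\MID u\mu_p^u)$ is the mean of the postulated posterior $q_u\propto\exp(-uF(\cdot,z,u\mu_p^u))$, Jensen's inequality gives $|x-\xscau(z^u\MID u\mu_p^u)|^2\le\int_{x'\in\Xset}|x-x'|^2 q_u(x')\,dx'$, and I would bound this last integral uniformly for $u\ge u_0$ by splitting it over a small ball about the mode $\xscaMap(z^u\MID u\mu_p^u)$ --- where the local quadratic lower bound on $F$ in Assumption~\ref{as:limitingVariance} applies --- and its complement, where the integrand is exponentially suppressed by the growth condition of Assumption~\ref{as:growth}, in the manner of the proof of Lemma~\ref{lem:largeDev}. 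Together with the elementary bound $|\xscaMap(z\MID\lambda)|\le|z|+\sqrt{2\lambda f(z)}$ (obtained by comparing $F$ at its minimizer with $F$ at $x'=z$) and the finiteness of $\Exp[s]$ and $\Exp[s|x|^2]$ (so that $\Exp[s|x-\xscaMap(z\MID\lambda)|^2]<\infty$, else the statement is vacuous), this yields the envelope, and dominated convergence gives the claim. As a check on the limiting value, applying (\ref{eq:sigEffSol}) to the postulated pair $(p_u,\sigma_u^2)$ --- legitimate under Assumption~\ref{as:replicaMMSE} --- shows $\beta\,\Exp[s\,\mse(p_u,\ptrue,\mu_p^u,\mu^u,z^u)]=\sigEff^2(u)-\sigtrue^2\arr\sigEffMap^2-\sigtrue^2$ by Assumption~\ref{as:limitExistence}, so the sequence of expectations converges; Fatou on the nonnegative integrand already gives $\liminf\ge\Exp[s|x-\xscaMap(z\MID\lambda)|^2]$, and the real content is the matching upper bound furnished by the envelope. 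The delicacy throughout is that both the noise level $\mu^u$ and the estimator parameter $u\mu_p^u$ drift with $u$, so every appeal to a pointwise Laplace result must be paired with the continuity in Assumption~\ref{as:limitingVariance}.
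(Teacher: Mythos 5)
Your proposal is correct and follows essentially the same route as the paper's own proof: collapse the inner integral of $\mse$ by the tower property, then use Assumption~\ref{as:limitExistence} ($\mu^u \arr \mu$, $u\mu_p^u \arr \lambda$, hence $z^u \arr z$) together with Lemma~\ref{lem:xhatScauLim} and the continuity of $\xscaMap$ to pass $\xscau(z^u \MID u\mu_p^u) \arr \xscaMap(z \MID \lambda)$ inside the expectation. The only difference is that you explicitly construct a dominated-convergence envelope (via Jensen, a Laplace-type bound near the mode, and the comparison $|\xscaMap(z\MID\lambda)| \leq |z| + \sqrt{2\lambda f(z)}$) and add a Fatou/fixed-point consistency check for the limit--expectation interchange, a step the paper carries out without comment; this is added rigor on the same argument rather than a different method, and you also correctly use $u\mu_p^u$ where the paper's proof has the typo $\mu_p^u/u$.
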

\begin{IEEEproof}  Using the
definitions of $\mse$ in (\ref{eq:mseScalar}) and
$\xscau(z \MID \cdot)$ in (\ref{eq:xscau}),
\beqan
    \lefteqn{\mse(p_u,\ptrue,\mu_p^u,\mu^u,z^u)} \\
     &=&  \int_{x \in \Xset}
            |x-\xscaMmse(z^u \MID p_u,\mu_p^u)|^2
            p_{x \mid z}(x \mid z^u \MID \ptrue,\mu^u) \, dx \nonumber \\
     &=&  \int_{x \in \Xset}
            |x-\xscau(z^u \MID \mu_p^u/u)|^2
            p_{x \mid z}(x \mid z^u \MID \ptrue,\mu^u) \, dx. \nonumber
\eeqan
Therefore, fixing $s$ (and hence $\mu_p^u$ and $\mu^u$),
we obtain the conditional expectation
\beqa
    \lefteqn{\Exp\left[\mse(p_u,\ptrue,\mu_p^u,\mu^u,z^u)
        \mid s\right]} \nonumber \\
    &=& \Exp\left[|x-\xscau(z^u \MID \mu_p^u/u)|^2\mid s\right], \label{eq:mseuLem}
\eeqa
where the expectation on the right is over $x \sim \ptrue(x)$
and $z^u$ given by (\ref{eq:zudefLem}).

Also, observe that the definitions
$\mu^u = \sigEff^2(u)/s$  and $\mu = \sigEffMap^2/s$
and along with the limit in Assumption~\ref{as:limitExistence} show that
\beq \label{eq:mulimLem}
    \lim_{u \arr \infty} \mu^u = \mu.
\eeq
Similarly, since $\mu_p^u = \sigPEff^2(u)/s$
and $\lambda = \gamma_p/s$,
Assumption~\ref{as:limitExistence} shows that
\beq\label{eq:muplimLem}
    \lim_{u \arr \infty} \frac{\mu_p^u}{u} = \lambda.
\eeq
Taking the limit as $u \arr \infty$,
\beqan
    \lefteqn{\lim_{u \arr \infty}
    \Exp\left[s \, \mse(p_u,\ptrue,\mu_p^u,\mu^u,z^u)\right]} \\
    &\stackrel{(a)}{=}& \lim_{u \arr \infty}
        \Exp\left[s |x-\xscau(z^u \MID \mu_p^u/u)|^2\right],\\
    &\stackrel{(b)}{=}& \lim_{u \arr \infty}
        \Exp\left[s |x-\xscau(z^u \MID \lambda)|^2\right],\\
    &\stackrel{(c)}{=}& \lim_{u \arr \infty}
        \Exp\left[s |x-\xscau(z \MID \lambda)|^2\right],\\
    &\stackrel{(d)}{=}& \lim_{u \arr \infty}
        \Exp\left[s |x-\xscaMap(z \MID \lambda)|^2\right],
\eeqan
where (a) follows from (\ref{eq:mseuLem});
(b) follows from (\ref{eq:muplimLem});
(c) follows from (\ref{eq:mulimLem}), which implies that
$z^u \arr z$; and
(d) follows from Lemma~\ref{lem:xhatScauLim}.
\end{IEEEproof}

The previous lemma enables us to evaluate the limit of the MSE in
(\ref{eq:sigEffSolMap}).  To evaluate the limit of the MSE in
(\ref{eq:sigPEffSolMap}), we need the following lemma.

\begin{lemma} \label{lem:gulim}  Fix $z$ and $\lambda$, and let
\beq  \label{eq:guxDef}
    g(u,x) = u|x-\xhat|^2,
\qquad
    \xhat = \xscaMap(z \MID \lambda).
\eeq
Also, let $\varphi(x)$ be given by (\ref{eq:varphiSca})
and $q_u(x)$ be given by (\ref{eq:qu}) with ${\cal D} = \Xset$.
Then, for any $\epsilon > 0$, there exists
an open neighborhood $U \subseteq \Xset$ of $\xhat$
such that
\[
    \limsup_{u \arr \infty} \left| \int_{x \in U} g(u,x)q_u(x) \, dx -
    \sigma^2(z,\lambda) \right| < \epsilon,
\]
where $\sigma^2(z,\lambda)$ is given in Assumption~\ref{as:limitingVariance}.
\end{lemma}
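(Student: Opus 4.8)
The plan is to prove the slightly stronger fact that, for every sufficiently small neighborhood $U$ of $\xhat = \xscaMap(z \MID \lambda)$,
\[
    \lim_{u \arr \infty}\int_{x \in U} g(u,x)q_u(x)\,dx = \sigma^2(z,\lambda),
\]
from which the $\epsilon$-form of the statement is immediate. Put $\Delta(x) = \varphi(x) - \varphi(\xhat) \geq 0$ with $\varphi(x) = F(x,z,\lambda)$ as in (\ref{eq:varphiSca}), so that
\[
    \int_{x \in U} g(u,x)q_u(x)\,dx
    = \frac{\int_{x \in U} u|x-\xhat|^2 e^{-u\Delta(x)}\,dx}{\int_{x \in \Xset} e^{-u\Delta(x)}\,dx}.
\]
The mechanism is a localized Laplace estimate: after zooming in on $\xhat$, the tilted density $q_u$ is asymptotically Gaussian with mean $\xhat$ and variance $\sigma^2(z,\lambda)/u$, so $u$ times its second moment about $\xhat$ tends to $\sigma^2(z,\lambda)$. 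I would treat separately the nondegenerate case $\sigma^2(z,\lambda)>0$ and the corner case $\sigma^2(z,\lambda)=0$; the latter is exactly what occurs at the soft-threshold and hard-threshold break points of the lasso and zero norm-regularized estimators.

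\emph{Nondegenerate case.} By Assumption~\ref{as:limitingVariance}, (\ref{eq:sigzlam}) gives $\Delta(\xhat+h) = \frac{h^2}{2\sigma^2(z,\lambda)}(1+o(1))$ as $h \arr 0$; fix $\delta>0$ and $r>0$ with $\Delta(\xhat+h)\geq h^2/(2(\sigma^2(z,\lambda)+\delta))$ for $|h|<r$, and let $U = \{x\in\Xset: |x-\xhat|<r\}$. Substituting $x = \xhat + t/\sqrt u$ turns the numerator into $u^{-1/2}\int_{V_u} t^2 e^{-u\Delta(\xhat + t/\sqrt u)}\,dt$ with $V_u = \sqrt u\,(U-\xhat)$; the integrand tends pointwise to $t^2 e^{-t^2/(2\sigma^2(z,\lambda))}$ and is dominated, uniformly in $u$, by the integrable $t^2 e^{-t^2/(2(\sigma^2(z,\lambda)+\delta))}$, so by dominated convergence $\sqrt u\cdot(\text{numerator}) \arr \int_H t^2 e^{-t^2/(2\sigma^2(z,\lambda))}\,dt$, where $H = \R$ if $\xhat$ is interior to $\Xset$ and a half-line otherwise. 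Splitting $\int_\Xset = \int_U + \int_{\Xset\setminus U}$ and rescaling the first piece the same way gives $\sqrt u\int_U e^{-u\Delta}\,dx \arr \int_H e^{-t^2/(2\sigma^2(z,\lambda))}\,dt$; for the second piece, Assumptions~\ref{as:growth} and~\ref{as:limitingVariance} supply — exactly as in the proof of Lemma~\ref{lem:xhatScauLim} — the strict separation $\inf_{x\not\in U}\varphi(x) > \varphi(\xhat)$ (the growth condition also controlling $|x| \arr \infty$), so Laplace's principle (Lemma~\ref{lem:laplace}) makes $\int_{\Xset\setminus U}e^{-u\Delta}\,dx$ exponentially small, whence $\sqrt u\int_{\Xset\setminus U}e^{-u\Delta}\,dx \arr 0$. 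Dividing numerator by denominator, the (possibly half-line) Gaussian factors cancel and the ratio tends to $\sigma^2(z,\lambda)$.

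\emph{Corner case $\sigma^2(z,\lambda)=0$.} Here (\ref{eq:sigzlam}) says $|x-\xhat|^2/\Delta(x) \arr 0$, so given $\epsilon>0$ I would choose $\delta>0$ and then $r>0$ with $u|x-\xhat|^2 \leq 2\delta\,u\Delta(x)$ on $U = \{x\in\Xset: |x-\xhat|<r\}$, hence $\int_{x\in U}g(u,x)q_u(x)\,dx \leq 2\delta\int_{x\in\Xset}u\Delta(x)q_u(x)\,dx$. For every cost function used in this paper ($f(x)=|x|$ and the truncated zero-norm cost), $\xhat$ is a genuine linear corner, i.e.\ $\liminf_{h\arr0}\Delta(\xhat+h)/|h|>0$; the correct rescaling is then $x = \xhat + t/u$, and the same dominated-convergence computation (against $e^{-c|t|}$ rather than a Gaussian) shows that $u\int_\Xset e^{-u\Delta}\,dx$ has a positive finite limit while $\int_{x\in U}u|x-\xhat|^2 e^{-u\Delta}\,dx = O(u^{-2})$, so $\int_{x\in\Xset}u\Delta(x)q_u(x)\,dx$ is bounded and in fact $\int_{x\in U}g(u,x)q_u(x)\,dx \arr 0 = \sigma^2(z,\lambda)$; since $g\geq0$, shrinking $\delta$ (hence $U$) gives the stated $\limsup$ bound. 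In full generality one needs only the mild regularity $\limsup_{u\arr\infty}\int_{x\in\Xset}u\Delta(x)q_u(x)\,dx<\infty$ (equivalently, $\Delta(\xhat+h)\arr0$ at least polynomially): the $q_u$-mass concentrates around $\xhat$ on scale $o(u^{-1/2})$, forcing $u|X-\xhat|^2\arr0$ in $q_u$-expectation.

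\emph{Main obstacle.} The only substantive point is showing $\int_{\Xset\setminus U}e^{-u\Delta}$ is negligible against the local $u^{-1/2}$ (resp.\ $u^{-1}$) scale — which is where one must upgrade the uniqueness of the essential minimizer in Assumption~\ref{as:limitingVariance} to strict separation and invoke Assumption~\ref{as:growth} for the tail at $|x|\arr\infty$ — together with identifying the right rescaling in the corner case. Everything else reduces to routine dominated-convergence and Laplace-principle bookkeeping.
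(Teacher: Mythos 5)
Your proposal is correct (at the same level of rigor as the paper's own outline) and rests on the same mechanism---a local Laplace approximation showing that $q_u$ is asymptotically Gaussian around $\xhat$ with variance $\sigma^2(z,\lambda)/u$---but the execution differs in ways worth noting. The paper sandwiches $\exp\left(-u(\varphi(x)-\varphi(\xhat))\right)$ on a small neighborhood $U$ between two unnormalized Gaussians of variance $(1\pm\delta)\sigma^2(z,\lambda)/u$ (exactly your two-sided quadratic bound from (\ref{eq:sigzlam})) and bounds $\int_U g\,q_u$ by a ratio of explicit Gaussian integrals; a key economy there is that for the upper bound the normalizing constant is lower-bounded by the integral over $U$ alone, so no tail estimate is needed in that direction (the tail enters only in the glossed-over reverse inequality). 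You instead compute the exact limits of numerator and denominator separately via the substitution $x=\xhat+t/\sqrt{u}$ and dominated convergence, which obliges you to control $\int_{\Xset\setminus U}e^{-u(\varphi-\varphi(\xhat))}$; you do this correctly with Laplace's principle plus the strict-separation property, the same property the paper invokes in the proof of Lemma~\ref{lem:xhatScauLim}. Your route buys two things the paper's sketch omits: the boundary (half-line) case, and, more substantively, the degenerate case $\sigma^2(z,\lambda)=0$, where the Gaussian sandwich becomes vacuous; this case is not a technicality, since for the lasso cost $\sigma^2(z,\lambda)=0$ on the positive-probability event $|z|\le\lambda$, and the lemma is used for such $z$ in Lemma~\ref{lem:msepLim}. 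Your corner rescaling $x=\xhat+t/u$ settles it for the costs actually used, and your caveat that a fully general $\sigma^2(z,\lambda)=0$ argument needs some mild extra regularity is fair---the paper's proof, which announces itself as an outline, does not address that case at all.
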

\begin{IEEEproof}
The proof is straightforward but somewhat tedious.
We will just outline the main steps.  Let $\delta > 0$.
Using Assumption~\ref{as:growth}, one can find
an open neighborhood $U \subseteq \Xset$ of $\xhat$
such that for all
$x \in U$ and $u > 0$,
\beq \label{eq:expPhiBnd}
    \phi\left(x,\sigma^2_{-}(u)\right) \leq
    \exp(-u(\varphi(x) - \varphi(\xhat)))
   \leq
    \phi\left(x,\sigma^2_{+}(u)\right),
\eeq
where $\phi(x,\sigma^2)$ is the unnormalized Gaussian
distribution
\[
    \phi(x,\sigma^2) = \exp\left(-\frac{1}{2\sigma^2}|x-\xhat|^2
        \right)
\]
and
\beqan
    \sigma^2_{+}(u) &=& (1+\delta)\sigma^2(z,\lambda)/u, \\
    \sigma^2_{-}(u) &=& (1-\delta)\sigma^2(z,\lambda)/u.
\eeqan
Combining the bounds in (\ref{eq:expPhiBnd}) with
the definition of $q_u(x)$ in (\ref{eq:qu})
and the fact that $U \subseteq \Xset$
shows that for all $x \in U$ and $u > 0$,
\beqan
    q_u(x) &=& \left[\int_{x \in \Xset} e^{-u\varphi(x)} \, dx\right]^{-1}
        e^{-u\varphi(x)} \nonumber \\
        & \leq & \left[\int_{x \in U} \phi(x,\sigma^2_-(u)) \, dx\right]^{-1}
        \phi(x,\sigma^2_+(u)).
\eeqan
Therefore,
\beqa
    \lefteqn{ \int_{x\in U} g(u,x)q_u(x) \, dx
    = \int_{x\in U} u|x-\xhat|^2q_u(x) \, dx } \nonumber \\
   &\leq& \left[\int_{x\in U} \phi(x,\sigma^2_{-}(u)) \, dx\right]^{-1}
   \nonumber  \\
   & & \int_{x\in U} u|x-\xhat|^2\phi(x,\sigma^2_{+}(u)) \, dx.
   \hspace{1in}
   \label{eq:guInt}
\eeqa
Now, it can be verified that
\beq \label{eq:sigMinusLim}
    \lim_{u \arr \infty}
    \int_{x \in U} u^{1/2}\phi(x,\sigma^2_{-}(u)) \, dx
    = \sqrt{2\pi(1-\delta)}\sigma(z,\lambda)
\eeq
and
\beqa
    \lefteqn{ \lim_{u \arr \infty}
    \int_{x \in U} u^{3/2}|x-\xhat|^2\phi(x,\sigma^2_{+}(u)) \, dx }
    \nonumber \\
    &=& \sqrt{2\pi(1+\delta)^3}\sigma(z,\lambda)^3.
    \hspace{1in} \label{eq:sigPlusLim}
\eeqa
Substituting (\ref{eq:sigMinusLim}) and (\ref{eq:sigPlusLim})
into (\ref{eq:guInt}) shows that
\[
    \limsup_{u \arr \infty} \int_{x\in U} g(u,x)q_u(x) \, dx
    \leq \frac{(1+\delta)^{3/2}}{1-\delta}\sigma^2(z,\lambda).
\]
A similar calculation shows that
\[
    \liminf_{u \arr \infty} \int_{x\in U} g(u,x)q_u(x) \, dx
    \geq \frac{(1-\delta)^{3/2}}{1+\delta}\sigma^2(z,\lambda).
\]
Therefore, with appropriate selection of $\delta$,
one can find a neighborhood $U$ of $\xhat$ such that
\[
    \limsup_{u \arr \infty} \left| \int_{x\in U} g(u,x)q_u(x) \, dx
    - \sigma^2(z,\lambda) \right| < \epsilon,
\]
and this proves the lemma.
\end{IEEEproof}

Using the above result, we can evaluate the scalar MSE\@.

\begin{lemma} \label{lem:msepLim}
Using the notation of Lemma~\ref{lem:mseLim},
\[
    \lim_{u \arr \infty} \Exp\left[us \, \mse(p_u,p_u,\mu_p^u,\mu_p^u,z)\right]
    = \Exp\left[s\sigma^2(z,\gamma_p/s)\right].
\]
\end{lemma}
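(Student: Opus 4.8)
The plan is to recognize $\mse(p_u,p_u,\mu_p^u,\mu_p^u,z)$ as a Laplace-type posterior variance and then lean on Lemma~\ref{lem:gulim} together with Lemma~\ref{lem:largeDev}. Put $\tilde\lambda_u := u\mu_p^u = u\sigPEff^2(u)/s$, which by Assumption~\ref{as:limitExistence} tends to $\lambda_p := \gamma_p/s$ as $u\arr\infty$. Exactly the computation in the proof of Lemma~\ref{lem:xhatScauLim} (see (\ref{eq:pxzu})--(\ref{eq:varphiSca})) shows $p_{x\mid z}(x\mid z\MID p_u,\mu_p^u) = q_u(x)$, where $q_u$ is the distribution (\ref{eq:qu}) on $\Xset$ built from $\varphi(x) = F(x,z,\tilde\lambda_u)$, and, by (\ref{eq:xscau}), $\xscaMmse(z\MID p_u,\mu_p^u) = \xscau(z\MID\tilde\lambda_u) = \int x\,q_u(x)\,dx =: \bar x_u$. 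Since the postulated and true arguments of this instance of $\mse$ coincide, (\ref{eq:mseScalar}) collapses to
\[
    \mse(p_u,p_u,\mu_p^u,\mu_p^u,z) = \int_{x\in\Xset}|x-\bar x_u|^2 q_u(x)\,dx ,
\]
the $q_u$-posterior variance of $x$. Let $\xhat_u := \xscaMap(z\MID\tilde\lambda_u) = \argmin_{x\in\Xset} F(x,z,\tilde\lambda_u)$. From the bias--variance identity $\int u|x-\xhat_u|^2 q_u\,dx = u\,\mse(p_u,p_u,\mu_p^u,\mu_p^u,z) + u|\bar x_u-\xhat_u|^2$ it is enough to establish the two pointwise limits
\[
    \int u|x-\xhat_u|^2 q_u(x)\,dx \arr \sigma^2(z,\lambda_p)
    \quad\text{and}\quad
    u|\bar x_u-\xhat_u|^2 \arr 0 ,
\]
and then to pass $u\arr\infty$ inside the outer expectation $\Exp[s\,\cdot\,]$.

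For the first limit I would run the argument of Lemma~\ref{lem:gulim} with $g(u,x)=u|x-\xhat_u|^2$, now allowing the regularizer $\tilde\lambda_u$ to vary with $u$: on a fixed neighborhood $U$ of $\xhat_u$ the Gaussian envelope (\ref{eq:expPhiBnd}), whose local curvature is $1/\sigma^2(z,\tilde\lambda_u)$ by (\ref{eq:sigzlam}), pins $\int_U g(u,x)q_u(x)\,dx$ near $\sigma^2(z,\tilde\lambda_u)$, while the mass off $U$ is removed exactly as in item~(b) of Lemma~\ref{lem:largeDev}---here $\log g(u,x)=\log u + 2\log|x-\xhat_u|$ is sublinear in $u$, whereas $u(\varphi(x)-\varphi(\xhat_u))$ is linear in $u$ with $\varphi$ bounded below off $U$ and, by Assumption~\ref{as:growth} (via (\ref{eq:varphiInfty})), growing faster than $\log|x|$ at infinity. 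Continuity of $\sigma^2(z,\,\cdot\,)$ in $\lambda$---which holds in all cases of interest and can be read off from explicit formulas such as (\ref{eq:sigzlamLin}) and (\ref{eq:sigzLasso})---then gives $\sigma^2(z,\tilde\lambda_u)\arr\sigma^2(z,\lambda_p)$, while $\xhat_u\arr\xscaMap(z\MID\lambda_p)$ follows from Lemma~\ref{lem:xhatScauLim} and continuity of $\xscaMap(z\MID\,\cdot\,)$; this moving-center/regularizer bookkeeping is the same device already used in the proof of Lemma~\ref{lem:thetaScaLim}. The second limit follows from the local Laplace expansion of $q_u$ about $\xhat_u$: after rescaling by $\sqrt u$, $q_u$ becomes a Gaussian of variance $\sigma^2(z,\tilde\lambda_u)$ plus an $O(u^{-1/2})$ skewness correction, so $\bar x_u-\xhat_u=O(1/u)$ and hence $u|\bar x_u-\xhat_u|^2\arr0$; this is the same computation already carried out inside the proof of Lemma~\ref{lem:gulim}, which I would cite rather than repeat.

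Combining the two limits gives $us\,\mse(p_u,p_u,\mu_p^u,\mu_p^u,z)\arr s\,\sigma^2(z,\gamma_p/s)$ for almost every realization of $(x,s,v)$. To conclude I would interchange this limit with the expectation by dominated convergence: since the variance is the smallest mean-square deviation, $us\,\mse(p_u,p_u,\mu_p^u,\mu_p^u,z)\leq s\int u|x-\xhat_u|^2 q_u\,dx$, and using the envelope (\ref{eq:expPhiBnd}) on a neighborhood of $\xhat_u$ together with Assumption~\ref{as:growth} on the tail one bounds the right-hand side uniformly in $u$ by an integrable function of $(x,s,v)$, after which $\Exp[us\,\mse(\ldots)]\arr\Exp[s\,\sigma^2(z,\gamma_p/s)]$, which is the claim. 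The step I expect to be the main obstacle is precisely this interchange---constructing the uniform-integrability envelope, which the pointwise Lemmas~\ref{lem:gulim} and~\ref{lem:largeDev} do not by themselves supply---together with the adaptation of Lemma~\ref{lem:gulim} to the moving minimizer $\xhat_u$ and the $u$-varying regularizer $\tilde\lambda_u$; the remaining pieces are routine manipulations of objects already in the paper.
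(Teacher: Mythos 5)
Your proposal follows essentially the same route as the paper's proof: rewrite $\mse(p_u,p_u,\mu_p^u,\mu_p^u,\cdot)$ as the posterior variance under $q_u$, obtain the pointwise limit $\sigma^2(z,\lambda)$ from Lemma~\ref{lem:gulim} via Lemma~\ref{lem:largeDev}, track the $u$-dependent regularizer $u\sigPEff^2(u)/s \arr \gamma_p/s$ and the shift $z^u \arr z$ exactly as in Lemmas~\ref{lem:mseLim} and~\ref{lem:thetaScaLim}, and finish by interchanging the limit with the outer expectation. Your explicit bias--variance step (showing $u|\bar x_u-\xhat_u|^2 \arr 0$, i.e., controlling the discrepancy between centering the $u$-weighted second moment at the posterior mean $\xscau$ versus at the MAP point $\xscaMap$) is, if anything, more careful than the paper, which passes from $|x-\xscau(z\MID\lambda)|^2$ to $|x-\xscaMap(z\MID\lambda)|^2$ inside the $u$-weighted integral by citing only the pointwise convergence of Lemma~\ref{lem:xhatScauLim}.

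The one step you leave open---the dominating function needed to take the limit inside $\Exp[\,\cdot\,]$, which you identify as the main obstacle---is closed in the paper by a one-line observation that needs none of the Laplace-envelope machinery you anticipate: for the matched prior and noise level, the scalar MSE is bounded by the noise level of the equivalent scalar channel, $\mse(p_u,p_u,\mu_p^u,\mu_p^u,z^u)\leq\mu_p^u$, hence $us\,\mse(p_u,p_u,\mu_p^u,\mu_p^u,z^u)\leq us\,\mu_p^u = u\sigPEff^2(u)$, which converges by Assumption~\ref{as:limitExistence} and is therefore bounded by a single constant $M$ uniformly in $u$, $s$, and the realization of $z^u$. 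Dominated convergence then applies with a constant dominating function. So nothing in your plan fails, but the uniform-integrability envelope you expected to have to construct is unnecessary; the simple channel-noise bound on the matched MSE is the missing ingredient.
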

\begin{IEEEproof}  This is an application of Lemma~\ref{lem:largeDev}.
Fix $z$ and $\lambda$ and define $g(u,x)$
as in (\ref{eq:guxDef}).  As in the proof of Lemma
\ref{lem:xhatScauLim}, the conditional distribution
$p_{x\mid z}(x \mid z \MID p_u,\lambda/u)$ is given by (\ref{eq:pxzu})
with $\varphi(x)$ given by (\ref{eq:varphiSca}).
The definition of $\xscaMap(z \MID \lambda)$ in
(\ref{eq:xmapScalar}) shows that $\xscaMap(z \MID \lambda)$
minimizes $\varphi(x)$.  Similar to the proof of
Lemma~\ref{lem:xhatScauLim}, one can show that items (a) and
(b) of Lemma~\ref{lem:largeDev} are satisfied.
Also, Lemma~\ref{lem:gulim} shows that item (c) of
Lemma~\ref{lem:largeDev} holds with $g_\infty = \sigma^2(z,\lambda)$.
Therefore, all the hypotheses of Lemma~\ref{lem:largeDev}
are satisfied and
\beq \label{eq:guMseLim}
 \lim_{u \arr \infty} \int_{x \in \Xset}
    u|x-\xscaMap(z \MID \lambda)|^2q_u(x) \, dx =\sigma^2(z,\lambda),
\eeq
for all $\lambda$ and almost all $z$.

Now
\beqa
   \lefteqn{ \mse(p_u, p_u, \lambda/u, \lambda/u,z) } \nonumber \\
     &\stackrel{(a)}{=}&  \int_{x \in \Xset}
            |x-\xscaMmse(z \MID p_u,\lambda/u)|^2
            p_{x \mid z}(x \mid z \MID p_u,\lambda/u) \, dx \nonumber \\
     &\stackrel{(b)}{=}&  \int_{x \in \Xset}
            |x-\xscaMmse(z \MID p_u,\lambda/u)|^2q_u(x) \, dx  \nonumber \\
     &\stackrel{(c)}{=}&  \int_{x \in \Xset}
            |x-\xscau(z \MID \lambda)|^2q_u(x) \, dx, \label{eq:mseuLim}
\eeqa
where (a) is the definition of $\mse$ in (\ref{eq:mseScalar});
(b) follows from (\ref{eq:pxzu}); and
(c) follows from (\ref{eq:xscau}).
Taking the limit of this expression
\beqa
   \lefteqn{ \lim_{u \arr \infty}
   u \, \mse(p_u, p_u, \lambda/u, \lambda/u,z) } \nonumber \\
   &\stackrel{(a)}{=}& \lim_{u \arr \infty}
   \int_{x \in \Xset}
            u|x-\xscau(z \MID \lambda)|^2q_u(x) \, dx \nonumber \\
   &\stackrel{(b)}{=}& \lim_{u \arr \infty}
   \int_{x \in \Xset}
            u|x-\xscaMap(z \MID \lambda)|^2q_u(x) \, dx \nonumber \\
   &\stackrel{(c)}{=}& \sigma^2(z,\lambda),  \label{eq:mseuLimb}
\eeqa
where (a) follows from (\ref{eq:mseuLim});
(b) follows from Lemma~\ref{lem:xhatScauLim};
and (c) follows from (\ref{eq:guMseLim}).

The variables $z^u$ and $z$ in
(\ref{eq:zudefLem}) and (\ref{eq:zdefLem}) as well as $\mu^u$ and
$\mu^u_p$ are deterministic functions of $x$, $v$, $s$, and $u$.
Fixing $x$, $v$, and $s$ and taking the
limit with respect to $u$ we obtain the deterministic limit
\beqa
 \lefteqn{ \lim_{u \arr \infty}
   u \, \mse(p_u, p_u, \mu_p^u, \mu_p^u,z^u) } \nonumber \\
   &\stackrel{(a)}{=}& \lim_{u \arr \infty}
   u \, \mse(p_u, p_u, \sigPEff^2(u)/s, \sigPEff^2(u)/s,z^u)  \nonumber \\
  &\stackrel{(b)}{=}& \lim_{u \arr \infty} \sigma^2(z^u,u\sigPEff^2(u)/s) \nonumber \\
  &\stackrel{(c)}{=}& \lim_{u \arr \infty} \sigma^2(z,u\sigPEff^2(u)/s) \nonumber \\
  &\stackrel{(d)}{=}& \sigma^2(z,\gamma_p/s), \hspace{2in}
    \label{eq:mseLimuc}
\eeqa
where (a) follows from the definitions of $\mu^u$ and $\mu^u_p$
in Lemma~\ref{lem:mseLim};
(b) follows from (\ref{eq:mseuLimb});
(c) follows from the limit (proved in Lemma~\ref{lem:mseLim})
that $z^u \arr z$ as $u \arr \infty$; and
(d) follows from the limit in Assumption~\ref{as:limitExistence}.

Finally, observe that for any prior $p$ and noise level $\mu$,
\[
    \mse(p, p, \mu, \mu,z) \leq \mu,
\]
since the MSE error must be smaller than the additive noise
level $\mu$.
Therefore, for any $u$ and $s$,
\[
   us \, \mse(p_u, p_u, \mu_p^u, \mu_p^u,z^u) \leq us\mu_p^u
   = u \sigEff^2(u),
\]
where we have used the definition $\mu_p^u = \sigEff^2(u)/s$.
Since $u \sigEff^2(u)$ converges, there must exists a constant $M > 0$
such that
\[
   us \, \mse(p_u, p_u, \mu_p^u, \mu_p^u,z^u) \leq us\mu_p^u
   \leq M,
\]
for all $u$, $s$ and $z^u$.
The lemma now follows from applying the
Dominated Convergence Theorem and taking expectations of both
sides of (\ref{eq:mseLimuc}).
\end{IEEEproof}

We can now show that the limiting noise values satisfy the
fixed-point equations.

\begin{lemma} \label{lem:fixPoint}
The limiting effective noise levels $\sigEffMap^2$
and $\gamma_p$ in Assumption~\ref{as:limitExistence}
satisfy the fixed-point equations (\ref{eq:sigEffSolMap})
and (\ref{eq:sigPEffSolMap}).
\end{lemma}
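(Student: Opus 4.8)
The plan is to obtain the fixed-point equations (\ref{eq:sigEffSolMap}) and (\ref{eq:sigPEffSolMap}) by specializing the Replica MMSE fixed-point equations (\ref{eq:repMmseFix}) to the postulated prior $p_u$ and postulated noise level $\sigma_u^2 = \gamma/u$ of (\ref{eq:sigu}), and then letting $u \arr \infty$. By Assumption~\ref{as:replicaMMSE}, for all sufficiently large $u$ the pair $(p_u,\sigma_u^2)$ satisfies the Replica MMSE Claim, so with the shorthand $\mu^u = \sigEff^2(u)/s$, $\mu_p^u = \sigPEff^2(u)/s$, and $z^u = x + \sqrt{\mu^u}\,v$ of Lemma~\ref{lem:mseLim}, equations (\ref{eq:sigEffSol}) and (\ref{eq:sigPEffSol}) become
\[
    \sigEff^2(u) = \sigtrue^2 + \beta\,\Exp\left[ s\, \mse(p_u,\ptrue,\mu_p^u,\mu^u,z^u) \right]
\]
and
\[
    \sigPEff^2(u) = \frac{\gamma}{u} + \beta\,\Exp\left[ s\, \mse(p_u,p_u,\mu_p^u,\mu_p^u,z^u) \right].
\]

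For the first equation I would simply pass to the limit $u \arr \infty$: the left-hand side tends to $\sigEffMap^2$ by Assumption~\ref{as:limitExistence}, while Lemma~\ref{lem:mseLim} gives that the expectation tends to $\Exp[s\,|x-\xscaMap(z \MID \gamma_p/s)|^2]$, with $z = x+\sqrt{\mu}\,v$ and $\mu = \sigEffMap^2/s$. Since in the equivalent scalar model $\xhat = \xscaMap(z,\lambda_p)$ with $\lambda_p = \gamma_p/s$, this is exactly (\ref{eq:sigEffSolMap}).

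The second equation needs one extra idea: since $u\sigPEff^2(u) \arr \gamma_p$ we have $\sigPEff^2(u) \arr 0$, so passing to the limit directly yields only the vacuous identity $0 = 0$. Instead I would multiply the second displayed equation through by $u$, using $u\sigma_u^2 = \gamma$, to get
\[
    u\sigPEff^2(u) = \gamma + \beta\,\Exp\left[ us\, \mse(p_u,p_u,\mu_p^u,\mu_p^u,z^u) \right],
\]
and then let $u \arr \infty$. Now the left-hand side tends to $\gamma_p$ by Assumption~\ref{as:limitExistence}, and Lemma~\ref{lem:msepLim} gives that the right-hand expectation tends to $\Exp[s\,\sigma^2(z,\gamma_p/s)]$, which is precisely (\ref{eq:sigPEffSolMap}).

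The step I expect to be the crux is justifying this last limit, but that work is already packaged in Lemmas~\ref{lem:gulim} and~\ref{lem:msepLim}: the delicate point is not the pointwise behavior (Laplace's principle, via Lemma~\ref{lem:largeDev}, identifies $u\,\mse(p_u,p_u,\lambda/u,\lambda/u,z) \arr \sigma^2(z,\lambda)$) but the interchange of that limit with the outer expectation over $x$, $s$, $v$, which rests on the uniform bound $us\,\mse(p_u,p_u,\mu_p^u,\mu_p^u,z^u) \le u\sigEff^2(u) \le M$ and the Dominated Convergence Theorem, together with the continuity of $\xscaMap$ in its noise argument from Assumption~\ref{as:limitingVariance} and the convergence $z^u \arr z$. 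Granting those lemmas, the remaining work is bookkeeping: substitute, invoke Lemmas~\ref{lem:mseLim} and~\ref{lem:msepLim}, and match the notation $\lambda_p = \gamma_p/s$, $\mu = \sigEffMap^2/s$ so that the resulting identities coincide with (\ref{eq:repMapFix}).
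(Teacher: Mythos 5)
Your proposal is correct and follows essentially the same route as the paper: substitute the postulated pair $(p_u,\sigma_u^2=\gamma/u)$ into the Replica MMSE fixed-point equations, multiply the second one by $u$, and pass to the limit using Assumption~\ref{as:limitExistence} together with Lemmas~\ref{lem:mseLim} and~\ref{lem:msepLim}. Your identification of the dominated-convergence interchange (packaged in Lemmas~\ref{lem:gulim} and~\ref{lem:msepLim}) as the delicate step matches the paper's treatment exactly.
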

\begin{IEEEproof}
The noise levels $\sigEff^2(u)$ and $\sigPEff^2(u)$
satisfy the fixed-point equations (\ref{eq:sigEffSol})
and (\ref{eq:sigPEffSol}) of the RS PMMSE decoupling property
with the postulated prior $\ppost = p_u$ and
noise level $\sigpost^2 = \gamma/u$.
Therefore, using the notation in Lemma~\ref{lem:mseLim},
\begin{subequations}
\beqa
    \sigEff^2(u) & = & \sigtrue^2 + \beta \,
        \Exp\left[ s \, \mse(p_u,\ptrue,\mu_p^u,\mu^u,z^u) \right],
        \label{eq:sigMapa} \\
    u\sigPEff^2(u) & = & \gamma + \beta \, \Exp\left[ us \, \mse(p_u,p_u,\mu_p^u,\mu_p^u,z^u)\right], \quad
            \label{eq:sigPMapa}
\eeqa
\end{subequations}
where (as defined in Lemma~\ref{lem:mseLim}),
$\mu^u = \sigEff^2(u)/s$ and $\mu_p^u = \sigPEff^2(u)/s$ and
the expectations are taken over $s \sim p_S(s)$,
$x \sim \ptrue(x)$, and $z^u$ in (\ref{eq:zudefLem}).

Therefore,
\beqan
    \sigEffMap^2
    &\stackrel{(a)}{=}& \lim_{u \arr \infty} \sigEff^2(u) \\
    &\stackrel{(b)}{=}& \sigtrue^2
     + \beta \,
        \Exp\left[ s \, \mse(p_u,\ptrue,\mu_p^u,\mu^u,z^u) \right] \\
    &\stackrel{(c)}{=}& \sigtrue^2
    + \beta \, \Exp \left[s|x-\xscaMap(z \MID \lambda)|^2\right],
\eeqan
where (a) follows from the limit in Assumption~\ref{as:limitExistence};
(b) follows from (\ref{eq:sigMapa}); and
(c) follows from Lemma~\ref{lem:mseLim}.
This shows that (\ref{eq:sigEffSolMap}) is satisfied.

Similarly,
\beqan
    \gamma_p
    &\stackrel{(a)}{=}& \lim_{u \arr \infty} u\sigPEff^2(u) \\
    &\stackrel{(b)}{=}& \gamma
     + \beta \,
        \Exp\left[ s \, \mse(p_u,p_u,\mu_p^u,\mu_p^u,z^u) \right] \\
    &\stackrel{(c)}{=}& \gamma
    + \beta \, \Exp \left[s\sigma^2(z,\lambda_p)\right],
\eeqan
where (a) follows from the limit in Assumption~\ref{as:limitExistence};
(b) follows from (\ref{eq:sigPMapa}); and
(c) follows from Lemma~\ref{lem:msepLim}.
This shows that (\ref{eq:sigPEffSolMap}) is satisfied.
\end{IEEEproof}

\section*{Acknowledgment}
The authors thank reviewers for helpful comments.
They also thank Martin Vetterli and Amin Shokrollahi
for discussions and support.


\begin{thebibliography}{10}
\providecommand{\url}[1]{#1}
\csname url@samestyle\endcsname
\providecommand{\newblock}{\relax}
\providecommand{\bibinfo}[2]{#2}
\providecommand{\BIBentrySTDinterwordspacing}{\spaceskip=0pt\relax}
\providecommand{\BIBentryALTinterwordstretchfactor}{4}
\providecommand{\BIBentryALTinterwordspacing}{\spaceskip=\fontdimen2\font plus
\BIBentryALTinterwordstretchfactor\fontdimen3\font minus
  \fontdimen4\font\relax}

\bibitem{EdwardsA:75}
S.~F. Edwards and P.~W. Anderson, ``Theory of spin glasses,'' \emph{J. Phys. F:
  Metal Physics}, vol.~5, pp. 965--974, 1975.

\bibitem{MezardP:86}
M.~M{\'e}zard and G.~Parisi, ``A replica analysis of the travelling salesman
  problem,'' \emph{J. Phys.}, pp. 1285--1296, 1986.

\bibitem{FuA:86}
Y.~Fu and P.~W. Anderson, ``Application of statistical mechanics to
  {NP}-complete problems in combinatorial optimisation,'' \emph{J. Phys. A:
  Math. Gen.}, vol.~19, pp. 1605--1620, 1986.

\bibitem{MonassonZ:97}
R.~Monasson and R.~Zecchina, ``Statistical mechanics of the random
  {K}-satisfiability model,'' \emph{Phys. Rev. E}, vol.~56, no.~2, pp.
  1357--1370, 1997.

\bibitem{Nishimori:01}
H.~Nishimori, \emph{Statistical physics of spin glasses and information
  processing: An introduction}, ser. International Series of Monographs on
  Physics.\hskip 1em plus 0.5em minus 0.4em\relax Oxford, UK: Oxford Univ.
  Press, 2001.

\bibitem{Sourlas:89}
N.~Sourlas, ``Spin-glass models as error-correcting codes,'' \emph{Nature},
  vol. 339, pp. 693--695, Jun. 1989.

\bibitem{Montanari:00}
A.~Montanari, ``Turbo codes: The phase transition,'' \emph{Europ. Phys. J. B},
  vol.~18, pp. 121--136, 2000.

\bibitem{Dotsenko:95}
V.~Dotsenko, \emph{An Introduction to the Theory of Spin Glasses and Neural
  Networks}.\hskip 1em plus 0.5em minus 0.4em\relax World Scientific Press,
  1995.

\bibitem{MezardParVir:01}
M.~M{\'e}zard, G.~Parisi, and M.~Virasoro, \emph{Spin Glass Theory and Beyond:
  An Introduction}.\hskip 1em plus 0.5em minus 0.4em\relax World Scientific
  Press, 2001.

\bibitem{MezardM:09}
M.~M{\'e}zard and A.~Montanari, \emph{Information, Physics and
  Computation}.\hskip 1em plus 0.5em minus 0.4em\relax Oxford University Press,
  2009.

\bibitem{Talagrand:03}
M.~Talagrand, \emph{Spin Glasses: A Challenge for Mathematicians}.\hskip 1em
  plus 0.5em minus 0.4em\relax New York: Springer, 2003.

\bibitem{Tanaka:02}
T.~Tanaka, ``A statistical-mechanics approach to large-system analysis of
  {CDMA} multiuser detectors,'' \emph{IEEE Trans. Inform. Theory}, vol.~48,
  no.~11, pp. 2888--2910, Nov. 2002.

\bibitem{Muller:03}
R.~R. M{\"u}ller, ``Channel capacity and minimum probability of error in large
  dual antenna array systems with binary modulation,'' \emph{IEEE Trans. Signal
  Process.}, vol.~51, no.~11, pp. 2821--2828, Nov. 2003.

\bibitem{GuoV:05}
D.~Guo and S.~Verd{\'u}, ``Randomly spread {CDMA}: Asymptotics via statistical
  physics,'' \emph{IEEE Trans. Inform. Theory}, vol.~51, no.~6, pp. 1983--2010,
  Jun. 2005.

\bibitem{ZaiMuMM:10arXiv}
B.~Zaidel, R.~M{\"u}ller, A.~Moustakas, and R.~de~Miguel, ``Vector precoding
  for {G}aussian {MIMO} broadcast channels: Impact of replica symmetry
  breaking,'' arXiv:1001.3790 [cs.IT]., Jan. 2010.

\bibitem{BoutrosC:02}
J.~Boutros and G.~Caire, ``Iterative multiuser joint decoding: Unified
  framework and asymptotic analysis,'' \emph{IEEE Trans. Inform. Theory},
  vol.~48, no.~7, pp. 1772--1793, Jul. 2002.

\bibitem{TanakaO:05}
T.~Tanaka and M.~Okada, ``Approximate belief propagation, density evolution,
  and neurodynamics for {CDMA} multiuser detection,'' \emph{IEEE Trans. Inform.
  Theory}, vol.~51, no.~2, pp. 700--706, Feb. 2005.

\bibitem{MontanariT:06}
A.~Montanari and D.~Tse, ``Analysis of belief propagation for non-linear
  problems: The example of {CDMA} (or: How to prove {T}anaka's formula),''
  arXiv:cs/0602028v1 [cs.IT]., Feb. 2006.

\bibitem{GuoW:06}
D.~Guo and C.-C. Wang, ``Asymptotic mean-square optimality of belief
  propagation for sparse linear systems,'' in \emph{Proc. IEEE Inform. Theory
  Workshop}, Chengdu, China, Oct. 2006, pp. 194--198.

\bibitem{GuoW:07}
------, ``Random sparse linear systems observed via arbitrary channels: A
  decoupling principle,'' in \emph{Proc. IEEE Int. Symp. Inform. Theory}, Nice,
  France, Jun. 2007, pp. 946--950.

\bibitem{Rangan:10-CISS}
S.~Rangan, ``Estimation with random linear mixing, belief propagation and
  compressed sensing,'' in \emph{Proc. Conf. on Inform. Sci. \& Sys.},
  Princeton, NJ, Mar. 2010, pp. 1--6.

\bibitem{Rangan:10arXiv}
------, ``Estimation with random linear mixing, belief propagation and
  compressed sensing,'' arXiv:1001.2228v1 [cs.IT]., Jan. 2010.

\bibitem{DonohoMM:09}
D.~L. Donoho, A.~Maleki, and A.~Montanari, ``Message-passing algorithms for
  compressed sensing,'' \emph{Proc. Nat. Acad. Sci.}, vol. 106, no.~45, pp.
  18\,914--18\,919, Nov. 2009.

\bibitem{BayatiM:11}
M.~Bayati and A.~Montanari, ``The dynamics of message passing on dense graphs,
  with applications to compressed sensing,'' \emph{IEEE Trans. Inform. Theory},
  vol.~57, no.~2, pp. 764--785, Feb. 2011.

\bibitem{Rangan:10carXiv}
S.~Rangan, ``Generalized approximate message passing for estimation with random
  linear mixing,'' arXiv:1010.5141 [cs.IT]., Oct. 2010.

\bibitem{Rangan:11-ISIT}
------, ``Generalized approximate message passing for estimation with random
  linear mixing,'' in \emph{Proc. IEEE Int. Symp. Inform. Theory}, Saint
  Petersburg, Russia, Jul.--Aug. 2011, pp. 2174--2178.

\bibitem{Montanari:08arXiv}
A.~Montanari, ``Estimating random variables from random sparse observations,''
  arXiv:0709.0145v1 [cs.IT]., Feb. 2008.

\bibitem{TenBrink:01}
S.~{ten Brink}, ``Convergence behavior of iteratively decoded parallel
  concatenated codes,'' \emph{IEEE Trans. Commun.}, vol.~49, no.~10, pp.
  1727--1737, Oct. 2001.

\bibitem{AshkiminKT:04}
A.~Ashikhmin, G.~Kramer, and S.~{ten Brink}, ``Extrinsic information transfer
  functions: Model and erasure channel properties,'' \emph{IEEE Trans. Inform.
  Theory}, vol.~50, no.~11, pp. 2657--2673, Nov. 2004.

\bibitem{Montanari:11}
A.~Montanari, ``Graphical models concepts in compressed sensing,''
  arXiv:1011.4328v3 [cs.IT], Mar. 2011.

\bibitem{CandesRT:06-IT}
E.~J. Cand{\`e}s, J.~Romberg, and T.~Tao, ``Robust uncertainty principles:
  Exact signal reconstruction from highly incomplete frequency information,''
  \emph{IEEE Trans. Inform. Theory}, vol.~52, no.~2, pp. 489--509, Feb. 2006.

\bibitem{Donoho:06}
D.~L. Donoho, ``Compressed sensing,'' \emph{IEEE Trans. Inform. Theory},
  vol.~52, no.~4, pp. 1289--1306, Apr. 2006.

\bibitem{CandesT:06}
E.~J. Cand{\`e}s and T.~Tao, ``Near-optimal signal recovery from random
  projections: Universal encoding strategies?'' \emph{IEEE Trans. Inform.
  Theory}, vol.~52, no.~12, pp. 5406--5425, Dec. 2006.

\bibitem{Natarajan:95}
B.~K. Natarajan, ``Sparse approximate solutions to linear systems,'' \emph{SIAM
  J. Computing}, vol.~24, no.~2, pp. 227--234, Apr. 1995.

\bibitem{ChenBL:89}
S.~Chen, S.~A. Billings, and W.~Luo, ``Orthogonal least squares methods and
  their application to non-linear system identification,'' \emph{Int. J.
  Control}, vol.~50, no.~5, pp. 1873--1896, Nov. 1989.

\bibitem{MallatZ:93}
S.~G. Mallat and Z.~Zhang, ``Matching pursuits with time-frequency
  dictionaries,'' \emph{IEEE Trans. Signal Process.}, vol.~41, no.~12, pp.
  3397--3415, Dec. 1993.

\bibitem{PatiRK:93}
Y.~C. Pati, R.~Rezaiifar, and P.~S. Krishnaprasad, ``Orthogonal matching
  pursuit: Recursive function approximation with applications to wavelet
  decomposition,'' in \emph{Conf.\ Rec.\ 27th Asilomar Conf.\ Sig., Sys., \&
  Comput.}, vol.~1, Pacific Grove, CA, Nov. 1993, pp. 40--44.

\bibitem{DavisMZ:94}
G.~Davis, S.~Mallat, and Z.~Zhang, ``Adaptive time-frequency decomposition,''
  \emph{Optical Eng.}, vol.~33, no.~7, pp. 2183--2191, Jul. 1994.

\bibitem{ChenDS:99}
S.~S. Chen, D.~L. Donoho, and M.~A. Saunders, ``Atomic decomposition by basis
  pursuit,'' \emph{SIAM J. Sci. Comp.}, vol.~20, no.~1, pp. 33--61, 1999.

\bibitem{Tibshirani:96}
R.~Tibshirani, ``Regression shrinkage and selection via the lasso,'' \emph{J.
  Royal Stat. Soc., Ser. B}, vol.~58, no.~1, pp. 267--288, 1996.

\bibitem{Tropp:04}
J.~A. Tropp, ``Greed is good: Algorithmic results for sparse approximation,''
  \emph{IEEE Trans. Inform. Theory}, vol.~50, no.~10, pp. 2231--2242, Oct.
  2004.

\bibitem{Tropp:06}
------, ``Just relax: Convex programming methods for identifying sparse signals
  in noise,'' \emph{IEEE Trans. Inform. Theory}, vol.~52, no.~3, pp.
  1030--1051, Mar. 2006.

\bibitem{DonohoET:06}
D.~L. Donoho, M.~Elad, and V.~N. Temlyakov, ``Stable recovery of sparse
  overcomplete representations in the presence of noise,'' \emph{IEEE Trans.
  Inform. Theory}, vol.~52, no.~1, pp. 6--18, Jan. 2006.

\bibitem{Wainwright:09-lasso}
M.~J. Wainwright, ``Sharp thresholds for high-dimensional and noisy sparsity
  recovery using $\ell_1$-constrained quadratic programming (lasso),''
  \emph{IEEE Trans. Inform. Theory}, vol.~55, no.~5, pp. 2183--2202, May 2009.

\bibitem{DonohoT:09}
D.~L. Donoho and J.~Tanner, ``Counting faces of randomly-projected polytopes
  when the projection radically lowers dimension,'' \emph{J. Amer. Math. Soc.},
  vol.~22, no.~1, pp. 1--53, Jan. 2009.

\bibitem{GuoBS:09-Allerton}
D.~Guo, D.~Baron, and S.~Shamai, ``A single-letter characterization of optimal
  noisy compressed sensing,'' in \emph{Proc. 47th Ann. Allerton Conf. on
  Commun., Control and Comp.}, Monticello, IL, Sep.--Oct. 2009.

\bibitem{MerhavGS:10}
N.~Merhav, D.~Guo, and S.~Shamai, ``Statistical physics of signal estimation in
  {G}aussian noise: Theory and examples of phase transitions,'' \emph{IEEE
  Trans. Inform. Theory}, vol.~56, no.~3, pp. 1400--1416, 2010.

\bibitem{KabashimaWT:09arXiv}
Y.~Kabashima, T.~Wadayama, and T.~Tanaka, ``Typical reconstruction limit of
  compressed sensing based on $l_p$-norm minimization,'' arXiv:0907.0914
  [cs.IT]., Jun. 2009.

\bibitem{Voiculescu:91}
D.~Voiculescu, ``Limit laws for random matrices and free products,''
  \emph{Inventiones Mathematicae}, vol. 104, pp. 201--220, 1991.

\bibitem{Muller:04}
R.~R. M{\"u}ller, ``Random matrices, free probability and the replica method,''
  in \emph{EUSIPCO}, 2004.

\bibitem{CaiShTuVer:11}
G.~Caire, S.~Shamai, A.~Tulino, and S.~Verd{\'u}, ``Support recovery in
  compressed sensing: Information-theoretic bounds,'' in \emph{Proc. UCSD
  Workshop Inform. Theory \& Its Applications}, La Jolla, CA, Jan. 2011.

\bibitem{Verdu:86}
S.~Verd{\'u}, ``Minimum probability of error for asynchronous {G}aussian
  multiple-access channel,'' \emph{IEEE Trans. Inform. Theory}, vol.~32, no.~1,
  pp. 85--96, Jan. 1986.

\bibitem{Verdu:98}
S.~Verdu, \emph{Multiuser Detection}.\hskip 1em plus 0.5em minus 0.4em\relax
  New York: Cambridge University Press, 1998.

\bibitem{Merhav:09arXiv}
N.~Merhav, ``Physics of the {S}hannon limits,'' arXiv:0903.1484 [cs.IT]., Mar.
  2009.

\bibitem{RauhutSV:08}
H.~Rauhut, K.~Schnass, and P.~Vandergheynst, ``Compressed sensing and redundant
  dictionaries,'' \emph{IEEE Trans. Inform. Theory}, vol.~54, no.~5, pp.
  2210--2219, May 2008.

\bibitem{FletcherRG:09-IT}
A.~K. Fletcher, S.~Rangan, and V.~K. Goyal, ``Necessary and sufficient
  conditions for sparsity pattern recovery,'' \emph{IEEE Trans. Inform.
  Theory}, vol.~55, no.~12, pp. 5758--5772, Dec. 2009.

\bibitem{MarcenkoP:67}
V.~A. Mar\v{c}enko and L.~A. Pastur, ``Distribution of eigenvalues for some
  sets of random matrices,'' \emph{Math. USSR--Sbornik}, vol.~1, no.~4, pp.
  457--483, 1967.

\bibitem{VerduS:99}
S.~Verd{\'u} and S.~Shamai, ``Spectral efficiency of {CDMA} with random
  spreading,'' \emph{IEEE Trans. Inform. Theory}, vol.~45, no.~3, pp. 622--640,
  Mar. 1999.

\bibitem{TseH:99}
D.~Tse and S.~Hanly, ``Linear multiuser receivers: Effective interference,
  effective bandwidth and capacity,'' \emph{IEEE Trans. Inform. Theory},
  vol.~45, no.~3, pp. 641--675, Mar. 1999.

\bibitem{VerduS:97}
S.~Verd{\'u} and S.~Shamai, ``Multiuser detection with random spreading and
  error-correction codes: Fundamental limits,'' in \emph{Proc. Allerton Conf.
  on Commun., Control and Comp.}, Monticello, IL, Sep. 1997.

\bibitem{RanganFG:09arXiv}
S.~Rangan, A.~K. Fletcher, and V.~K. Goyal, ``Asymptotic analysis of {MAP}
  estimation via the replica method and applications to compressed sensing,''
  arXiv:0906.3234v1 [cs.IT]., Jun. 2009.

\bibitem{WipfR:06}
D.~Wipf and B.~Rao, ``Comparing the effects of different weight distributions
  on finding sparse representations,'' in \emph{Proc. Neural Information
  Process. Syst.}, Vancouver, Canada, Dec. 2006.

\bibitem{FletcherRG:09arXiv}
A.~K. Fletcher, S.~Rangan, and V.~K. Goyal, ``On--off random access channels: A
  compressed sensing framework,'' arXiv:0903.1022v1 [cs.IT]., Mar. 2009.

\bibitem{Miller:02}
A.~Miller, \emph{Subset Selection in Regression}, 2nd~ed., ser. Monographs on
  Statistics and Applied Probability.\hskip 1em plus 0.5em minus 0.4em\relax
  New York: Chapman \& Hall/CRC, 2002, no.~95.

\bibitem{Wainwright:09-ml}
M.~J. Wainwright, ``Information-theoretic limits on sparsity recovery in the
  high-dimensional and noisy setting,'' \emph{IEEE Trans. Inform. Theory},
  vol.~55, no.~12, pp. 5728--5741, Dec. 2009.

\bibitem{AkcakayaT:10}
M.~Ak{\c c}akaya and V.~Tarokh, ``Shannon-theoretic limits on noisy compressive
  sampling,'' \emph{IEEE Trans. Inform. Theory}, vol.~56, no.~1, pp. 492--504,
  Jan. 2010.

\bibitem{Reeves:08}
G.~Reeves, ``Sparse signal sampling using noisy linear projections,'' Univ. of
  California, Berkeley, Dept.\ of Elec. Eng. and Comp. Sci., Tech. Rep.
  UCB/EECS-2008-3, Jan. 2008.

\bibitem{AeronSZ:10}
S.~Aeron, V.~Saligrama, and M.~Zhao, ``Information theoretic bounds for
  compressed sensing,'' \emph{IEEE Trans. Inform. Theory}, vol.~56, no.~10, pp.
  5111--5130, Oct. 2010.

\bibitem{SaligramaZ:11}
V.~Saligrama and M.~Zhao, ``Thresholded basis pursuit: {LP} algorithm for
  oder-wise optimal support recovery for sparse and approximately sparse
  signals from noisy measurements,'' \emph{IEEE Trans. Inform. Theory},
  vol.~57, no.~3, pp. 1567--1586, Mar. 2011.

\bibitem{HauptN:06}
J.~Haupt and R.~Nowak, ``Signal reconstruction from noisy random projections,''
  \emph{IEEE Trans. Inform. Theory}, vol.~52, no.~9, pp. 4036--4048, Sep. 2006.

\bibitem{CandesT:07}
E.~J. Cand{\`e}s and T.~Tao, ``The {D}antzig selector: Statistical estimation
  when $p$ is much larger than $n$,'' \emph{Ann. Stat.}, vol.~35, no.~6, pp.
  2313--2351, Dec. 2007.

\bibitem{SarvothamBB:06-Allerton}
S.~Sarvotham, D.~Baron, and R.~G. Baraniuk, ``Measurements vs.\ bits:
  Compressed sensing meets information theory,'' in \emph{Proc. 44th Ann.
  Allerton Conf. on Commun., Control and Comp.}, Monticello, IL, Sep. 2006.

\bibitem{DemboZ:98}
A.~Dembo and O.~Zeitouni, \emph{Large Deviations Techniques and
  Applications}.\hskip 1em plus 0.5em minus 0.4em\relax New York: Springer,
  1998.

\bibitem{NishimoriSher:01}
H.~Nishimori and D.~Sherrington, ``Absence of replica symmetry breaking in a
  region of the phase diagram of the {I}sing spin glass,'' in \emph{Amer.\
  Inst.\ Phys. Conf.: Disordered and Complex Sys.}, 2001.

\bibitem{AlmeidaThou:78}
J.~R.~L. de~Almeida and D.~J. Thouless, ``Stability of the
  {S}herrington--{K}irkpatrick solution of a spin glass model,'' \emph{J.
  Phys.\ A: Math.\ Gen.}, vol.~11, no.~5, pp. 983--990, 1978.

\end{thebibliography}
\end{document}